\documentclass[border=2px,11pt]{article}

\usepackage{indentfirst}
\usepackage[scale=0.8]{geometry}
\usepackage{authblk}

\usepackage{amsmath,amssymb,amsthm,amsfonts,mathdots}
\usepackage{latexsym,bm}
\usepackage{color}
\usepackage{arydshln}

\usepackage{graphicx}
\usepackage{float}

\usepackage{tikz}
\usetikzlibrary{decorations.pathreplacing}

\usepackage[ruled]{algorithm2e}

\usepackage{hyperref}

\newtheorem{theorem}{Theorem}[section]
\newtheorem{lemma}{Lemma}[section]

\newtheorem{definition}{Definition}[section]
\newtheorem{example}{\bf Example}[section]
\theoremstyle{remark}
\newtheorem{remark}{\bf Remark}[section]
\numberwithin{equation}{section}

\def \d {{\rm{d}}}
\def \e {{\rm{e}}}

\newcommand{\ket}[1]{| #1 \rangle} 
\newcommand{\bra}[1]{\langle #1 |} 

\DeclareMathOperator{\polylog}{polylog}

\author[1,2,3]{Shi Jin\thanks{shijin-m@sjtu.edu.cn}}
\author[1]{Shuyi Zhang\thanks{shuyi-zhang@sjtu.edu.cn}}
\affil[1]{\small School of Mathematical Sciences, Shanghai Jiao Tong University, Shanghai, 200240, China}
\affil[2]{\small Institute of Natural Sciences, Shanghai Jiao Tong University, Shanghai 200240, China}
\affil[3]{\small Ministry of Education Key Laboratory in Scientific and Engineering Computing, Shanghai Jiao Tong University, Shanghai 200240, China}

\date{}

\begin{document}

\title{Quantum simulation of the Liouville equation in classical mechanics with discontinuous potential via Schr\"odingerization}

\maketitle

\begin{abstract}
    We develop quantum simulation algorithms for the Liouville equation of classical mechanics with discontinuous potential. Such discontinuities represent potential barriers at which classical particles undergo energy preserving transmission or reflection, and the resulting interface conditions must be incorporated into the numerical flux. We combine Hamiltonian-preserving schemes in \cite{jinHamiltonianPreservingSchemesLiouville2005} with the Schr\"odingerization method, which embeds the resulting nonunitary semi-discrete dynamics into a unitary Schr\"odinger type system in one additional auxiliary variable \cite{jinQuantumSimulationPartial2023,jinQuantumSimulationPartial2024a}.
    For one-, two-, and \(n\)-dimensional problems with grid aligned interfaces, we construct sparse matrix representations of the transmission and reflection fluxes using step and hat functions, derive the corresponding Hamiltonians of the Schr\"odingerized systems, and analyze their sparse-access query complexity. In the sparse-access oracle model, the resulting algorithms have a {\it polynomial} dependence on the inverse accuracy and avoid the {\it exponential} dependence on the phase-space dimension suffered by classical grid based Hamiltonian-preserving schemes, up to the cost of implementing the oracles and the postselection overhead. We also describe the postselected recovery of the physical solution state and the quantum readout of macroscopic observables such as density and averaged velocity through overlap estimation. Numerical experiments based on classical simulation of the Schr\"odingerized dynamics validate the proposed formulation and illustrate the correct transmission/reflection behavior at potential barriers.
\end{abstract}

\textbf{Keywords}: Quantum simulation, Schr\"odingerization method, Liouville equation, classical mechanics, discontinuous potential, Hamiltonian-preserving schemes.


\section{Introduction}

The Liouville equation is a fundamental  equation in classical mechanics describing the evolution of phase-space densities under Hamiltonian dynamics.
This equation expresses the conservation of probability along characteristics generated by the Hamiltonian flow and reflects the incompressibility of phase-space transport. The Liouville equation has widespread applications in various scientific and engineering disciplines. In optics, it is employed to model light propagation through complex refractive media, lens systems, and waveguides \cite{kurtbernardowolfGeometricOpticsPhase2004}. In plasma physics, it describes electromagnetic wave transport in inhomogeneous plasmas \cite{ryzhikTransportEquationsElastic1996,engquistComputationalHighFrequency2003}. In geophysics and atmospheric science, analogous equations are used for the propagation of seismic waves, acoustic waves, and radiative transfer \cite{ryzhikTransportEquationsElastic1996,engquistComputationalHighFrequency2003}. Moreover, the Liouville equation serves as a bridge between classical and quantum descriptions, closely related to the Wigner transform and semiclassical limits of the Schr\"odinger equation \cite{JMS-Acta}.

The $d$-dimensional Liouville equation in classical mechanics can be expressed as a first-order linear hyperbolic partial differential equation \cite{jinAsymptoticpreservingSchemesMultiscale2022}, formulated as 
\begin{equation}\label{equ:ddimLiouville}
    f_{t}+\boldsymbol{v}\cdot\nabla_{\boldsymbol{x}}f-\nabla_{\boldsymbol{x}}V\cdot\nabla_{\boldsymbol{v}}f=0,\quad t>0,\quad\boldsymbol{x},\boldsymbol{v}\in R^{d},
\end{equation} where $f(t, \boldsymbol{x}, \boldsymbol{v})$ is the density distribution of a classical particle at position $\boldsymbol{x}$, time $t$ and travelling with velocity $\boldsymbol{v}$. $V(\boldsymbol{x})$ is the potential. The Liouville equation is the Eulerian formulation of Newton's second law: 
\begin{equation}\label{equ:ddimNewtonLaw}
    {\frac{d\boldsymbol{x}}{dt}}=\boldsymbol{v},\quad{\frac{d\boldsymbol{v}}{dt}}=-\nabla_{\boldsymbol{x}}V,
\end{equation}
which is a Hamiltonian system with the Hamiltonian
\begin{equation}\label{equ:LiouvilleHamilClassical}
    H_{\rm cla}=\frac{1}{2}|\boldsymbol{v}|^2+V(\boldsymbol{x}).
\end{equation}
The Newton's law is the Lagrangian formulation of classical particles, and provides the bi-characteristics of the Liouville equation along which $f$ is a constant. 

 If $V(\boldsymbol{x})$ is smooth, then the initial value problem to \eqref{equ:ddimNewtonLaw} is well-posed, and a standard numerical method (for example, the upwind scheme and its higher order extensions) for linear wave equations give satisfactory results.
However, if $V(\boldsymbol{x})$ is discontinuous corresponding to a potential barrier then the characteristic speed of the Liouville equation given by \eqref{equ:ddimNewtonLaw} is infinity at the discontinuous point.

Potential barriers appear in many important physical problems, such as  quantum tunnelling (and quantum dots) in semiconductor device modelling, plasmas, and geometrical optics through different materials. Liouville or Vlasov equations describe the density distribution of particles in such a heterogeneous medium. For some recent mathematical studies of discontinuous potential in high frequency waves see \cite{guillaumebalTransportTheoryAcoustic1999,millerRefractionHighfrequencyWaves2000,ryzhikTransportEquationsElastic1996}

Numerical solutions to the Liouville  equation with discontinuous potential face significant challenges: First, the high dimensionality of phase space (six dimensions for 3D problems) makes grid based numerical methods suffer from the curse of dimensionality, where computational cost grows exponentially with dimension.    Lagrangian or particle methods can be used to avoid the curse-of-dimensionality problem.   Examples include wavefront methods \cite{engquistComputationalHighFrequency2003} and symplectic integrators \cite{sanz-sernaRungekuttaSchemesHamiltonian1988,kangHamiltonianWayComputing1991,leimkuhlerSimulatingHamiltonianDynamics2005}. These approaches represent solutions as collections of moving particles, making them inherently suitable for high-dimensional problems. However, in regions where there are few particles, one has to add more particles and do interpolations to approximate the information for these new particles,  which is difficult to implement, not to mention the low-order accuracy of particle methods.  Second, through the potential barrier, one has to develop numerical strategy to suitably account for particle transmissions and reflection, which is not an issue for problems with smooth potential. Finally, relevant to this work, if one wants to solve it on a quantum computer, the nonlinearity of the Hamiltonian system poses major challenges, since quantum computer, designed using quantum mechanics element, can only efficiently handle linear problems. 

To deal with the potential barriers, 
the Hamiltonian-preserving schemes were introduced in \cite{jinHamiltonianPreservingSchemesLiouville2005,jinHamiltonianPreservingSchemeLiouville2006,jinHamiltonianpreservingSchemesLiouville2006}. By incorporating the particle transmission and reflection into the numerical flux, which is equivalent to Hamiltonian preservation across the barrier,  they provide a convenient way to handle the potential barrier in a physically consistent way, with  a typical hyperbolic stability condition $\Delta t= O(\Delta x,\Delta v)$. However, the curse-of-dimensionality remains a bottleneck for classical simulation. Quantum computations, due to their potential polynomial to even exponential advantages over their classical counterparts, offer a promising computational paradigm for handling the high-dimensional problems. 

In recent years, a novel universal method known as Schr\"odingerization has been proposed \cite{jinQuantumSimulationPartial2024a,jinQuantumSimulationPartial2023}. It first employs a warped phase transformation, which, by adding an extra dimension to the equation, converts any linear PDE with non-unitary dynamics into a Schr\"odinger-like system of equations with unitary dynamics, enabling subsequent Hamiltonian simulation. 
It has been applied to equations with physical boundary or interface conditions \cite{jinQuantumSimulationPartial2024}, quantum dynamics with artificial boundary conditions \cite{jinQuantumSimulationQuantum2024}, among other applications. 
By choosing smoother initial functions in auxiliary space, Schr\"odingerization can in fact achieve near optimal and even optimal scaling in matrix queries \cite{jinSchrodingerizationMethodLinear2025b}. Apart from implementation based on qubits, Schr\"odingerization method can also be extended to the qumode framework \cite{jinAnalogQuantumSimulation2024a,jinAnalogQuantumSimulation2024}, which is suitable for analog quantum computing.

In this paper, we combine Hamiltonian-preserving schemes in \cite{jinHamiltonianPreservingSchemesLiouville2005} with Schr\"odingerization to construct quantum simulation algorithms for Liouville equations with discontinuous potential. The main contributions are as follows: 
\begin{itemize}
    \item We derive a Schr\"odingerization compatible sparse matrix formulation for the Hamiltonian-preserving scheme of the Liouville equation in one, two, and \(n\) spatial dimensions with discontinuous potential by encoding the nonlinear transmission/reflection rule through step functions and compactly supported hat functions.
    \item We analyze the sparsity of the resulting semi-discrete matrices. In contrast to the geometrical optics case \cite{jinQuantumSimulationLiouville2026}, where the velocity mapping across an interface is essentially a rescaling determined by the wave speed, here the energy preserving map contains square root transformations and threshold indicators. We therefore give separate row and column sparsity estimates for the interface matrices, with detailed proofs in Appendix \ref{app:sparsity}.
    \item We derive query complexity estimates in one, two, and \(n\) spatial dimensions, and compare them with the corresponding classical grid based complexity. In the sparse-access oracle model, the resulting algorithms achieve {\it polynomial} scaling in the inverse accuracy and avoid the {\it exponential} dependence on the phase-space dimension suffered by classical grid based Hamiltonian-preserving schemes, up to the cost of implementing the oracles and the postselection overhead.
    \item We describe how the physical solution state is recovered by postselection and how macroscopic observables such as density, momentum, and averaged velocity can be estimated without reconstructing the full phase-space vector.
\end{itemize} 

A closely related problem was studied in \cite{jinQuantumSimulationLiouville2026}, where Schr\"odingerization was combined with Hamiltonian-preserving schemes for the Liouville equation in geometrical optics with {\it partial} transmission and reflection at material interfaces.  The present work addresses a different physical setting.  Here the Liouville equation comes from classical mechanics with Hamiltonian \(H_{\rm cla}=|\boldsymbol{v}|^2/2+V(\boldsymbol{x})\), and the interface rule is the specular energy transmission/reflection law generated by discontinuities of the potential \(V\).  This leads to a different flux construction and different sparse matrix representations.  Compared with   \cite{jinQuantumSimulationLiouville2026}, the distinction is twofold.
At the model level, the present interface law is not a partial splitting of rays with reflection and transmission coefficients. A classical particle encountering a potential jump is transmitted only when its kinetic energy exceeds the barrier, in which case the normal velocity is determined by the nonlinear square root map imposed by conservation of \(|\boldsymbol v|^2/2+V(\boldsymbol x)\); otherwise it is specularly reflected. This energy-threshold rule leads to discretization matrices with different nonzero patterns and column sparsity behavior from the velocity rescaling in geometrical optics. At the algorithmic and analytical level, the present paper also provides detailed sparsity estimates for the energy preserving velocity maps, extends the construction to grid aligned \(n\)-dimensional interfaces, and describes a postselected recovery and overlap based readout procedure for macroscopic observables.

The rest of this paper is organized as follows. Section \ref{sec:introLiouville} reviews the behavior of classical particles at a potential barrier and the corresponding interface condition for the density function. Section \ref{sec:Schr} recalls the Schr\"odingerization framework and the sparse-access Hamiltonian simulation model. Sections \ref{sec:interface1d} and \ref{sec:interface2d} construct the Schr\"odingerization based Hamiltonian-preserving algorithms \cite{jinHamiltonianPreservingSchemesLiouville2005} in one and two spatial dimensions, respectively. Section \ref{sec:interfacend} extends the construction to $n$ spatial dimensions for grid-aligned interfaces. Section \ref{sec:recovery} discusses the postselected recovery of the physical solution state and the quantum readout of macroscopic observables. Numerical examples are presented in Section \ref{sec:Numerical}. Section \ref{sec:conclusion} concludes the paper.

\section{Behavior of a classical particle at a potential barrier}\label{sec:introLiouville}

In classical mechanics, a particle will either cross a potential barrier with a changing momentum, or be reflected, depending on its momentum and on the strength of the potential barrier. The Hamiltonian \eqref{equ:LiouvilleHamilClassical} should be preserved across the potential barrier:
\begin{equation}\label{equ:ddimHamilPreserve}
    \frac{1}{2}(\boldsymbol{v}^+)^2+V^+=\frac{1}{2}(\boldsymbol{v}^-)^2+V^-,
\end{equation}
where the superscripts $\pm$ indicate the right and left limits of the quantity at the potential barrier.

For example, we consider the 1D case, then at a potential discontinuity \eqref{equ:ddimHamilPreserve} becomes 
\begin{equation}\label{equ:1dimHamilPreserve}
    \frac{1}{2}(\xi^+)^2+V^+=\frac{1}{2}(\xi^-)^2+V^-.
\end{equation}
Suppose the characteristic on the left of the potential discontinuity is given as a constant velocity $\xi^->0$. There are three possibilities (see Figure \ref{fig:potentialbarrier}) :
\begin{enumerate}
    \item $V^->V^+$. In this case, the potential decreases, so the particle will cross the potential barrier and gain momentum in order to maintain a constant Hamiltonian. \eqref{equ:1dimHamilPreserve} implies 
    \begin{equation}
        \xi^+=\sqrt{(\xi^-)^2+2(V^--V^+)}.
    \end{equation}
    \item $V^-<V^+$ and $\frac12(\xi^-)^2>V^+-V^-.$ If the kinetic energy of the particle is bigger than the potential jump then the particle will cross the barrier with a reduced momentum. \eqref{equ:1dimHamilPreserve} implies
    \begin{equation}
        \xi^+=\sqrt{(\xi^-)^2-2(V^+-V^-)}.
    \end{equation}
    \item $V^-<V^+$ and $\frac12(\xi^-)^2<V^+-V^-.$ In this case, the kinetic energy is not large enough for the particle to cross the potential barrier, so the particle will be reflected with a negative velocity $-\xi^-$.
\end{enumerate}

\begin{figure}[!htb]
    \centering

    \tikzset{every picture/.style={line width=0.75pt}} 

    \begin{tikzpicture}[x=0.75pt,y=0.75pt,yscale=-1,xscale=1]

    \draw    (190,100.35) -- (470,100.85) ;
    \draw    (50,350.35) -- (190,350.35) ;
    \draw    (470,350.85) -- (610,350.85) ;
    \draw    (190,100.35) -- (190,350.35) ;
    \draw    (470,100.85) -- (470,350.85) ;
    \draw    (90,300.35) -- (170,300.35) ;
    \draw    (170,200.35) -- (170,300.35) ;
    \draw    (170,200.35) -- (93,200.35) ;
    \draw [shift={(90,200.35)}, rotate = 360] [fill={rgb, 255:red, 0; green, 0; blue, 0 }  ][line width=0.08]  [draw opacity=0] (8.93,-4.29) -- (0,0) -- (8.93,4.29) -- cycle    ;
    \draw    (90,150.35) -- (170,150.35) ;
    \draw    (170,50.35) -- (170,150.35) ;
    \draw    (170,50.35) -- (247,50.35) ;
    \draw [shift={(250,50.35)}, rotate = 180] [fill={rgb, 255:red, 0; green, 0; blue, 0 }  ][line width=0.08]  [draw opacity=0] (8.93,-4.29) -- (0,0) -- (8.93,4.29) -- cycle    ;
    \draw    (410,50.35) -- (490,50.35) ;
    \draw    (490,50.35) -- (490,150.35) ;
    \draw    (490,150.35) -- (567,150.35) ;
    \draw [shift={(570,150.35)}, rotate = 180] [fill={rgb, 255:red, 0; green, 0; blue, 0 }  ][line width=0.08]  [draw opacity=0] (8.93,-4.29) -- (0,0) -- (8.93,4.29) -- cycle    ;

    \draw (165,353.4) node [anchor=north west][inner sep=0.75pt]  [font=\normalsize]  {$V^{-}$};
    \draw (445,353.4) node [anchor=north west][inner sep=0.75pt]  [font=\normalsize]  {$V^{-}$};
    \draw (197,353.4) node [anchor=north west][inner sep=0.75pt]  [font=\normalsize]  {$V^{+}$};
    \draw (477,353.4) node [anchor=north west][inner sep=0.75pt]  [font=\normalsize]  {$V^{+}$};
    \draw (526,77) node [anchor=north west][inner sep=0.75pt]   [align=left] {1.};
    \draw (122,76) node [anchor=north west][inner sep=0.75pt]   [align=left] {2.};
    \draw (77,239) node [anchor=north west][inner sep=0.75pt]   [align=left] {3.};
    \draw (122,151.4) node [anchor=north west][inner sep=0.75pt]  [font=\normalsize]  {$\xi $};
    \draw (441,53.4) node [anchor=north west][inner sep=0.75pt]  [font=\normalsize]  {$\xi $};
    \draw (122,301.4) node [anchor=north west][inner sep=0.75pt]  [font=\normalsize]  {$\xi $};
    \draw (116,201.4) node [anchor=north west][inner sep=0.75pt]  [font=\normalsize]  {$-\xi $};
    \draw (180,56.4) node [anchor=north west][inner sep=0.75pt]  [font=\normalsize]  {$\sqrt{\xi ^{2} -2\left( V^{+} -V^{-}\right)}$};
    \draw (483,156.4) node [anchor=north west][inner sep=0.75pt]  [font=\normalsize]  {$\sqrt{\xi ^{2} +2\left( V^{-} -V^{+}\right)}$};

    \end{tikzpicture}

    \caption{Change of particle momentum across a potential barrier for the case when $\xi^- > 0$.}
    \label{fig:potentialbarrier}
\end{figure}

If $\xi^- < 0$, similar behavior can also be analyzed using the constant Hamiltonian condition \eqref{equ:1dimHamilPreserve}.

For the 2D case, we suppose the discontinuity of $V$ aligns with the grids, then the particle behavior across the potential barrier is exactly the same as the 1D case, and the jump condition is applied in the normal direction to the barrier interface. The higher-dimensional case is similar to the two-dimensional case.

Note that the density distribution $f$ remains unchanged across the potential barrier, thus 
\begin{equation}\label{equ:ddimLiouvilleClassicalinterfacecondition}
    f(t,\boldsymbol{x}^+,\boldsymbol{v}^+)=f(t,\boldsymbol{x}^-,\boldsymbol{v}^-)
\end{equation}
at a discontinuous point $x$ of $V(x)$, where $\boldsymbol{v}^+$ and $\boldsymbol{v}^-$ are related by the constant Hamiltonian condition \eqref{equ:ddimHamilPreserve}. 

Hamiltonian preserving schemes use this mechanism for the numerical approximation to the Liouville equation with a discontinuous potential. This approximation, by its design, maintains a constant Hamiltonian up to approximation error across the potential barrier. It builds the particle transmission and reflection into the numerical flux, so it captures the correct physical behavior at the barrier very naturally. 

This interface condition differs from the one in geometrical optics
\cite{jinQuantumSimulationLiouville2026}, where an incident ray may split into transmitted and
reflected branches with reflection and transmission coefficients.  In the
classical mechanics setting considered here, the normal velocity is instead
selected by conservation of \( |\boldsymbol{v}|^2/2+V(\boldsymbol{x}) \), leading to either transmission
or reflection according to an energy threshold.

\section{Overview of the Schr\"odingerization method} \label{sec:Schr}
\subsection{The Schr\"odingerization method}

Consider the linear dynamical system
\begin{equation}\label{equ:ODElinear}
 \begin{cases}
 \displaystyle \frac{\mathrm{d}\boldsymbol{u}(t)}{\mathrm{d}t}
 = A\boldsymbol{u}(t)+\boldsymbol{b}(t),\\[2mm]
 \boldsymbol{u}(0)=\boldsymbol{u}_0,
 \end{cases}
\end{equation}
where $\boldsymbol{u}(t),\boldsymbol{b}(t)\in\mathbb{C}^n$ and
$A\in\mathbb{C}^{n\times n}$. In general, the matrix $A$ is not
Hermitian, namely $A^\dagger\neq A$, where $A^\dagger$ denotes the
conjugate transpose of $A$. For clarity of presentation, we first focus
on the homogeneous case $\boldsymbol{b}(t)=\boldsymbol{0}$. The
inhomogeneous term can be incorporated into an enlarged homogeneous
system by introducing auxiliary variables, as discussed in
\cite{jinQuantumSimulationMaxwells2024,jinSchrodingerizationBasedQuantumAlgorithms2025,huQuantumAlgorithmsMultiscale2024}.

The key observation is that any matrix $A$ admits a decomposition into
its Hermitian and anti-Hermitian components:
\begin{equation}\label{equ:Adecomposition}
    A = H_1+\mathrm{i}H_2,
    \qquad \mathrm{i}=\sqrt{-1},
\end{equation}
where
\begin{equation}\label{equ:H1H2}
    H_1=\frac{A+A^\dagger}{2},
    \qquad
    H_2=\frac{A-A^\dagger}{2\mathrm{i}}.
\end{equation}
Both $H_1$ and $H_2$ are Hermitian matrices, i.e.,
$H_1^\dagger=H_1$ and $H_2^\dagger=H_2$. 

For many semi-discrete systems arising from stable PDE discretizations,
it is natural to assume that the stability property is inherited at the
ODE level. In this setting, the Hermitian part $H_1$ is negative
semi-definite. Since $H_1$ is Hermitian, there exists a unitary matrix
$Q$ such that
\begin{equation}\label{equ:H1diagonal}
    Q^\dagger H_1 Q
    =
    \operatorname{diag}(\lambda_1,\ldots,\lambda_n)
    =:\Lambda,
    \qquad \lambda_j\leq 0,\quad j=1,\ldots,n .
\end{equation}
If $H_1$ contains positive eigenvalues, one may apply a suitable spectral
shift so that the shifted Hermitian part becomes negative semi-definite
\cite{guQuantumSimulationClass2025a}. 

The key step in the Schr\"odingerization procedure \cite{jinQuantumSimulationPartial2023,jinQuantumSimulationPartial2024a} is the warped phase transformation
\begin{equation}\label{equ:warped_phase_transform}
    \boldsymbol{v}(t,p)=g(p)\boldsymbol{u}(t),
    \qquad
    g(p)=\mathrm{e}^{-p},\quad p\geq 0 .
\end{equation}
Since $\partial_p \boldsymbol{v}(t,p)=-\boldsymbol{v}(t,p)$ for
$p>0$, the dissipative term can be rewritten as
\[
    H_1\boldsymbol{v}(t,p)
    =
    -H_1\partial_p\boldsymbol{v}(t,p).
\]
Therefore, the original system is transformed into the following
linear convection-type system in the enlarged $(t,p)$ space:
\begin{equation}\label{equ:u2v}
\begin{cases}
\displaystyle
    \partial_t \boldsymbol{v}(t,p)
    =
    -H_1\partial_p\boldsymbol{v}(t,p)
    +\mathrm{i}H_2\boldsymbol{v}(t,p),\\[1mm]
    \boldsymbol{v}(0,p)
    =
    \mathrm{e}^{-|p|}\boldsymbol{u}_0 .
\end{cases}
\end{equation}
For numerical purposes, one often uses a modified extension on the
negative half-line:
\begin{equation}\label{equ:u2valpha}
\begin{cases}
\displaystyle
    \partial_t \boldsymbol{v}(t,p)
    =
    -H_1\partial_p\boldsymbol{v}(t,p)
    +\mathrm{i}H_2\boldsymbol{v}(t,p),\\[1mm]
    \boldsymbol{v}(0,p)
    =
    \mathrm{e}^{-\alpha(p)|p|}\boldsymbol{u}_0 .
\end{cases}
\end{equation}
To preserve the exact relation
$\boldsymbol{v}(t,p)=\mathrm{e}^{-p}\boldsymbol{u}(t)$ in the recovery
region, one must take $\alpha(p)=1$ for $p>0$. On the negative side,
$\alpha(p)$ may be chosen sufficiently large so that the extended initial
data decays rapidly. 
The above extension gives a
first-order accurate approximation with respect to the mesh size
$\Delta p$, while higher-order extensions are available in
\cite{jinSchrodingerizationBasedQuantumAlgorithms2025,
jinSchrodingerizationBasedComputationally2025}.

If $H_1$ has both positive and negative eigenvalues, define
\begin{align}
    \lambda_{\max}^{+}(H_1)
    &=
    \max\left\{
    \sup_{0<t<T}
    \{|\lambda|:\lambda\in\sigma(H_1(t)),\lambda>0\},
    0
    \right\},\\
    \lambda_{\max}^{-}(H_1)
    &=
    \max\left\{
    \sup_{0<t<T}
    \{|\lambda|:\lambda\in\sigma(H_1(t)),\lambda<0\},
    0
    \right\}.
\end{align}
For a prescribed accuracy $\epsilon$, one may choose $L$ and $R$ such
that
\begin{equation}
    \mathrm{e}^{-R+\lambda_{\max}^{+}(H_1)T}\lesssim\epsilon,
    \qquad
    \mathrm{e}^{L+\lambda_{\max}^{-}(H_1)T}\lesssim\epsilon .
\end{equation}
These conditions ensure that the truncation error caused by the finite
auxiliary domain remains controlled
\cite{jinSchrodingerizationBasedQuantumAlgorithms2025}.

The original solution can be recovered from the transformed variable
through the identity
\[
    \boldsymbol{u}(t)=\mathrm{e}^{p}\boldsymbol{v}(t,p),
    \qquad p>0 .
\]
Thus, a simple recovery strategy is to choose a point
$p^\diamond>0$ and set
\begin{equation}\label{equ:restore1point}
    \boldsymbol{u}(t)
    =
    \mathrm{e}^{p^\diamond}\boldsymbol{v}(t,p^\diamond).
\end{equation}
When right-moving waves are present, the recovery point should be chosen
outside the affected transport region. In particular, one may take
$p^\diamond\geq \lambda_{\max}^{+}(H_1)T$, as discussed in
\cite{jinSchrodingerizationBasedQuantumAlgorithms2025}.

\begin{remark}\label{rem:recovery-point}
The condition $p^\diamond\ge \lambda^+_{\max}(H_1)T$ is a sufficient
condition used in the theoretical recovery estimate. It is generally
conservative, since $\lambda^+_{\max}(H_1)$ is a worst-case propagation speed
over all spectral components. In numerical implementations, taking
$p^\diamond$ too large may lead to an ill-conditioned recovery, because
$v(t,p)$ decays exponentially in the recovery region and the multiplication by
$e^{p^\diamond}$ can amplify discretization, truncation, and round-off errors.
Therefore, in the numerical experiments below, the recovery point is chosen in
a stable recovery region where the reconstructed solution is insensitive to
small changes of $p^\diamond$.
\end{remark}

It remains to discretize the auxiliary variable. Let $[L,R]$ be divided
uniformly with mesh size $\Delta p=\frac{R-L}{N_p}, N_p=2N,$
and grid points $L=p_0<p_1<\cdots<p_{N_p}=R.$
Collecting the values of $\boldsymbol{v}(t,p)$ at these grid points gives
a vector
\[
    \boldsymbol{w}(t)
    =
    \big[
    \boldsymbol{w}_1(t);
    \boldsymbol{w}_2(t);
    \cdots;
    \boldsymbol{w}_n(t)
    \big],
\]
where
\[
    \boldsymbol{w}_i(t)
    =
    \sum_k v_i(t,p_k)\ket{k},
\]
and $v_i$ denotes the $i$-th component of $\boldsymbol{v}$.

Applying a Fourier spectral approximation in the $p$ direction yields
\begin{equation}\label{equ:v2w}
    \frac{\mathrm{d}}{\mathrm{d}t}\boldsymbol{w}(t)
    =
    -\mathrm{i}(H_1\otimes P_\mu)\boldsymbol{w}(t)
    +
    \mathrm{i}(H_2\otimes I)\boldsymbol{w}(t),
\end{equation}
where $P_\mu$ is the discrete momentum operator associated with
$-\mathrm{i}\partial_p$. More precisely,
\[
    P_\mu=\Phi D_\mu \Phi^{-1},
    \qquad
    D_\mu=\operatorname{diag}(\mu_{-N},\ldots,\mu_{N-1}),
\]
with Fourier modes $\mu_l=\frac{2\pi l}{R-L},$
and discrete Fourier matrix
\[
    \Phi=(\phi_{jl})_{N_p\times N_p},
    \qquad
    \phi_{jl}=\phi_l(p_j),
    \qquad
    \phi_l(p)=\mathrm{e}^{\mathrm{i}\mu_l(p-L)}/N_p^{1/2} .
\]

Finally, by introducing $\tilde{\boldsymbol{w}}=(I\otimes\Phi^{-1})\boldsymbol{w},$
one obtains
\begin{equation}\label{equ:generalSchr}
    \frac{\mathrm{d}}{\mathrm{d}t}\tilde{\boldsymbol{w}}(t)
    =
    -\mathrm{i}
    \left(
    H_1\otimes D_\mu
    -
    H_2\otimes I
    \right)
    \tilde{\boldsymbol{w}}(t).
\end{equation}
This is a Schr\"odinger-type equation with Hamiltonian
$H=H_1\otimes D_\mu-H_2\otimes I$.
Thus, the original non-unitary dynamics has been embedded into a
unitary evolution in an enlarged Hilbert space. Moreover, if $H_1$ and
$H_2$ are sparse, then the resulting Hamiltonian inherits the sparsity
structure of the original system. In particular,
\[
    s(H)=\mathcal{O}(s(A)),
    \qquad
    \|H\|_{\max}
    \leq
    \frac{\|H_1\|_{\max}}{\Delta p}
    +
    \|H_2\|_{\max},
\]
where $s(H)$ denotes the maximum number of nonzero entries in each row
and column of $H$, and $\|\cdot\|_{\max}$ denotes the entrywise maximum
norm. Therefore, after Schr\"odingerization, the evolution can be treated
by standard Hamiltonian simulation techniques.

\subsection{Sparse-access input model and complexity analysis}

After Schr\"odingerization and the Fourier spectral discretization in the
auxiliary variable, we obtain the Hamiltonian system
\[
    \frac{d}{dt}\tilde{\boldsymbol{w}}(t)
    =-iH \tilde{\boldsymbol{w}}(t),
    \qquad
    H = H_1\otimes D_\mu-H_2\otimes I .
\]
We refer to the above matrix $H$ as the Hamiltonian of the
Schr\"odingerized system. It should be distinguished from the classical Hamiltonian of the Liouville equation $H_{\rm cla}$ in \eqref{equ:LiouvilleHamilClassical}.

In the complexity estimates below, we use the standard sparse-access oracle
model for sparse Hamiltonian simulation. This model is suitable for the
present problem because the Hamiltonian-preserving scheme leads to a
sparse semi-discrete matrix, and the Hamiltonian
$H$ inherits this sparsity. We use an oracle model of the
same form as that used in sparse Hamiltonian simulation by Low and
Chuang~\cite{lowOptimalHamiltonianSimulation2017}.

\begin{definition}[Sparse-access oracles]
Let $B\in \mathbb C^{N\times N}$ be an $s$-sparse matrix, namely, each row of
$B$ contains at most $s$ nonzero entries. For the Hermitian Hamiltonians
considered in sparse Hamiltonian simulation, row sparsity and column sparsity
coincide. We assume that $B$ is specified by
two oracles.

The value oracle $O_B$ returns the matrix entries:
\[
    O_B |j,k\rangle |0\rangle
    =
    |j,k\rangle |B_{jk}\rangle_m,
    \qquad j,k\in [N],
\]
where $|B_{jk}\rangle_m$ denotes an $m$-qubit register storing the matrix
element $B_{jk}$ to $m$ bits of precision.

The location oracle $O_F$ returns the positions of the nonzero entries:
\[
    O_F |j,\ell\rangle |0\rangle
    =
    |j,\ell\rangle |f_B(j,\ell)\rangle,
    \qquad j\in [N],\quad \ell\in [s].
\]
Here $f_B(j,\ell)$ denotes the column index of the $\ell$-th nonzero entry in
the $j$-th row of $B$. If the $j$-th row has fewer than $\ell+1$ nonzero
entries, $f_B(j,\ell)$ is assigned a fixed dummy value and the corresponding
matrix entry is treated as zero.
\end{definition}

\begin{remark}
This is a compact notation for the standard sparse-access input model. The
oracles can equivalently be written in reversible form by keeping the input
registers unchanged and adding the output to an auxiliary register.
\end{remark}

The computational cost of simulating the Schr\"odingerized system can therefore be
estimated by applying Theorem~3 of
\cite{lowOptimalHamiltonianSimulation2017}, which gives the query complexity
for sparse Hamiltonian simulation.

\begin{lemma}\label{lem:complexity}
    A $d$-sparse Hamiltonian $H$ on $n$ qubits with matrix elements specified to $m$ bits of precision can be simulated for time interval $[0,t]$, error $\epsilon$, and success probability at least $1 - 2\epsilon$  with 
    \[\mathcal{O}[td\|H\|_{\max}+\log{(1/\epsilon)}/\log{\log{(1/\epsilon)}}]\]
    queries to the sparse-access oracle model that provides a description of $H$ and a factor 
    \[\mathcal{O}[n+m\mathrm{polylog}(m)]\]
    additional quantum gates.
\end{lemma}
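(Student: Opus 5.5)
This lemma is Theorem~3 of \cite{lowOptimalHamiltonianSimulation2017}, restated in the present notation, so the plan is to reproduce the architecture of that result rather than derive anything new. I would go through qubitization, that is, quantum signal processing applied to a block encoding built from the sparse-access oracles.

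\textbf{Step 1: block encoding.} From $O_F$ and $O_B$, I would construct the standard Szegedy/Childs-type state preparation pair
\[
\ket{\psi_j}=\frac{1}{\sqrt d}\sum_{\ell\in[d]}\ket{j}\ket{f_H(j,\ell)}\Big(\sqrt{H_{j f_H(j,\ell)}^*/\|H\|_{\max}}\ket{0}+\sqrt{1-|H_{jf_H(j,\ell)}|/\|H\|_{\max}}\ket{1}\Big),
\]
together with its swapped counterpart. The inner products of these states give $\langle \psi_j|S|\psi_k\rangle=H_{jk}/(d\|H\|_{\max})$. Each such state uses $\mathcal O(1)$ oracle queries, and its rotation angles are computed by arithmetic on the $m$-bit entry register. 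This arithmetic is the source of the $\mathcal O(n+m\,\mathrm{polylog}(m))$ gate overhead per query: $\mathcal O(n)$ gates come from register swaps and comparisons, and $m\,\mathrm{polylog}(m)$ gates come from computing square roots and arcsines to $m$ bits. The result is a unitary $U$ whose top-left block is $H/\alpha$ with $\alpha=d\|H\|_{\max}$.

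\textbf{Step 2: qubitization and signal processing.} Qubitization turns $U$ into an iterate $W$ whose eigenphases are $\pm\arccos(\lambda/\alpha)$ on two-dimensional invariant subspaces. Quantum signal processing then implements a polynomial approximation of $e^{-\mathrm{i}\lambda t}$, namely the Jacobi--Anger truncation, whose degree $q$ must satisfy
\[
q=\mathcal O\big(\alpha t+\log(1/\epsilon)/\log\log(1/\epsilon)\big).
\]
This degree bound comes from the decay of the Bessel functions $J_k(\alpha t)$ beyond $k\gtrsim e\alpha t/2+\log(1/\epsilon)$. Each degree costs $\mathcal O(1)$ uses of $W$, and hence $\mathcal O(1)$ oracle queries. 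The output is the target evolution postselected on an ancilla. The postselection success probability is $1-\mathcal O(\epsilon)$, which gives the stated failure probability $2\epsilon$.

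\textbf{Main obstacle.} The main obstacle is the additive rather than multiplicative form of the query bound. Proving it requires the sharp tail estimate for the truncated Jacobi--Anger series, together with existence of QSP phases realizing that polynomial. I would cite \cite{lowOptimalHamiltonianSimulation2017} for both rather than rederive them. In the paper's context, the lemma is then invoked with $d=s(H)=\mathcal O(s(A))$ and $\|H\|_{\max}\le \|H_1\|_{\max}/\Delta p+\|H_2\|_{\max}$ as established above.
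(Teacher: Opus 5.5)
Your proposal matches the paper, which gives no independent proof and simply imports this statement as Theorem~3 of \cite{lowOptimalHamiltonianSimulation2017}; your block-encoding plus quantum-signal-processing sketch faithfully outlines that cited result, and you correctly defer the load-bearing pieces (the sharp Jacobi--Anger tail and the existence of the QSP phases) to that reference. One small fix if you keep the sketch: with $\ket{\psi_j}$ as you wrote it, the $\ket{1}$ branches also overlap under the swap $S$, so $\langle\psi_j|S|\psi_k\rangle$ picks up an extra term $(1-|H_{jk}|/\|H\|_{\max})/d$; you need the Berry--Childs convention in which the first register carries its own flag qubit fixed to $\ket{0}$ (states in $\mathbb{C}^{2N}\otimes\mathbb{C}^{2N}$), so that only the $\ket{0}$ branches survive and the overlap is exactly $H_{jk}/(d\|H\|_{\max})$.
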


\section{Quantum simulation of the Liouville equation with interface condition \eqref{equ:ddimLiouvilleClassicalinterfacecondition} in one space dimension}\label{sec:interface1d}
We consider the numerical solution of the Liouville equation in one physical space dimension
\begin{equation}\label{equ:1dimLiouvilleClassical}
    f_{t}+\xi f_{x}-V_{x}f_{\xi}=0,
\end{equation}
with a discontinuous potential $V(x)$ and constant Hamiltonian condition 
\begin{equation}\label{equ:1dimLiouvilleClassicalPreserve}
    \frac{1}{2}(\xi^+)^2+V^+=\frac{1}{2}(\xi^-)^2+V^-.
\end{equation}
Then the interface condition \eqref{equ:ddimLiouvilleClassicalinterfacecondition} becomes 
\begin{equation}\label{equ:1dimLiouvilleClassicalinterfacecondition}
    f(t,x^+,\xi^+)=f(t,x^-,\xi^-).
\end{equation}

\subsection{The original algorithm}
We employ a uniform mesh with grid points at $x_{i+\frac{1}{2}}$, $i=0,\ldots,N_x$, in the $x$-direction and $\xi_{j+\frac{1}{2}}$, $j=0,\ldots,N_\xi$ in the $\xi$-direction. The cells are centered at $(x_i,\xi_j)$, $i=1,\ldots,N_x$, $j=1,\ldots,N_\xi$ with $x_i=\frac{1}{2}(x_{i+\frac{1}{2}}+x_{i-\frac{1}{2}})$ and $\xi_j=\frac{1}{2}(\xi_{j+\frac{1}{2}}+\xi_{j-\frac{1}{2}})$. The mesh size is denoted by $\Delta x= x_{i+\frac{1}{2}}-x_{i-\frac{1}{2}}$, $\Delta \xi=\xi_{j+\frac{1}{2}}-\xi_{j-\frac{1}{2}}$. We define the cell average of $f$ as 
\begin{equation*}
    f_{ij}=\frac{1}{\Delta x\Delta \xi}\int_{x_{i-\frac{1}{2}}}^{x_{i+\frac{1}{2}}}\int_{\xi_{j-\frac{1}{2}}}^{\xi_{j+\frac{1}{2}}}f(x,\xi,t)d\xi dx.
\end{equation*}

Assume that the discontinuous points of potential $V(x)$ are located at the grid points. Let the left and right limits of $V(x)$ at point $x_{i+1/2}$ be $V^{-}_{i+1/2}$ and $V^{+}_{i+1/2}$, respectively. Note that if $V$ is continuous at $x_{i+1/2}$, then $V^{+}_{i+1/2}=V^{-}_{i+1/2}$. We approximate $V$ by a piecewise linear function
\begin{equation*}
    V(x)\approx V_{i-1/2}^++\frac{V_{i+1/2}^--V_{i-1/2}^+}{\Delta x}(x-x_{i-1/2}).
\end{equation*}
 
The semi-discrete scheme (with continuous time) reads
\begin{equation}\label{equ:1dimLiouvilleClassicalSemidis}
    (f_{ij})_{t}+\frac{\xi_{j}}{\Delta x}\left(f_{i+\frac{1}{2},j}^{-}-f_{i-\frac{1}{2},j}^{+}\right)-\frac{V_{i+\frac{1}{2}}^{-}-V_{i-\frac{1}{2}}^{+}}{\Delta x\Delta\xi}\left(f_{i,j+\frac{1}{2}}-f_{i,j-\frac{1}{2}}\right)=0,
\end{equation}
where the numerical fluxes $f_{i,j+\frac{1}{2}}$ are defined using the upwind discretization.  Since the characteristics of the Liouville equation may be different on the two sides of a potential discontinuity, the corresponding numerical fluxes should also be different. The essential feature of the Hamiltonian-preserving schemes is to use \eqref{equ:1dimLiouvilleClassicalinterfacecondition} to define the split numerical fluxes $f^-_{i+\frac{1}{2},j},f^+_{i-\frac{1}{2},j}$ at each cell interface.

Assume $V$ is discontinuous at $x_{i+\frac{1}{2}}$. Consider the case $\xi_j > 0$. Using upwind scheme, $f_{i+\frac12,j}^- = f_{ij}$. However, $f_{i+1/2,j}^+=f(x_{i+1/2}^+,\xi_j^+)=f(x_{i+1/2}^-,\xi_j^-)$ while $\xi_j^-$ is obtained from $\xi_j^+ = \xi_j$ from \eqref{equ:1dimLiouvilleClassicalPreserve}. Since $\xi_j^-$ may not be a grid point, one has to define it approximately. One can first locate the two cell centers that bound $\xi_j^-$, then use a linear interpolation to evaluate the needed numerical flux. The case $\xi_j^- < 0$ is treated similarly. The detailed algorithm to generate the numerical flux is given in Algorithm \ref{alg:fluxClassical1}.

\begin{algorithm}[!htb]
    \footnotesize
  \label{alg:fluxClassical1}
  \caption{Computation of the numerical flux in \eqref{equ:1dimLiouvilleClassicalSemidis}}
  \SetAlgoLined
  \KwIn{$\xi_j$, $V^-_{i+\frac{1}{2}}$, $V^+_{i+\frac{1}{2}}$, $\{f_{ij},j=1,\ldots,N_\xi\}$ and $\{f_{i+1,j},j=1,\ldots,N_\xi\}$  }
  \KwOut{$f^-_{i+\frac{1}{2},j}$ and $f^+_{i+\frac{1}{2},j}$}
  \If{$\xi_j>0$}{
        $f^-_{i+\frac{1}{2},j}=f_{ij}$\;
        \uIf{$V_{i+\frac{1}{2}}^->V_{i+\frac{1}{2}}^+$}{
            \eIf{$\xi_j>\sqrt{2\left(V_{i+\frac{1}{2}}^--V_{i+\frac{1}{2}}^+\right)}$}
            {$\xi^-=\sqrt{\xi_j^2+2\left(V_{i+\frac{1}{2}}^+-V_{i+\frac{1}{2}}^-\right)}$\;
            \If{$\xi_k\le\xi^-<\xi_{k+1}$ for some $k$}
            {$f_{i+\frac{1}{2},j}^+=\frac{\xi_{k+1}-\xi^-}{\Delta\xi}f_{ik}+\frac{\xi^--\xi_k}{\Delta\xi}f_{i,k+1}$}}
            {$f_{i+\frac{1}{2},j}^+=f_{i+1,j'}$ where $\xi_{j'}=-\xi_j$}
    
        }
        \uElseIf{$V_{i+\frac{1}{2}}^-<V_{i+\frac{1}{2}}^+$}{
            $\xi^-=\sqrt{\xi_j^2+2\left(V_{i+\frac{1}{2}}^+-V_{i+\frac{1}{2}}^-\right)}$\;
            \If{$\xi_k\le\xi^-<\xi_{k+1}$ for some $k$}
            {$f_{i+\frac{1}{2},j}^+=\frac{\xi_{k+1}-\xi^-}{\Delta\xi}f_{ik}+\frac{\xi^--\xi_k}{\Delta\xi}f_{i,k+1}$}

        }
        \ElseIf{$V_{i+\frac{1}{2}}^-=V_{i+\frac{1}{2}}^+$}{
            $f_{i+\frac{1}{2},j}^+=f_{i+\frac{1}{2},j}^-$
        }
      }
  \If{$\xi_j<0$}{
        $f^+_{i+\frac{1}{2},j}=f_{i+1,j}$\;
        \uIf{$V_{i+\frac{1}{2}}^-<V_{i+\frac{1}{2}}^+$}{
            \eIf{$|\xi_j|>\sqrt{2\left(V_{i+\frac{1}{2}}^+-V_{i+\frac{1}{2}}^-\right)}$}
            {$\xi^+=-\sqrt{\xi_j^2+2\left(V_{i+\frac{1}{2}}^--V_{i+\frac{1}{2}}^+\right)}$\;
            \If{$\xi_k\le\xi^+<\xi_{k+1}$ for some $k$}
            {$f_{i+\frac{1}{2},j}^-=\frac{\xi_{k+1}-\xi^+}{\Delta\xi}f_{i+1,k}+\frac{\xi^+-\xi_k}{\Delta\xi}f_{i+1,k+1}$}}
            {$f_{i+\frac{1}{2},j}^-=f_{i+1,j'}$ where $\xi_{j'}=-\xi_j$}
    
        }
        \uElseIf{$V_{i+\frac{1}{2}}^->V_{i+\frac{1}{2}}^+$}{
            $\xi^+=-\sqrt{\xi_j^2+2\left(V_{i+\frac{1}{2}}^--V_{i+\frac{1}{2}}^+\right)}$\;
            \If{$\xi_k\le\xi^+<\xi_{k+1}$ for some $k$}
            {$f_{i+\frac{1}{2},j}^-=\frac{\xi_{k+1}-\xi^+}{\Delta\xi}f_{i+1,k}+\frac{\xi^+-\xi_k}{\Delta\xi}f_{i+1,k+1}$}

        }
        \ElseIf{$V_{i+\frac{1}{2}}^-=V_{i+\frac{1}{2}}^+$}{
            $f_{i+\frac{1}{2},j}^-=f_{i+\frac{1}{2},j}^+$
        }
      }
\end{algorithm}

\begin{remark}\label{rmk:convergencerate}
    Although  Algorithm \ref{alg:fluxClassical1} for evaluating numerical flux is of first order, the convergence rate of the numerical solution in $l^1$ norm may be less than first order. As it is well-known\cite{kuznetsov1977stable,tangSharpnessKuznetsovsOsqrtDelta1995}, when using a usual finite difference method for solving the discontinuous solution of a linear hyperbolic equation, the convergence rate is at most $1/2$ order. However, when the only discontinuity in the solutions is at the interface, the convergence rate of Hamiltonian-preserving scheme can still be first order \cite{jinHamiltonianPreservingSchemeLiouville2006}.
\end{remark}

\subsection{The ODE form}\label{sec:interface1ODEform}
In the following, we convert the semi-discrete scheme \eqref{equ:1dimLiouvilleClassicalSemidis} and Algorithm \ref{alg:fluxClassical1} into the ODE form \eqref{equ:ODElinear}. Assume that $N_x$ and $N_\xi$ are even numbers and the truncated interval in $x$ and $\xi$-direction are symmetric about $0$. (For asymmetrical case, the matrix $A$ in ODE \eqref{equ:ODElinear} will be slightly modified.) Then $\xi_j<0$ for $j=1,\ldots,\frac{N_\xi}{2}$ and $\xi_j>0$ for $j=\frac{N_\xi}{2}+1,\ldots,N_\xi$. We also assume that interfaces don't appear at $x_\frac{1}{2}$ and $x_{N_x+\frac12}$. Denote 
\[
\boldsymbol{f}(t)=[f_{11},\ldots,f_{1,N_\xi},f_{21},\ldots,f_{2,N_\xi},\ldots,f_{N_x,1},\ldots,f_{N_x,N_\xi}]^T=\sum_{i=1}^{N_x}\sum_{j=1}^{N_\xi}f_{ij}(t)\ket{i-1}\ket{j-1}.
\]
This definition specifies not only the order of the equations but also the order of the variables. We can also arrange $\boldsymbol{f}$ in different orders, then we only need to change the corresponding columns and rows of $A$ to get the new matrix.

If $\xi_j>0$, for the term $\frac{\xi_{j}}{\Delta x}\left(f_{i+\frac{1}{2},j}^{-}-f_{i-\frac{1}{2},j}^{+}\right)$ in \eqref{equ:1dimLiouvilleClassicalSemidis}, $f_{i+\frac{1}{2},j}^{-}=f_{ij}$ and $f^+_{i-\frac12,j}$ depends on the sign of $\xi_j^2+2\left(V_{i-\frac12}^+-V_{i-\frac12}^-\right)$. We define two step functions 
\begin{equation}\label{equ:stepfun}
    g^T(z) = \chi_{(0,+\infty)}(z),\quad g^R(z) = \chi_{(-\infty,0]}(z),
\end{equation}
where $\chi_{I}$ is the indicator function of set $I$
\begin{equation}
    \chi_{I}\left(z\right)=\left\{\begin{array}{ll}1,&\mathrm{if~}z\in I,\\0,&\mathrm{if~}z\not\in I.\end{array}\right.
\end{equation}
The above two step functions control the reflection and transmission at the interface.
Define
\begin{gather}
    a^T_{i+\frac{1}{2},j}=g^T\left(\xi_{j}^{2}+2\left(V_{i+\frac{1}{2}}^{+}-V_{i+\frac{1}{2}}^{-}\right)\right),\\
    a^R_{i+\frac{1}{2},j}=g^R\left(\xi_{j}^{2}+2\left(V_{i+\frac{1}{2}}^{+}-V_{i+\frac{1}{2}}^{-}\right)\right).
\end{gather}
Then $a^T_{i+\frac{1}{2},j}+a^R_{i+\frac{1}{2},j}=1$ and either $a^T_{i+\frac{1}{2},j}$ or $a^R_{i+\frac{1}{2},j}$ is $1$.

The search of which interval $\xi^-$ lies in involves a nonlinear operation. We need to use nonlinear functions to encode this nonlinearity into the matrix $A$. Define the hat function 
\begin{equation}\label{equ:hat}
    h(z)=\max\left(1-\frac{|z|}{\Delta\xi},0\right).
\end{equation}
The hat function can be regarded as a ``distance" function because it gives a non-negative value according to the distance from zero. It has a compact support of size $2\Delta\xi$ on $\mathbb{R}$.
Then 
\begin{eqnarray*}
f^+_{i-\frac12,j} &&=a^T_{i-\frac{1}{2},j}\sum_{k=1}^{N_\xi}h\left(\xi^--\xi_k\right)f_{i-1,k}+a^R_{i-\frac{1}{2},j}f_{i,j'}\\
&&=a^T_{i-\frac{1}{2},j}\sum_{k=1}^{N_\xi}h\left(\sqrt{\xi_{j}^{2}+2\left(V_{i-\frac{1}{2}}^{+}-V_{i-\frac{1}{2}}^{-}\right)}-\xi_k\right)f_{i-1,k}+a^R_{i-\frac{1}{2},j}f_{i,j'}
\end{eqnarray*}
and 
\begin{equation*}
    \frac{\xi_{j}}{\Delta x}\left(f_{i+\frac{1}{2},j}^{-}-f_{i-\frac{1}{2},j}^{+}\right)=\frac{\xi_{j}}{\Delta x}f_{ij}-\frac{\xi_{j}}{\Delta x}a^T_{i-\frac{1}{2},j}\sum_{k=1}^{N_\xi}h\left(\sqrt{\xi_{j}^{2}+2\left(V_{i-\frac{1}{2}}^{+}-V_{i-\frac{1}{2}}^{-}\right)}-\xi_k\right)f_{i-1,k}-\frac{\xi_{j}}{\Delta x}a^R_{i-\frac{1}{2},j}f_{i,j'}.
\end{equation*}

The use of step functions and compactly supported hat functions follows the same general matrix-encoding philosophy as in \cite{jinQuantumSimulationLiouville2026}. However, the structure encoded here is substantially different. In the geometrical optics problem, the interface contribution contains partial transmission and reflection coefficients, and the transmitted velocity is obtained from a Snell law's rescaling. In the present classical mechanics problem, the interface decision is an energy threshold branch selection: the particle is either transmitted with a velocity determined by the nonlinear square root map imposed by conservation of \(|v|^2/2+V(x)\), or reflected if the kinetic energy is below the potential jump. Therefore the nonzero pattern of the interface matrices, especially the column sparsity, requires a separate analysis.

If $\xi_j<0$, for the term $\frac{\xi_{j}}{\Delta x}\left(f_{i+\frac{1}{2},j}^{-}-f_{i-\frac{1}{2},j}^{+}\right)$ in \eqref{equ:1dimLiouvilleClassicalSemidis}, similarly, define
\begin{gather}
    a^T_{i+\frac{1}{2},j}=g^T\left(\xi_{j}^{2}+2\left(V_{i+\frac{1}{2}}^{-}-V_{i+\frac{1}{2}}^{+}\right)\right),\\
    a^R_{i+\frac{1}{2},j}=g^R\left(\xi_{j}^{2}+2\left(V_{i+\frac{1}{2}}^{-}-V_{i+\frac{1}{2}}^{+}\right)\right),
\end{gather}
then we can get 
\begin{equation*}
    \frac{\xi_{j}}{\Delta x}\left(f_{i+\frac{1}{2},j}^{-}-f_{i-\frac{1}{2},j}^{+}\right)=\frac{\xi_{j}}{\Delta x}a^T_{i+\frac{1}{2},j}\sum_{k=1}^{N_\xi}h\left(-\sqrt{\xi_{j}^{2}+2\left(V_{i+\frac{1}{2}}^{-}-V_{i+\frac{1}{2}}^{+}\right)}-\xi_k\right)f_{i+1,k}+\frac{\xi_{j}}{\Delta x}a^R_{i+\frac{1}{2},j}f_{i,j'}-\frac{\xi_{j}}{\Delta x}f_{ij}.
\end{equation*}

Define 
\begin{equation}\label{equ:beta}
    \beta_{i+\frac12,j,k}=\left\{\begin{array}{ll}
    h\left(\sqrt{\xi_{j}^{2}+2\left(V_{i+\frac{1}{2}}^{+}-V_{i+\frac{1}{2}}^{-}\right)}-\xi_k\right),     \\ 
    i=0,\ldots,N_x,\quad j=\frac{N_\xi}{2}+1,\ldots,N_\xi,\quad k=1,\ldots N_\xi, \\
    h\left(-\sqrt{\xi_{j}^{2}+2\left(V_{i+\frac{1}{2}}^{-}-V_{i+\frac{1}{2}}^{+}\right)}-\xi_k\right),     \\ 
    i=0,\ldots,N_x,\quad j=1,\ldots,\frac{N_\xi}{2},\quad k=1,\ldots N_\xi. 
    \end{array}\right.
\end{equation}
Here we use $i+\frac12$ to show that $\beta_{i+\frac12,j,k}$ is related to the coefficients $V^-_{i+\frac{1}{2}}$ and $V^+_{i+\frac{1}{2}}$ of the interface at $x_{i+\frac{1}{2}}$.
Then
\begin{equation*}
    \frac{\xi_{j}}{\Delta x}\left(f_{i+\frac{1}{2},j}^{-}-f_{i-\frac{1}{2},j}^{+}\right)=\left\{\begin{array}{ll}
    \frac{\xi_{j}}{\Delta x}f_{ij}-\frac{\xi_{j}}{\Delta x}a^T_{i-\frac{1}{2},j}\sum_{k=1}^{N_\xi}\beta_{i-\frac12,j,k}f_{i-1,k}-\frac{\xi_{j}}{\Delta x}a^R_{i-\frac{1}{2},j}f_{i,j'},     \\ \qquad\qquad i=1,\ldots,N_x,\quad j=\frac{N_\xi}{2}+1,\ldots,N_\xi, \\
    -\frac{\xi_{j}}{\Delta x}f_{ij}+\frac{\xi_{j}}{\Delta x}a^T_{i+\frac{1}{2},j}\sum_{k=1}^{N_\xi}\beta_{i+\frac12,j,k}f_{i+1,k}+\frac{\xi_{j}}{\Delta x}a^R_{i+\frac{1}{2},j}f_{i,j'},     \\ \qquad\qquad i=1,\ldots,N_x,\quad j=1,\ldots,\frac{N_\xi}{2}.
    \end{array}
    \right.
\end{equation*}
When $V^-_{i+\frac{1}{2}}=V^+_{i+\frac{1}{2}}$ the above scheme degenerates to the upwind scheme. 
Since we have assumed that interfaces don't appear at $x_\frac{1}{2}$ and $x_{N_x+\frac12}$, we can use the boundary condition to give the value of $f_{0j}(t)$ and $f_{N_x+1,j}(t)$.
When $\xi_j>0$ (resp. $\xi_j<0$), we impose inflow boundary condition at $(x_0,\xi_j)$ (resp. $(x_{N_x+1},\xi_j)$) and outflow boundary condition at $(x_{N_x+1},\xi_j)$ (resp. $(x_0,\xi_j)$).
Write the above equations in the form $A_1\boldsymbol{f}+b_1$ of a product of a matrix and a vector with an inhomogeneous term representing boundary condition, where the equations and variables are in order of $\boldsymbol{f}$, we can get a $N_xN_\xi\times N_xN_\xi$ block tridiagonal matrix
\begin{equation}\label{equ:blocktridiagmtx}
    A_1=\frac{1}{\Delta x}(I_{N_x}\otimes\Xi)
    \begin{pmatrix}
        C_1 & - B^u_1 &&&&&\\
        - B^l_1 & C_2 & - B^u_2 &&&&\\
        & - B^l_2 & C_3 & - B^u_3 &&&\\
        && \ddots & \ddots & \ddots &&\\
        &&& - B^l_{N_x-2} &C_{N_x-1}& - B^u_{N_x-1}\\
        &&&& - B^l_{N_x-1} &C_{N_x}\\
    \end{pmatrix}_{N_xN_\xi\times N_xN_\xi}
\end{equation}
and a $N_x N_\xi$-dimensional vector 
\begin{equation}\label{equ:inhomogeneoustermx}
    b_1=-\frac{1}{\Delta x}\sum_{j=\frac{N_\xi}{2}+1}^{N_\xi}\xi_jf_{0j}(t)\ket{0}\ket{j-1}+\frac{1}{\Delta x}\sum_{j=1}^{\frac{N_\xi}{2}}\xi_jf_{N_x+1, j}(t)\ket{N_x-1}\ket{j-1},
\end{equation}
where
\begin{equation}
    \Xi = \begin{pmatrix}
        |\xi_1| & & \\
        & \ddots & \\ 
        & & |\xi_{N_\xi}| \\
    \end{pmatrix},\quad
    C_i=I_{N_\xi}+\begin{pmatrix}
        & & & & & -a^R_{i+\frac12,1}\\
        & & & & \iddots &\\
         & &  & -a^R_{i+\frac12,\frac{N_\xi}{2}} & &\\
         &  & -a^R_{i-\frac12,\frac{N_\xi}{2}+1} & & &\\
        & \iddots & & & &  \\
        -a^R_{i-\frac12,N_\xi} &  & & & & 
    \end{pmatrix}_{N_\xi\times N_\xi}, 
\end{equation}
$B_i$'s $(j,k)$ element is $a^T_{i+\frac12,j}\beta_{i+\frac12,j,k}$ and $B^u_i$ (resp. $B^l_i$) refers to taking the upper (resp. lower) half of the matrix $B_i$ and making the rest $0$, i.e., 
\begin{equation*}
    B_i=\begin{pmatrix}
        a^T_{i+\frac12,1} & & & \\ 
        & a^T_{i+\frac12,2} & & \\ 
        & & \ddots & \\ 
        & & & a^T_{i+\frac12,N_\xi}
    \end{pmatrix} 
    \begin{pmatrix}
        \beta_{i+\frac12,1,1}&\beta_{i+\frac12,1,2}&\cdots&\beta_{i+\frac12,1,N_\xi}\\
        \beta_{i+\frac12,2,1}&\beta_{i+\frac12,2,2}&\cdots&\beta_{i+\frac12,2,N_\xi}\\
        \vdots&\vdots&\ddots&\vdots\\
        \beta_{i+\frac12,N_\xi,1}&\beta_{i+\frac12,N_\xi,2}&\cdots&\beta_{i+\frac12,N_\xi,N_\xi}\\
    \end{pmatrix}_{N_\xi\times N_\xi},
\end{equation*}
and
\begin{equation*}
    B_i^u=\begin{pmatrix}
        a^T_{i+\frac12,1} & & & \\ 
        & a^T_{i+\frac12,2} & & \\ 
        & & \ddots & \\ 
        & & & a^T_{i+\frac12,N_\xi}
    \end{pmatrix}
    \begin{pmatrix}
        \beta_{i+\frac12,1,1}&\beta_{i+\frac12,1,2}&\cdots&\beta_{i+\frac12,1,N_\xi}\\
        \vdots&\vdots&\ddots&\vdots\\
        \beta_{i+\frac12,\frac{N_\xi}{2},1}&\beta_{i+\frac12,\frac{N_\xi}{2},2}&\cdots&\beta_{i+\frac12,\frac{N_\xi}{2},N_\xi}\\
        0&0&\cdots&0\\
        \vdots&\vdots&\ddots&\vdots\\
        0&0&\cdots&0\\
    \end{pmatrix}_{N_\xi\times N_\xi},
\end{equation*}
\begin{equation*}
    B_i^l=\begin{pmatrix}
        a^T_{i+\frac12,1} & & & \\ 
        & a^T_{i+\frac12,2} & & \\ 
        & & \ddots & \\ 
        & & & a^T_{i+\frac12,N_\xi}
    \end{pmatrix}
    \begin{pmatrix}
        0&0&\cdots&0\\
        \vdots&\vdots&\ddots&\vdots\\
        0&0&\cdots&0\\
        \beta_{i+\frac12,\frac{N_\xi}{2}+1,1}&\beta_{i+\frac12,\frac{N_\xi}{2}+1,2}&\cdots&\beta_{i+\frac12,\frac{N_\xi}{2}+1,N_\xi}\\
        \vdots&\vdots&\ddots&\vdots\\
        \beta_{i+\frac12,N_\xi,1}&\beta_{i+\frac12,N_\xi,2}&\cdots&\beta_{i+\frac12,N_\xi,N_\xi}\\
    \end{pmatrix}_{N_\xi\times N_\xi}.
\end{equation*}
To facilitate subsequent complexity analysis, we will now discuss the sparsity of $B_i^u$ and $B_i^l$. 
Hermitian matrices have the same sparsity in row and column. For general matrix, we give the following definition.
\begin{definition}(Sparsity in row or column)
    For a general matrix $A\in \mathbb{C}^{m\times n}$, the sparsity in row (i.e., the number of nonzero elements in row) is denoted as $s_r(A)$ and the sparsity in column (i.e., the number of nonzero elements in column) is denoted as $s_c(A)$. For a Hermitian matrix $A\in \mathbb{C}^{m\times m}$, $s(A):=s_r(A)=s_c(A)$.
\end{definition}
Note that since the hat function \eqref{equ:hat} has a compact support of size $2\Delta\xi$, $s_r(B_i)=s_r(B_i^u)=s_r(B_i^l)\le2$, $s_c(B_i^u)$ and $s_c(B_i^l)$ is dependent on the velocity and potential along both sides of the interface.
\begin{itemize}
    \item If $V^-_{i+\frac{1}{2}}>V^+_{i+\frac{1}{2}}$, $s_c(B_i^l)\le2$ and $\beta_{i+\frac12,j,k}=0$ for $j=\frac{N_\xi}{2}+1,\ldots,N_\xi,k=1,\ldots,\frac{N_\xi}{2}$  ($\beta_{i+\frac12,\frac{N_\xi}{2}+1,\frac{N_\xi}{2}}$ may be nonzero when $\xi_{\frac{N_\xi}{2}+1}>\sqrt{2\left(V_{i+\frac{1}{2}}^--V_{i+\frac{1}{2}}^+\right)}$ ) or for all $j,k$ satisfying $\xi_j\le\sqrt{2\left(V_{i+\frac{1}{2}}^--V_{i+\frac{1}{2}}^+\right)}$, because if $\sqrt{\xi_j^2+2\left(V_{i+\frac12}^+-V_{i+\frac12}^-\right)}$ lies in $[\xi_{k},\xi_{k+1})$, then $\sqrt{\xi_{j+1}^2+2\left(V_{i+\frac12}^+-V_{i+\frac12}^-\right)}$ must lie in $[\xi_{k+l},\xi_{k+1+l})$ for $l\ge1$, 
    \[    s_c(B_i^u)\le2\left\lceil\sqrt{\frac54+\frac{2}{\Delta\xi}\sqrt{\left(\frac{\Delta\xi}{2}\right)^2+2\left(V_{i+\frac{1}{2}}^--V_{i+\frac{1}{2}}^+\right)}}-\frac12\right\rceil
    \]
    and $\beta_{i+\frac12,j,k}=0$ for $j=1,\ldots,\frac{N_\xi}{2},k=\frac{N_\xi}{2}+1,\ldots,N_\xi$, since $[\xi_{k},\xi_{k+1})$ may contain at most $\left\lfloor\sqrt{\frac54+\frac{2}{\Delta\xi}\sqrt{\left(\frac{\Delta\xi}{2}\right)^2+2\left(V_{i+\frac{1}{2}}^--V_{i+\frac{1}{2}}^+\right)}}-\frac12\right\rfloor$ rescaled intervals like 
    \[
    \left[-\sqrt{\xi_j^2+2\left(V_{i+\frac{1}{2}}^--V_{i+\frac{1}{2}}^+\right)},-\sqrt{\xi_{j+1}^2+2\left(V_{i+\frac{1}{2}}^--V_{i+\frac{1}{2}}^+\right)}\right).
    \]
    See Appendix \ref{app:B_i} for the detailed analysis.
    \item If $V^-_{i+\frac{1}{2}}<V^+_{i+\frac{1}{2}}$, similarly, $s_c(B_i^u)\le2$ and $\beta_{i+\frac12,j,k}=0$ for $j=1,\ldots,\frac{N_\xi}{2},k=\frac{N_\xi}{2}+1,\ldots,N_\xi$ ( $\beta_{i+\frac12,\frac{N_\xi}{2},\frac{N_\xi}{2}+1}$ may be nonzero when $|\xi_{\frac{N_\xi}{2}}|>\sqrt{2\left(V_{i+\frac{1}{2}}^+-V_{i+\frac{1}{2}}^-\right)}$ ) or for all $j,k$ satisfying $|\xi_j|\le\sqrt{2\left(V_{i+\frac{1}{2}}^+-V_{i+\frac{1}{2}}^-\right)}$, 
    \[
    s_c(B_i^l)\le 2\left\lceil\sqrt{\frac54+\frac{2}{\Delta\xi}\sqrt{\left(\frac{\Delta\xi}{2}\right)^2+2\left(V_{i+\frac{1}{2}}^+-V_{i+\frac{1}{2}}^-\right)}}-\frac12\right\rceil
    \]
    and $\beta_{i+\frac12,j,k}=0$ for $j=\frac{N_\xi}{2}+1,\ldots,N_\xi,k=1,\ldots,\frac{N_\xi}{2}$.
    \item If $V^-_{i+\frac{1}{2}}=V^+_{i+\frac{1}{2}}$, 
    \begin{equation*}
        B_i^u=\begin{pmatrix}
        a^T_{i+\frac12,1} & & & \\ 
        & a^T_{i+\frac12,2} & & \\ 
        & & \ddots & \\ 
        & & & a^T_{i+\frac12,N_\xi}
    \end{pmatrix}\begin{pmatrix}
            I_{\frac{N_\xi}{2}\times\frac{N_\xi}{2}}&O_{\frac{N_\xi}{2}\times\frac{N_\xi}{2}}\\
            O_{\frac{N_\xi}{2}\times\frac{N_\xi}{2}}&O_{\frac{N_\xi}{2}\times\frac{N_\xi}{2}}
        \end{pmatrix},
    \end{equation*}
    \begin{equation*}
        B_i^l=\begin{pmatrix}
        a^T_{i+\frac12,1} & & & \\ 
        & a^T_{i+\frac12,2} & & \\ 
        & & \ddots & \\ 
        & & & a^T_{i+\frac12,N_\xi}
    \end{pmatrix}\begin{pmatrix}
            O_{\frac{N_\xi}{2}\times\frac{N_\xi}{2}}&O_{\frac{N_\xi}{2}\times\frac{N_\xi}{2}}\\
            O_{\frac{N_\xi}{2}\times\frac{N_\xi}{2}}&I_{\frac{N_\xi}{2}\times\frac{N_\xi}{2}}
        \end{pmatrix},
    \end{equation*}
    which degenerates to the ordinary upwind scheme and $s_c(B_i^u)=s_c(B_i^l)=1$.
\end{itemize}
Denote the set of interfaces as $\mathcal{I}=\{x_{i+\frac{1}{2}}:V(x)\text{ is discontinuous on }x_{i+\frac{1}{2}}\}$. Then $s_r(A_1)\le3$ and $s_c(A_1)\le2\left\lceil\sqrt{\frac54+\frac{2}{\Delta\xi}\sqrt{\left(\frac{\Delta\xi}{2}\right)^2+2\max_{x\in\mathcal{I}}\left\{|V^-(x)-V^+(x)|\right\}}}-\frac12\right\rceil+2:=Q$. The details of the proof can be found in Appendix~\ref{app:A_1}.

For the term $-\frac{V_{i+\frac{1}{2}}^{-}-V_{i-\frac{1}{2}}^{+}}{\Delta x\Delta\xi}\left(f_{i,j+\frac{1}{2}}-f_{i,j-\frac{1}{2}}\right)$ in \eqref{equ:1dimLiouvilleClassicalSemidis}, define $d_{i}=-\frac{V_{i+\frac{1}{2}}^{-}-V_{i-\frac{1}{2}}^{+}}{\Delta x\Delta\xi}$. 
Since the numerical fluxes $f_{i,j+\frac{1}{2}},f_{i,j-\frac{1}{2}}$ are defined using the upwind discretization, we can get 
\begin{equation*}
    d_{i}\left(f_{i,j+\frac{1}{2}}-f_{i,j-\frac{1}{2}}\right)=-\frac{|d_{i}|+d_{i}}{2}f_{i,j-1}(t)+|d_{i}|f_{ij}(t)-\frac{|d_{i}|-d_{i}}{2}f_{i,j+1}(t).
\end{equation*}
When $d_{i}>0$ (resp. $d_{i}<0$) we impose inflow boundary condition at $(x_i,\xi_0)$ (resp. $(x_i,\xi_{N_\xi+1})$) and outflow boundary condition at $(x_i,\xi_{N_\xi+1})$ (resp. $(x_i,\xi_0)$). 
Write the above equations in the form $A_2\boldsymbol{f}+b_2$ of a product of a matrix and a vector with an inhomogeneous term representing boundary condition, where the equations and variables are in order of $\boldsymbol{f}$, we can get a $N_xN_\xi\times N_xN_\xi$ block diagonal matrix 
\begin{equation}\label{equ:blockdiagmtx}
    A_2=\begin{pmatrix}
        D_1 &&&\\
        & D_2&&\\
        &&\ddots&\\
        &&&D_{N_x}
    \end{pmatrix}_{N_xN_\xi\times N_xN_\xi}
\end{equation}
and a $N_xN_\xi$-dimensional vector
\begin{equation}\label{equ:inhomogeneoustermxi}
    b_2=-\sum_{i=1}^{N_x}\frac{|d_{i}|+d_{i}}{2}f_{i0}(t)\ket{i-1}\ket{0}-\sum_{i=1}^{N_x}\frac{|d_{i}|-d_{i}}{2}f_{i,N_\xi+1}(t)\ket{i-1}\ket{N_\xi-1},
\end{equation}
where 
\begin{align*}
    D_i=&\begin{pmatrix}
        |d_{i}|&-\frac{|d_{i}|-d_{i}}{2}&&&\\
        -\frac{|d_{i}|+d_{i}}{2}&|d_{i}|&-\frac{|d_{i}|-d_{i}}{2}&&\\
        &\ddots&\ddots&\ddots&\\
        && -\frac{|d_{i}|+d_{i}}{2}&|d_{i}|&-\frac{|d_{i}|-d_{i}}{2}\\
        &&& -\frac{|d_{i}|+d_{i}}{2}&|d_{i}|
    \end{pmatrix}_{N_\xi\times N_\xi}\\
    =&\frac12
        d_{i}
    \begin{pmatrix}
        0 & 1 &&&\\
        -1 &0&1&&\\
        & \ddots&\ddots&\ddots&\\
        &&-1&0&1\\
        &&&-1&0
    \end{pmatrix}+\frac12
        |d_{i}|
    \begin{pmatrix}
        2 & -1 &&&\\
        -1 &2&-1&&\\
        & \ddots&\ddots&\ddots&\\
        &&-1&2&-1\\
        &&&-1&2
    \end{pmatrix}.
\end{align*}

Combining \eqref{equ:1dimLiouvilleClassicalSemidis}, \eqref{equ:blocktridiagmtx}, \eqref{equ:blockdiagmtx}, \eqref{equ:inhomogeneoustermx} and \eqref{equ:inhomogeneoustermxi}, we can get the ODE $\boldsymbol{f}'(t)=A\boldsymbol{f}(t)+\boldsymbol{b}(t)$ where $A=-A_1-A_2$ and $\boldsymbol{b}(t)=-b_1-b_2$.

\begin{remark}
    Algorithm \ref{alg:fluxClassical1} for evaluating numerical fluxes is of first order. One can obtain a formally second order flux by incorporating the slope limiter, such as the van Leer or minmod slope limiter, into the above algorithm. In a classical computer, this can be achieved by replacing $f_{ik}$ with $f_{ik}+\frac{\Delta x}2s_{ik}$ and replacing $f_{i+1,k}$ with $f_{i+1,k}-\frac{\Delta x}{2}s_{i+1,k}$, where $s_{ik}=\frac{1}{\Delta x}(f_{i+1,k}-f_{ik})\psi(\theta_i)$ is the slope limiter in the $x$-direction, $\psi(\theta)$ is a nonlinear function (for the van Leer slope limiter, $\psi(\theta)=\frac{|\theta|+\theta}{1+|\theta|}$) and $\theta_i=\frac{f_{i,k}-f_{i-1,k}}{f_{i+1,k}-f_{i,k}}$.
    To compute $s_{i,k}$ we need to know the value of $f_{i,k}$ at every time step. This can be done easily in a classical computer. However, in a quantum computer, it is difficult to use the slope limiter which is nonlinear. If we use the Hamiltonian simulation to evolve the solution vector from $t=0$ to $t=T$, we can't get the solution at every time step. If we use the Hamiltonian simulation to evolve the solution vector from $t=(n-1)\Delta t$ to $t=n\Delta t$ for $n=1,\ldots,T/\Delta t$, then we need to measure the quantum state at every time step, which costs a lot.
\end{remark}

\subsection{Schr\"odingerization}\label{sec:interface1Schro}
If $\boldsymbol{b}(t)\neq0$, we then write the above ODE into a homogeneous form:
\begin{equation}\label{equ:interface1homogeneous}
    \frac{\d }{\d t}\begin{pmatrix}
        \boldsymbol{f}(t)\\
        \boldsymbol{r}(t)
    \end{pmatrix}=\begin{pmatrix}
        A & \text{diag}(\boldsymbol{b}(t))/\delta\\
        \boldsymbol{0} & \boldsymbol{0}
    \end{pmatrix}\begin{pmatrix}
        \boldsymbol{f}(t)\\
        \boldsymbol{r}(t)
    \end{pmatrix},\quad
    \begin{pmatrix}
        \boldsymbol{f}(0)\\
        \boldsymbol{r}(0)
    \end{pmatrix}=\begin{pmatrix}
        \boldsymbol{f}(0)\\
        \delta \boldsymbol{1}
    \end{pmatrix},
\end{equation}
where $\delta=\max_t\|\boldsymbol{b}(t)\|_\infty$, $\boldsymbol{0}$ is a $N_xN_\xi\times N_xN_\xi$ matrix with all entries  $0$ and $\boldsymbol{1}$ is a $N_xN_\xi$-dimensional vector with all components $1$. Then according to \eqref{equ:H1H2}
\begin{equation}
    H_{1}=\frac{1}{2}\begin{pmatrix}
    A+A^\dagger&\mathrm{diag}(\boldsymbol{b}(t))/\delta\\
    \mathrm{diag}(\boldsymbol{b}(t)^{\dagger})/\delta&\boldsymbol{0}\end{pmatrix},\quad H_{2}=\frac{1}{2i}\begin{pmatrix}
    A-A^\dagger&\mathrm{diag}(\boldsymbol{b}(t))/\delta\\
    -\mathrm{diag}(\boldsymbol{b}(t)^{\dagger})/\delta&\boldsymbol{0}\end{pmatrix}
\end{equation}
and \eqref{equ:interface1homogeneous} can be rewritten as 
\begin{equation}
    \frac{\d}{\d t}\boldsymbol{u}=H_1\boldsymbol{u}+iH_2\boldsymbol{u}.\quad\boldsymbol{u}(0)=\begin{pmatrix}
        \boldsymbol{f}(0)\\
        \delta \boldsymbol{1}
    \end{pmatrix}.
\end{equation}
Using the warped phase transformation and then taking the Fourier spectral method on $p$ in position basis, we can get \eqref{equ:v2w}. After change of variables $\tilde{\bm{w}}(t) = (I \otimes \Phi^{-1})\bm{w}(t)$ and simulating the Hamiltonian system \eqref{equ:generalSchr} from $t=0$ to $t=T$, we can get $\tilde{\boldsymbol{w}}(T)$. Then applying the inverse quantum Fourier transform (IQFT) $\bm{w}(T) = (I \otimes \Phi)\tilde{\bm{w}}(T)$ and using the one point recovery \eqref{equ:restore1point}, we can get the solution quantum state.
The procedure from initial value $\boldsymbol{f}(0)$ to solution $\boldsymbol{f}(T)$ can be summarized as follows:
\begin{multline}\label{equ:procedure}
    \ket{\boldsymbol{f}(0)}\xrightarrow{\text{homogeneous}} \ket{\boldsymbol{u }(0)}\xrightarrow{\text{warped phase}} \ket{\boldsymbol{w}(0)}\xrightarrow{\text{QFT on } p} \ket{\tilde{\boldsymbol{w}}(0)} \xrightarrow{\e^{-iHT}} \ket{\tilde{\boldsymbol{w}}(T)} \\ 
    \xrightarrow{\text{IQFT on } p} \ket{\boldsymbol{w}(T)} \xrightarrow{\text{recovery}} 
    \ket{\boldsymbol{u }(T)} \xrightarrow{\text{measurements}} \ket{\boldsymbol{f }(T)}
\end{multline}

The following theorem gives the complexity of the Schr\"odingerization based Hamiltonian-preserving scheme.

\begin{theorem}\label{thm:1dim_complexity1}
    In the 1-dimensional case, given the initial quantum state $\ket{\boldsymbol{u}(0)}$, assume that the $x$ grid number, $\xi$ grid number and extended $p$ grid number are $N_x=2^{n_x}$, $N_\xi=2^{n_\xi}$ and $N_p=2^{n_p}$ respectively, $R-L=\mathcal O(1)$, the inhomogeneous term $\boldsymbol{b}$ is independent of time and the errors of Hamiltonian-preserving scheme, spectral method in $p$ and Hamiltonian simulation are all $\epsilon$. With the Schr\"odingerization method, the state $\ket{\boldsymbol{u}(t)}$ can be simulated for time $T$, and success probability is at least $1-2\epsilon$ with
    \begin{itemize}
        \item $\mathcal{O}(QT\epsilon^{-2}+\frac{\log\epsilon^{-1}}{\log\log\epsilon^{-1}})$ queries and a factor $\mathcal{O}(3\log(\epsilon^{-1})+\log(\epsilon^{-1})\polylog(\log(\epsilon^{-1})))$ additional quantum gates, when the solution has discontinuities only at interfaces,
        \item $\mathcal{O}(QT\epsilon^{-3}+\frac{\log\epsilon^{-1}}{\log\log\epsilon^{-1}})$ queries and a factor $\mathcal{O}(5\log(\epsilon^{-1})+\log(\epsilon^{-1})\polylog(\log(\epsilon^{-1})))$ additional quantum gates, when the solution has discontinuities not only at interfaces,
    \end{itemize}
    where $Q:=2\left\lceil\sqrt{\frac54+\frac{2}{\Delta\xi}\sqrt{\left(\frac{\Delta\xi}{2}\right)^2+2\max_{x\in\mathcal{I}}\left\{|V^-(x)-V^+(x)|\right\}}}-\frac12\right\rceil+2\ge s_c(A_1)\ge 3$.
\end{theorem}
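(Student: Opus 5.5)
The plan is to apply Lemma \ref{lem:complexity} to the Schr\"odingerized Hamiltonian $H=H_1\otimes D_\mu-H_2\otimes I$. That requires three ingredients: the sparsity $d$ of $H$, the max-norm $\|H\|_{\max}$, and the relation between the grid parameters and $\epsilon$. The number of qubits is $n=n_x+n_\xi+n_p+1$, where the extra qubit accounts for the homogenizing block in \eqref{equ:interface1homogeneous}. Since $\boldsymbol{b}$ is independent of time, $H$ is time-independent and the lemma applies directly.

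\emph{Sparsity.} The stated bounds give $s_r(A_1)\le 3$ and $s_c(A_1)\le Q$. The matrix $A_2$ is block diagonal with tridiagonal blocks, so $s_r(A_2),s_c(A_2)\le 3$, and these stencils lie in the same $\xi$-block as the diagonal and reflection entries of $A_1$. Hence $s(A+A^\dagger)$ and $s(A-A^\dagger)$ are at most $s_r(A)+s_c(A)=\mathcal O(Q)$. The homogenizing blocks $\mathrm{diag}(\boldsymbol b)/\delta$ add at most one entry per row and column. Since $D_\mu$ is diagonal, $d=s(H)=\mathcal O(Q)$, which is consistent with $s(H)=\mathcal O(s(A))$ from Section~\ref{sec:Schr}.

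\emph{Max-norm.} The entries of $A_1$ are bounded by $\max|\xi_j|/\Delta x$, because $a^{T,R}\in\{0,1\}$ and $0\le\beta\le 1$. The entries of $A_2$ are bounded by $|d_i|\le \max|V_x|/\Delta\xi$ away from the interfaces; at an interface the potential is jump-resolved, and $d_i$ only uses one-sided values inside a cell, so this stays $\mathcal O(1/\Delta\xi)$. The entries of $\mathrm{diag}(\boldsymbol b)/\delta$ are at most $1$. Together with $\mu_{\max}=\mathcal O(N_p/(R-L))=\mathcal O(1/\Delta p)$, this gives $\|H\|_{\max}=\mathcal O\big(\frac{1}{\Delta p}(\frac1{\Delta x}+\frac1{\Delta\xi})+\frac1{\Delta x}+\frac1{\Delta\xi}\big)$.

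\emph{Error-to-mesh relation.} This is the step I expect to be the main obstacle, because the two cases differ only here. By Remark \ref{rmk:convergencerate}, the scheme is first order when the discontinuities lie only at interfaces, so $\Delta x,\Delta\xi=\mathcal O(\epsilon)$. Otherwise the rate is $1/2$, so $\Delta x,\Delta\xi=\mathcal O(\epsilon^2)$. For $p$, I would use the first-order extension stated in Section~\ref{sec:Schr}, which gives $\Delta p=\mathcal O(\epsilon)$ with $R-L=\mathcal O(1)$. The truncation conditions on $L,R$ are absorbed into $R-L=\mathcal O(1)$ for fixed $T$. Substituting, I get $\|H\|_{\max}=\mathcal O(\epsilon^{-2})$ in the first case and $\mathcal O(\epsilon^{-3})$ in the second. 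Multiplying by $T$ and $d=\mathcal O(Q)$ and adding the $\log(1/\epsilon)/\log\log(1/\epsilon)$ term from Lemma~\ref{lem:complexity} yields the query bounds.

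\emph{Gate count.} The gate factor $\mathcal O(n+m\,\polylog m)$ follows by counting qubits. In the first case $n_x,n_\xi,n_p=\mathcal O(\log\epsilon^{-1})$, so $n\approx 3\log\epsilon^{-1}$. In the second case $n_x,n_\xi=\mathcal O(2\log\epsilon^{-1})$ and $n_p=\mathcal O(\log\epsilon^{-1})$, so $n\approx 5\log\epsilon^{-1}$. Taking the precision $m=\mathcal O(\log\epsilon^{-1})$ bits gives the $\log(\epsilon^{-1})\polylog(\log\epsilon^{-1})$ term. The success probability $1-2\epsilon$ is inherited directly from the lemma.
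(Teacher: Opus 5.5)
Your proposal is correct and follows the paper's proof essentially step for step. You apply Lemma~\ref{lem:complexity} with $s(H)=\mathcal O(Q)$ (the paper makes this explicit as $s(H)\le Q+5$), $\|H\|_{\max}=\mathcal O((N_x+N_\xi)/\Delta p)$, $\Delta p=\mathcal O(\epsilon)$, $N_x=N_\xi=\mathcal O(\epsilon^{-1})$ or $\mathcal O(\epsilon^{-2})$ according to the convergence order, and a register of $3\log(\epsilon^{-1})$ or $5\log(\epsilon^{-1})$ qubits.

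One small correction: the truncation condition on $L,R$ is not ``absorbed'' into $R-L=\mathcal O(1)$ for fixed $T$. Since $\lambda^-_{\max}(H_1)$ scales like $N_x+N_\xi$, the requirement $\mathrm{e}^{L+\lambda^-_{\max}(H_1)T}\lesssim\epsilon$ forces $R-L$ to grow under mesh refinement. So $R-L=\mathcal O(1)$ is simply a hypothesis of the theorem. Dropping it, as the remark following the theorem shows, leaves the query count unchanged, because $\mu_{\max}=\mathcal O(1/\Delta p)$ regardless of $R-L$; it only enlarges the register, to about $4\log(\epsilon^{-1})$ qubits in the first case.
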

\begin{proof}
    We use Lemma \ref{lem:complexity} to give the proof. Matrix elements are specified to $\log(\epsilon^{-1})$ bits of precision.
    
    From previous discussion, 
    \begin{equation*}
        \begin{aligned}
            s(H)=s(H_1\otimes D_\mu-H_2\otimes I)&\le \max\{s(A+A^\dagger),s(A-A^\dagger)\}+s(\text{diag}(\boldsymbol{b}(t))/\delta)\\
            &\le \max\{s(A_1+A_1^\dagger),s(A_1-A_1^\dagger)\}+\max\{s(A_2+A_2^\dagger),s(A_2-A_2^\dagger)\}-1+1\\
            &\le (s_r(A_1)+s_c(A_1)-1)+(3)\\
            &\le Q+5.
        \end{aligned}
    \end{equation*}
    In the $p$-direction, the initial value is only continuous but not differentiable. According to spectral method, in order to reach precision $\epsilon$, the mesh size $\Delta p$ is $\mathcal{O}(\epsilon)$.
    The total number of points in $x$ and $\xi$-direction is related to where the discontinuities lies.
    \begin{itemize}
        \item when the solution has discontinuities only at interfaces, Hamiltonian-preserving scheme can give the first order precision. Thus $N_x=N_\xi=\mathcal{O}(\epsilon^{-1})$ and $\|H\|_{\max}\le\|H_{1}\otimes D_{\mu}\|_{\max}+\|H_{2}\|_{\max}\le\mathcal{O}(\left\|H_{1}\right\|_{\max}\mu_{N_p/2})=\mathcal{O}((N_x+N_\xi)/\Delta p)=\mathcal{O}(\epsilon^{-2})$. The query complexity is 
        \begin{equation*}
            \mathcal{O}(QT\epsilon^{-2}+\frac{\log\epsilon^{-1}}{\log\log\epsilon^{-1}}).
        \end{equation*} 
        Since the quantum Fourier transform in Schr\"odingerization can be implemented by using $\mathcal{O}(m\log m)$ gates \cite{namApproximateQuantumFourier2020}, the number of overall additional quantum gates is 
        \begin{equation*}
            \mathcal{O}(3\log(\epsilon^{-1})+\log(\epsilon^{-1})\polylog(\log(\epsilon^{-1}))+\log(\epsilon^{-1})\log(\log(\epsilon^{-1}))).
        \end{equation*}
        \item when the solution has discontinuities not only at interfaces, Hamiltonian-preserving scheme can give the half-order precision. Thus $N_x=N_\xi=\mathcal{O}(\epsilon^{-2})$ and $\|H\|_{\max}\le\mathcal{O}(\epsilon^{-3})$. The query complexity is 
        \begin{equation*} 
            \mathcal{O}(QT\epsilon^{-3}+\frac{\log\epsilon^{-1}}{\log\log\epsilon^{-1}}).
        \end{equation*} 
        and the number of overall additional quantum gates is 
        \begin{equation*}
            \mathcal{O}(5\log(\epsilon^{-1})+\log(\epsilon^{-1})\polylog(\log(\epsilon^{-1}))+\log(\epsilon^{-1})\log(\log(\epsilon^{-1}))).
        \end{equation*}
    \end{itemize}
\end{proof}
\begin{remark}
    For the smoother initial data $g(p)\in H^r((L,R))$, $N_p=\mathcal{O}(\epsilon^{-\frac{1}{r}})$.
\end{remark}
\begin{remark}
    In the classical implementation of Hamiltonian-preserving schemes, one can use any time discretization for the time derivative. Here we use the forward Euler time discretization as an example to analyze the classical complexity which contains the number of products and quotients. After discretization, the ODE $\boldsymbol{f}'(t)=A\boldsymbol{f}(t)+\boldsymbol{b}(t)$ becomes $\boldsymbol{f}^{n+1}=(\Delta tA+I)\boldsymbol{f}^n+\Delta t\boldsymbol{b}$ for $n=0,\ldots,{N_t}-1$. To satisfy the CFL condition $\Delta t\left[\frac{\max_j|\xi_j|}{\Delta x}+\frac{\max_i\left|\frac{V_{i+\frac{1}{2}}^--V_{i-\frac{1}{2}}^+}{\Delta x}\right|}{\Delta\xi}\right]\leq1$, a time step $\Delta t=\mathcal{O}(\Delta x,\Delta \xi)$ is needed \cite{jinHamiltonianPreservingSchemesLiouville2005}, so we take $N_t=\mathcal{O}(TN_x)$. Then the classical complexity is 
    \begin{equation}
        \mathcal{O}(s_r(\Delta tA+I)N_x N_\xi N_t)=\begin{cases}
            \mathcal{O}(5T\epsilon^{-3}) & \text{when the solution has discontinuities only at interfaces},\\
            \mathcal{O}(5T\epsilon^{-6})& \text{when the solution has discontinuities not only at interfaces}.
        \end{cases}
    \end{equation}
    There is a polynomial advantage in $\epsilon$ for quantum algorithms.
\end{remark}

\begin{remark}
Theorem \ref{thm:1dim_complexity1} gives the query complexity and the number of additional quantum gates in the sparse-access model. The gate complexity of the unitary oracle that provides a description of $H$ may depend on the matrix dimension, so the actual gate complexity may be larger than the query complexity.

The estimates above treat the auxiliary discretization size $N_p$ through the accuracy requirement in the $p$ variable. If one further makes the dependence of the truncated $p$-interval on the spectrum of $H_1$ explicit, then the query complexity remains unchanged, while only the logarithmic register-size contribution in the additional gate count is modified. Indeed, by the truncation condition of the Schr\"odingerization method, one may choose
\[
  R-L
  =
  \mathcal{O}\left(
  (\lambda^+_{\max}(H_1)+\lambda^-_{\max}(H_1))T
  + \log(\epsilon^{-1})
  \right).
\]
Together with $\Delta p=\mathcal{O}(\epsilon)$, this gives
\[
  N_p
  =
  \mathcal{O}\left(
  \frac{
  (\lambda^+_{\max}(H_1)+\lambda^-_{\max}(H_1))T
  + \log(\epsilon^{-1})
  }{\epsilon}
  \right).
\]
Under the scaling used in Theorem \ref{thm:1dim_complexity1},
\[
  \lambda^+_{\max}(H_1)+\lambda^-_{\max}(H_1)
  =
  \mathcal{O}(N_x+N_\xi).
\]
Hence, when the solution has discontinuities only at interfaces,
$N_x=N_\xi=\mathcal{O}(\epsilon^{-1})$, and
\[
  N_p
  =
  \mathcal{O}\left(
  T\epsilon^{-2}
  +
  \epsilon^{-1}\log(\epsilon^{-1})
  \right).
\]
For fixed final time $T>0$, this implies
\[
  n_p=\log N_p=2\log(\epsilon^{-1})+\mathcal{O}(1).
\]
Therefore the total register size becomes
\[
  n_x+n_\xi+n_p+\mathcal{O}(1)
  =
  4\log(\epsilon^{-1})+\mathcal{O}(1).
\]
Similarly, when the solution has discontinuities not only at interfaces,
$N_x=N_\xi=\mathcal{O}(\epsilon^{-2})$, and
\[
  N_p
  =
  \mathcal{O}\left(
  T\epsilon^{-3}
  +
  \epsilon^{-1}\log(\epsilon^{-1})
  \right),
\]
so that, for fixed $T>0$,
\[
  n_p=3\log(\epsilon^{-1})+\mathcal{O}(1),
\]
and the total register size becomes
\[
  n_x+n_\xi+n_p+\mathcal{O}(1)
  =
  7\log(\epsilon^{-1})+\mathcal{O}(1).
\]

Thus, if the dependence of the truncated $p$-interval on the mesh size is made explicit, the query complexities in Theorem \ref{thm:1dim_complexity1} are unchanged, while the logarithmic register-size terms in the additional gate counts may be written as
\[
  \mathcal{O}\left(
  4\log(\epsilon^{-1})
  + \log(\epsilon^{-1})\operatorname{polylog}(\log(\epsilon^{-1}))
  \right)
\]
in the first-order case, and
\[
  \mathcal{O}\left(
  7\log(\epsilon^{-1})
  + \log(\epsilon^{-1})\operatorname{polylog}(\log(\epsilon^{-1}))
  \right)
\]
in the half-order case, up to the cost of the sparse-access oracle.
The cost of the QFT and IQFT in the auxiliary $p$ variable is
$\mathcal{O}(n_p\log n_p)$, which is
$\mathcal{O}(\log(\epsilon^{-1})\log\log(\epsilon^{-1}))$ for fixed final time $T$.
This term is absorbed in the
$\log(\epsilon^{-1})\operatorname{polylog}(\log(\epsilon^{-1}))$ factor.
\end{remark}

\section{Quantum simulation of Liouville equation with interface condition \eqref{equ:ddimLiouvilleClassicalinterfacecondition} in two spatial dimensions }\label{sec:interface2d}
Consider the Liouville equation in two spatial dimensions:
\begin{equation}\label{equ:2dimLiouvilleClassical}
    f_t+\xi f_x+\eta f_y-V_xf_\xi-V_yf_\eta=0.
\end{equation}

\subsection{The original algorithm}
We employ a uniform mesh with grid points at $x_{i+\frac{1}{2}},y_{j+\frac{1}{2}},\xi_{k+\frac{1}{2}},\eta_{l+\frac{1}{2}},i=0,\ldots,N_x,j=0,\ldots,N_y,k=0,\ldots,N_\xi,l=0,\ldots,N_\eta$ in each direction. The cells are centered at $(x_{i},y_{j},\xi_{k},\eta_{l})$ with $x_{i}=\frac{1}{2}(x_{i+\frac{1}{2}}+x_{i-\frac{1}{2}}),y_{j}=\frac{1}{2}(y_{j+\frac{1}{2}}+y_{j-\frac{1}{2}}),\xi_{k}=\frac{1}{2}(\xi_{k+\frac{1}{2}}+\xi_{k-\frac{1}{2}}),\eta_{l}=\frac{1}{2}(\eta_{l+\frac{1}{2}}+\eta_{l-\frac{1}{2}})$. The mesh size is denoted by $\Delta x=x_{i+\frac{1}{2}}-x_{i-\frac{1}{2}},\Delta y=y_{j+\frac{1}{2}}-y_{j-\frac{1}{2}},\Delta\xi=\xi_{k+\frac{1}{2}}-\xi_{k-\frac{1}{2}},\Delta\eta=\eta_{l+\frac{1}{2}}-\eta_{l-\frac{1}{2}}.$ We define the cell average of $f$ as
\begin{equation*}
    f_{ijkl}=\frac{1}{\Delta x\Delta y\Delta\xi\Delta\eta}\int_{x_{i-\frac{1}{2}}}^{x_{i+\frac{1}{2}}}\int_{y_{j-\frac{1}{2}}}^{y_{j+\frac{1}{2}}}\int_{\xi_{k-\frac{1}{2}}}^{\xi_{k+\frac{1}{2}}}\int_{\eta_{l-\frac{1}{2}}}^{\eta_{l+\frac{1}{2}}}f(x,y,\xi,\eta,t)\mathrm{d}\eta\mathrm{d}\xi\mathrm{d}y\mathrm{d}x.
\end{equation*}
Similar to the 1D case, we approximate $V(x, y)$ by a piecewise bilinear function, and for convenience, we always provide two interface values of $V$ at each cell interface. When $V$ is smooth at a cell interface, the two interface values are identical.

The 2D Liouville equation \eqref{equ:2dimLiouvilleClassical} can be semi-discretized as
\begin{multline}\label{equ:2dimLiouvilleClassicalSemidis}
    \left(f_{ijkl}\right)_{t}+\frac{\xi_k}{\Delta x}\left(f_{i+\frac{1}{2},jkl}^--f_{i-\frac{1}{2},jkl}^+\right)+\frac{\eta_l}{\Delta y}\left(f_{i,j+\frac{1}{2},kl}^--f_{i,j-\frac{1}{2},kl}^+\right)\\
    -\frac{V_{i+\frac{1}{2},j}^{-}-V_{i-\frac{1}{2},j}^{+}}{\Delta x\Delta\xi}\left(f_{ij,k+\frac{1}{2},l}-f_{ij,k-\frac{1}{2},l}\right)-\frac{V_{i,j+\frac{1}{2}}^{-}-V_{i,j-\frac{1}{2}}^{+}}{\Delta y\Delta\eta}\left(f_{ijk,l+\frac{1}{2}}-f_{ijk,l-\frac{1}{2}}\right)=0,
\end{multline}
where the interface values $f_{ij,k+\frac{1}{2},l},f_{ijk,l+\frac{1}{2}}$ are provided by the upwind approximation, and the split interface values
$f_{i+\frac{1}{2},jkl}^{-},f_{i-\frac{1}{2},jkl}^{+},f_{i,j+\frac{1}{2},kl}^{-},f_{i,j-\frac{1}{2},kl}^{+}$ can be obtained using essentially the same algorithm described in Algorithm \ref{alg:fluxClassical1} for the 1D case when the discontinuity of $V(x,y)$ aligns with the grids.

\subsection{The ODE form}\label{sec:interface2dODEform}
In the following, we convert the semi-discrete scheme \eqref{equ:2dimLiouvilleClassicalSemidis} and Algorithm \ref{alg:fluxClassical1} into ODE form \eqref{equ:ODElinear}. Assume that $N_x$, $N_y$, $N_\xi$ and $N_\eta$ are even numbers and the truncated interval in $x$, $y$, $\xi$ and $\eta$-direction are symmetric about $0$. (For asymmetrical case, the matrix $A$ in ODE \eqref{equ:ODElinear} will be slightly modified.) We also assume that interfaces don't appear at $x_\frac{1}{2}$, $x_{N_x+\frac12}$, $y_\frac{1}{2}$ and $y_{N_y+\frac12}$. Denote $\boldsymbol{f}(t)=\sum_{i=1}^{N_x}\sum_{j=1}^{N_y}\sum_{k=1}^{N_\xi}\sum_{l=1}^{N_\eta}f_{ijkl}(t)\ket{i-1}\ket{j-1}\ket{k-1}\ket{l-1}$.  This definition specifies not only the order of the equations but also the order of the variables. We can also arrange $\boldsymbol{f}$ in different orders, then we only need to change the corresponding columns and rows of $A$ to get the new matrix.

If $\xi_k>0$, for the term $\frac{\xi_k}{\Delta x}\left(f_{i+\frac{1}{2},jkl}^--f_{i-\frac{1}{2},jkl}^+\right)$ in \eqref{equ:2dimLiouvilleClassicalSemidis}, $f_{i+\frac{1}{2},jkl}^{-}=f_{ijkl}$ and $f^+_{i-\frac12,jkl}$ depends on the sign of $\xi_k^2+2\left(V_{i-\frac12,j}^+-V_{i-\frac12,j}^-\right)$. We use the step functions \eqref{equ:stepfun} to define the reflection and transmission coefficients at the interface
\begin{gather}
    a^T_{i+\frac{1}{2},jk}=g^T\left(\xi_k^2+2\left(V_{i+\frac12,j}^+-V_{i+\frac12,j}^-\right)\right),\\
    a^R_{i+\frac{1}{2},jk}=g^R\left(\xi_k^2+2\left(V_{i+\frac12,j}^+-V_{i+\frac12,j}^-\right)\right).
\end{gather}
Then $a^T_{i+\frac{1}{2},jk}+a^R_{i+\frac{1}{2},jk}=1$ and either $a^T_{i+\frac{1}{2},jk}$ or $a^R_{i+\frac{1}{2},jk}$ is $1$.

The search of which interval $\xi^-$ lies in involves a nonlinear operation. We need to use hat function \eqref{equ:hat} to encode this nonlinearity into the matrix $A$. 
Then $f_{i-\frac{1}{2},jkl}^{+}=a^{T}_{i-\frac{1}{2},jk}\sum_{k'=1}^{N_\xi}h\left(\xi^--\xi_{k'}\right)f_{i-1,jk'}+a^{R}_{i-\frac{1}{2},jk}f_{i,j,k_{1},l}=
a^{T}_{i-\frac{1}{2},jk}\sum_{k'=1}^{N_\xi}h\left(\sqrt{\xi_k^2+2\left(V_{i-\frac12,j}^+-V_{i-\frac12,j}^-\right)}-\xi_{k'}\right)f_{i-1,jk'l}+a^{R}_{i-\frac{1}{2},jk}f_{i,j,k_{1},l}$ and 
\begin{multline*}
    \frac{\xi_k}{\Delta x}\left(f_{i+\frac{1}{2},jk}^{-}-f_{i-\frac{1}{2},jkl}^{+}\right)=
    \frac{\xi_k}{\Delta x}f_{ijkl}\\
    -\frac{\xi_k}{\Delta x}a^{T}_{i-\frac{1}{2},jk}\sum_{k'=1}^{N_\xi}h\left(\sqrt{\xi_k^2+2\left(V_{i-\frac12,j}^+-V_{i-\frac12,j}^-\right)}-\xi_{k'}\right)f_{i-1,jk'l}
    -\frac{\xi_k}{\Delta x}a^{R}_{i-\frac{1}{2},jk}f_{i,j,k_{1},l},
\end{multline*}
where $k_{1}=N_\xi-k+1$.

If $\xi_k<0$, for the term $\frac{\xi_k}{\Delta x}\left(f_{i+\frac{1}{2},jkl}^{-}-f_{i-\frac{1}{2},jkl}^{+}\right)$ in \eqref{equ:2dimLiouvilleClassicalSemidis}, similarly, define
\begin{gather}
    a^T_{i+\frac{1}{2},jk}=g^T\left(\xi_k^2+2\left(V_{i+\frac12,j}^--V_{i+\frac12,j}^+\right)\right),\\
    a^R_{i+\frac{1}{2},jk}=g^R\left(\xi_k^2+2\left(V_{i+\frac12,j}^--V_{i+\frac12,j}^+\right)\right),
\end{gather}
then we can get 
\begin{multline*}
    \frac{\xi_k}{\Delta x}\left(f_{i+\frac{1}{2},jkl}^{-}-f_{i-\frac{1}{2},jkl}^{+}\right)=
    \frac{\xi_k}{\Delta x}a^{T}_{i+\frac{1}{2},jk}\sum_{k'=1}^{N_\xi}h\left(-\sqrt{\xi_k^2+2\left(V_{i+\frac12,j}^--V_{i+\frac12,j}^+\right)}-\xi_{k'}\right)f_{i+1,jk'l}\\ 
    +\frac{\xi_k}{\Delta x}a^{R}_{i+\frac{1}{2},jk}f_{i,j,k_{1},l}
    -\frac{\xi_k}{\Delta x}f_{ijkl},
\end{multline*}
where $k_{1}=N_\xi-k+1$.

Define 
\begin{equation}\label{equ:betaxi}
    \beta_{i+\frac12,j,k,k'}^\xi=\left\{\begin{array}{ll}
    h\left(\sqrt{\xi_k^2+2\left(V_{i+\frac12,j}^+-V_{i+\frac12,j}^-\right)}-\xi_{k'}\right),     \\
    i=0,\ldots,N_x,\quad j=1,\ldots,N_y,\quad k=\frac{N_\xi}{2}+1,\ldots,N_\xi,\quad k'=1,\ldots,N_\xi, \\
    h\left(-\sqrt{\xi_k^2+2\left(V_{i+\frac12,j}^--V_{i+\frac12,j}^+\right)}-\xi_{k'}\right),     \\
    i=0,\ldots,N_x,\quad j=1,\ldots,N_y,\quad k=1,\ldots, \frac{N_\xi}{2},\quad k'=1,\ldots,N_\xi. 
    \end{array}\right.
\end{equation}
Here we use $i+\frac12$ to show that $\beta_{i+\frac12,j,k,k'}^\xi$ is related to the coefficients $V^-_{i+\frac{1}{2},j}$ and $V^+_{i+\frac{1}{2},j}$ of the interface at $x_{i+\frac{1}{2},j}$.
Then
\begin{multline*}
    \frac{\xi_k}{\Delta x}\left(f_{i+\frac{1}{2},jkl}^{-}-f_{i-\frac{1}{2},jkl}^{+}\right)=\\
    \left\{\begin{array}{l}
        \frac{\xi_k}{\Delta x}\left(f_{ijkl}-a_{i-\frac12,jk}^T\sum_{k'=1}^{N_\xi}\beta_{i-\frac12,j,k,k'}^\xi f_{i-1,jk'l}-a_{i-\frac12,jk}^R f_{i,j,k_1,l}\right),     \\ 
        \qquad i=1,\ldots,N_x,\quad j=1,\ldots,N_y,\quad k=\frac{N_\xi}{2}+1,\ldots,N_\xi,\quad l=1,\ldots,N_\eta, \\
        \frac{\xi_k}{\Delta x}\left(a^{T}_{i+\frac{1}{2},jk}\sum_{k'=1}^{N_\xi}\beta_{i+\frac12,j,k,k'}^\xi f_{i+1,jk'l}
        +a^{R}_{i+\frac{1}{2},jk}f_{i,j,k_{1},l}-f_{ijkl}\right),    \\ 
        \qquad i=1,\ldots,N_x,\quad j=1,\ldots,N_y,\quad k=1,\ldots,\frac{N_\xi}{2},\quad l=1,\ldots,N_\eta.
    \end{array}
    \right.
\end{multline*}
When $V^-_{i+\frac{1}{2},j}=V^+_{i+\frac{1}{2},j}$ the above scheme degenerates to the upwind scheme. 
Since we have assumed that interfaces don't appear at $x_\frac{1}{2}$ and $x_{N_x+\frac12}$, we can use the boundary condition to give the value of $f_{0jkl}(t)$ and $f_{N_x+1,jkl}(t)$.
When $\xi_k>0$ (resp. $\xi_k<0$), we impose inflow boundary condition at $(x_0,y_j,\xi_k,\eta_l)$ (resp. $(x_{N_x+1},y_j,\xi_k,\eta_l)$) and outflow boundary condition at $(x_{N_x+1},y_j,\xi_k,\eta_l)$ (resp. $(x_0,y_j,\xi_k,\eta_l)$).
Write the above equations in the form $A_1\boldsymbol{f}+b_1$ of a product of a matrix and a vector with an inhomogeneous term representing boundary condition, where the equations and variables are in order of $\boldsymbol{f}$, we can get a $N_xN_yN_\xi N_\eta\times N_xN_yN_\xi N_\eta$ matrix
\begin{multline}\label{equ:mtx1}
    A_1=\sum_{i=1}^{N_x}\sum_{j=1}^{N_y}\sum_{k=\frac{N_\xi}{2}+1}^{N_\xi}\sum_{l=1}^{N_\eta}\frac{\xi_k}{\Delta x}\ket{i-1}\ket{j-1}\ket{k-1}\ket{l-1}\bra{i-1}\bra{j-1}\bra{k-1}\bra{l-1}\\
    -\sum_{i=2}^{N_x}\sum_{j=1}^{N_y}\sum_{k=\frac{N_\xi}{2}+1}^{N_\xi}\sum_{l=1}^{N_\eta}\sum_{k'=1}^{N_\xi}\frac{\xi_k}{\Delta x}a_{i-\frac12,jk}^T\beta_{i-\frac12,j,k,k'}^\xi\ket{i-1}\ket{j-1}\ket{k-1}\ket{l-1}\bra{i-2}\bra{j-1}\bra{k'-1}\bra{l-1}\\
    -\sum_{i=2}^{N_x}\sum_{j=1}^{N_y}\sum_{k=\frac{N_\xi}{2}+1}^{N_\xi}\sum_{l=1}^{N_\eta}\frac{\xi_k}{\Delta x}a_{i-\frac12,jk}^R\ket{i-1}\ket{j-1}\ket{k-1}\ket{l-1}\bra{i-1}\bra{j-1}\bra{N_\xi-k}\bra{l-1}\\
    +\sum_{i=1}^{N_x-1}\sum_{j=1}^{N_y}\sum_{k=1}^{\frac{N_\xi}{2}}\sum_{l=1}^{N_\eta}\sum_{k'=1}^{N_\xi}\frac{\xi_k}{\Delta x}a^{T}_{i+\frac{1}{2},jk}\beta_{i+\frac12,j,k,k'}^\xi\ket{i-1}\ket{j-1}\ket{k-1}\ket{l-1}\bra{i}\bra{j-1}\bra{k'-1}\bra{l-1}\\
    +\sum_{i=1}^{N_x-1}\sum_{j=1}^{N_y}\sum_{k=1}^{\frac{N_\xi}{2}}\sum_{l=1}^{N_\eta}\frac{\xi_k}{\Delta x}a^{R}_{i+\frac{1}{2},jk}\ket{i-1}\ket{j-1}\ket{k-1}\ket{l-1}\bra{i-1}\bra{j-1}\bra{N_\xi-k}\bra{l-1}\\
    -\sum_{i=1}^{N_x}\sum_{j=1}^{N_y}\sum_{k=1}^{\frac{N_\xi}{2}}\sum_{l=1}^{N_\eta}\frac{\xi_k}{\Delta x}\ket{i-1}\ket{j-1}\ket{k-1}\ket{l-1}\bra{i-1}\bra{j-1}\bra{k-1}\bra{l-1},
\end{multline}
and a $N_xN_yN_\xi N_\eta$-dimensional vector 
\begin{multline}\label{equ:inhomogeneoustermx1}
    b_1=-\sum_{j=1}^{N_y}\sum_{k=\frac{N_\xi}{2}+1}^{N_\xi}\sum_{l=1}^{N_\eta}\frac{\xi_{k}}{\Delta x}f_{0jkl}(t)\ket{0}\ket{j-1}\ket{k-1}\ket{l-1}\\
    +\sum_{j=1}^{N_y}\sum_{k=1}^{\frac{N_\xi}{2}}\sum_{l=1}^{N_\eta}\frac{\xi_{k}}{\Delta x}f_{N_x+1, jkl}(t)\ket{N_x-1}\ket{j-1}\ket{k-1}\ket{l-1}.
\end{multline}
To facilitate subsequent complexity analysis, we discuss the sparsity of $A_1$. The sparsity in row is easy to compute $s_r(A_1)\le 3$. Denote the set of interfaces vertical to the $x$-axis as $\mathcal{I}_x=\{(x_{i+\frac{1}{2}},y_j):V(x,y)\text{ is discontinuous on }(x_{i+\frac{1}{2}},y_j)\}$. Then 
\[s_c(A_1)\le2\left\lceil\sqrt{\frac54+\frac{2}{\Delta\xi}\sqrt{\left(\frac{\Delta\xi}{2}\right)^2+2\max_{(x,y)\in\mathcal{I}_x}\left\{|V^-(x,y)-V^+(x,y)|\right\}}}-\frac12\right\rceil+2:=Q_1.\]

Similarly, for the term $\frac{\eta_{l}}{\Delta y}\left(f_{i,j+\frac{1}{2},kl}^{-}-f_{i,j-\frac{1}{2},kl}^{+}\right)$ in \eqref{equ:2dimLiouvilleClassicalSemidis},
when $\eta_l>0$, define
\begin{gather}
    a^T_{i,j+\frac{1}{2},l}=g^T\left(\eta_l^2+2\left(V_{i,j+\frac12}^+-V_{i,j+\frac12}^-\right)\right),\\
    a^R_{i,j+\frac{1}{2},l}=g^R\left(\eta_l^2+2\left(V_{i,j+\frac12}^+-V_{i,j+\frac12}^-\right)\right),
\end{gather}
when $\eta_l<0$,
\begin{gather}
    a^T_{i,j+\frac{1}{2},l}=g^T\left(\eta_l^2+2\left(V_{i,j+\frac12}^--V_{i,j+\frac12}^+\right)\right),\\
    a^R_{i,j+\frac{1}{2},l}=g^R\left(\eta_l^2+2\left(V_{i,j+\frac12}^--V_{i,j+\frac12}^+\right)\right),
\end{gather}
and
\begin{equation}\label{equ:betaeta}
    \beta_{i,j+\frac12,l,l'}^\eta=\left\{\begin{array}{ll}
    h\left(\sqrt{\eta_l^2+2\left(V_{i,j+\frac12}^+-V_{i,j+\frac12}^-\right)}-\eta_{l'}\right),     \\
    i=1,\ldots,N_x,\quad j=0,\ldots,N_y,\quad l=\frac{N_\eta}{2}+1,\ldots,N_\eta,\quad l'=1,\ldots,N_\eta, \\
    h\left(-\sqrt{\eta_l^2+2\left(V_{i,j+\frac12}^--V_{i,j+\frac12}^+\right)}-\eta_{l'}\right),     \\
    i=1,\ldots,N_x,\quad j=0,\ldots,N_y,\quad l=1,\ldots,\frac{N_\eta}{2},\quad l'=1,\ldots,N_\eta. 
    \end{array}\right.
\end{equation}
Then
\begin{multline*}
    \frac{\eta_{l}}{\Delta y}\left(f_{i,j+\frac{1}{2},kl}^{-}-f_{i,j-\frac{1}{2},kl}^{+}\right)=\\
    \left\{\begin{array}{l}
        \frac{\eta_{l}}{\Delta y}\left(f_{ijkl}-a_{i,j-\frac12,l}^T\sum_{l'=1}^{N_\eta}\beta_{i,j-\frac12,l,l'}^\eta f_{i,j-1,k,l'}-a_{i,j-\frac12,l}^R f_{i,j,k,l_1}\right),     \\ 
        \qquad i=1,\ldots,N_x,\quad j=1,\ldots,N_y,\quad k=1,\ldots,N_\xi,\quad l=\frac{N_\eta}{2}+1,\ldots,N_\eta, \\
        \frac{\eta_{l}}{\Delta y}\left(a^{T}_{i,j+\frac{1}{2},l}\sum_{l'=1}^{N_\eta}\beta_{i,j+\frac12,l,l'}^\eta f_{i,j+1,k,l'}
        +a^{R}_{i,j+\frac{1}{2},l}f_{i,j,k,l_{1}}-f_{ijkl}\right),    \\ 
        \qquad i=1,\ldots,N_x,\quad j=1,\ldots,N_y,\quad k=1,\ldots,N_\xi,\quad l=1,\ldots,\frac{N_\eta}{2}.
    \end{array}
    \right.
\end{multline*}
Write the above equations in the form $A_2\boldsymbol{f}+b_2$ of a product of a matrix and a vector with an inhomogeneous term representing boundary condition, where the equations and variables are in order of $\boldsymbol{f}$, we can get a $N_xN_yN_\xi N_\eta\times N_xN_yN_\xi N_\eta$ matrix
\begin{multline}\label{equ:mtx2}
    A_2=\sum_{i=1}^{N_x}\sum_{j=1}^{N_y}\sum_{k=1}^{N_\xi}\sum_{l=\frac{N_\eta}{2}+1}^{N_\eta}\frac{\eta_{l}}{\Delta y}\ket{i-1}\ket{j-1}\ket{k-1}\ket{l-1}\bra{i-1}\bra{j-1}\bra{k-1}\bra{l-1}\\
    -\sum_{i=1}^{N_x}\sum_{j=2}^{N_y}\sum_{k=1}^{N_\xi}\sum_{l=\frac{N_\eta}{2}+1}^{N_\eta}\sum_{l'=1}^{N_\eta}\frac{\eta_{l}}{\Delta y}a_{i,j-\frac12,l}^T\beta_{i,j-\frac12,l,l'}^\eta\ket{i-1}\ket{j-1}\ket{k-1}\ket{l-1}\bra{i-1}\bra{j-2}\bra{k-1}\bra{l'-1}\\
    -\sum_{i=1}^{N_x}\sum_{j=2}^{N_y}\sum_{k=1}^{N_\xi}\sum_{l=\frac{N_\eta}{2}+1}^{N_\eta}\frac{\eta_{l}}{\Delta y}a_{i,j-\frac12,l}^R\ket{i-1}\ket{j-1}\ket{k-1}\ket{l-1}\bra{i-1}\bra{j-1}\bra{k-1}\bra{N_\eta-l}\\
    +\sum_{i=1}^{N_x}\sum_{j=1}^{N_y-1}\sum_{k=1}^{N_\xi}\sum_{l=1}^{\frac{N_\eta}{2}}\sum_{l'=1}^{N_\eta}\frac{\eta_{l}}{\Delta y}a^{T}_{i,j+\frac{1}{2},l}\beta_{i,j+\frac12,l,l'}^\eta\ket{i-1}\ket{j-1}\ket{k-1}\ket{l-1}\bra{i-1}\bra{j}\bra{k-1}\bra{l'-1}\\
    +\sum_{i=1}^{N_x}\sum_{j=1}^{N_y-1}\sum_{k=1}^{N_\xi}\sum_{l=1}^{\frac{N_\eta}{2}}\frac{\eta_{l}}{\Delta y}a^{R}_{i,j+\frac{1}{2},l}\ket{i-1}\ket{j-1}\ket{k-1}\ket{l-1}\bra{i-1}\bra{j-1}\bra{k-1}\bra{N_\eta-l}\\
    -\sum_{i=1}^{N_x}\sum_{j=1}^{N_y}\sum_{k=1}^{N_\xi}\sum_{l=1}^{\frac{N_\eta}{2}}\frac{\eta_{l}}{\Delta y}\ket{i-1}\ket{j-1}\ket{k-1}\ket{l-1}\bra{i-1}\bra{j-1}\bra{k-1}\bra{l-1},
\end{multline}
and a $N_xN_yN_\xi N_\eta$-dimensional vector 
\begin{multline}\label{equ:inhomogeneoustermx2}
    b_2=-\sum_{i=1}^{N_x}\sum_{k=1}^{N_\xi}\sum_{l=\frac{N_\eta}{2}+1}^{N_\eta}\frac{\eta_{l}}{\Delta y}f_{i0kl}(t)\ket{i-1}\ket{0}\ket{k-1}\ket{l-1}\\
    +\sum_{i=1}^{N_x}\sum_{k=1}^{N_\xi}\sum_{l=1}^{\frac{N_\eta}{2}}\frac{\eta_{l}}{\Delta y}f_{i,N_y+1,k,l}(t)\ket{i-1}\ket{N_y-1}\ket{k-1}\ket{l-1}.
\end{multline}
The sparsity in row is easy to compute $s_r(A_2)\le 3$. Denote the set of interfaces vertical to the $y$-axis as $\mathcal{I}_y=\{(x_i,y_{j+\frac{1}{2}}):V(x,y)\text{ is discontinuous on }(x_i,y_{j+\frac{1}{2}})\}$. Then 
\[s_c(A_2)\le2\left\lceil\sqrt{\frac54+\frac{2}{\Delta\eta}\sqrt{\left(\frac{\Delta\eta}{2}\right)^2+2\max_{(x,y)\in\mathcal{I}_y}\left\{|V^-(x,y)-V^+(x,y)|\right\}}}-\frac12\right\rceil+2:=Q_2.\]

For the term $-\frac{V_{i+\frac{1}{2},j}^{-}-V_{i-\frac{1}{2},j}^{+}}{\Delta x\Delta\xi}\left(f_{ij,k+\frac{1}{2},l}-f_{ij,k-\frac{1}{2},l}\right)$ in \eqref{equ:2dimLiouvilleClassicalSemidis}, define 
\begin{equation*}
    d_{ij}^x=-\frac{V_{i+\frac{1}{2},j}^{-}-V_{i-\frac{1}{2},j}^{+}}{\Delta x\Delta\xi}.
\end{equation*}
Since the numerical fluxes $f_{ij,k+\frac{1}{2},l},f_{ij,k-\frac{1}{2},l}$ are defined using the upwind discretization, we can get
\begin{equation*}
    d_{ij}^x\left(f_{ij,k+\frac{1}{2},l}-f_{ij,k-\frac{1}{2},l}\right)=-\frac{|d_{ij}^x|+d_{ij}^x}{2}f_{ij,k-1,l}(t)+|d_{ij}^x|f_{ijkl}(t)-\frac{|d_{ij}^x|-d_{ij}^x}{2}f_{ij,k+1,l}(t).
\end{equation*}
When $d_{ij}^x>0$ (resp. $d_{ij}^x<0$), we impose inflow boundary condition at $(x_i,y_j,\xi_0,\eta_l)$ (resp. $(x_i,y_j,\xi_{N_\xi+1},\eta_l)$) and outflow boundary condition at $(x_i,y_j,\xi_{N_\xi+1},\eta_l)$ (resp. $(x_i,y_j,\xi_0,\eta_l)$).
Write the above equations in the form $A_3\boldsymbol{f}+b_3$ of a product of a matrix and a vector with an inhomogeneous term representing boundary condition, where the equations and variables are in order of $\boldsymbol{f}$, we can get a $N_xN_yN_\xi N_\eta\times N_xN_yN_\xi N_\eta$ matrix 
\begin{multline}\label{equ:mtx3}
    A_3=-\sum_{i=1}^{N_x}\sum_{j=1}^{N_y}\sum_{k=2}^{N_\xi}\sum_{l=1}^{N_\eta}\frac{|d_{ij}^x|+d_{ij}^x}{2}\ket{i-1}\ket{j-1}\ket{k-1}\ket{l-1}\bra{i-1}\bra{j-1}\bra{k-2}\bra{l-1}\\
    +\sum_{i=1}^{N_x}\sum_{j=1}^{N_y}\sum_{k=1}^{N_\xi}\sum_{l=1}^{N_\eta}|d_{ij}^x|\ket{i-1}\ket{j-1}\ket{k-1}\ket{l-1}\bra{i-1}\bra{j-1}\bra{k-1}\bra{l-1}\\ 
    -\sum_{i=1}^{N_x}\sum_{j=1}^{N_y}\sum_{k=1}^{N_\xi-1}\sum_{l=1}^{N_\eta}\frac{|d_{ij}^x|-d_{ij}^x}{2}\ket{i-1}\ket{j-1}\ket{k-1}\ket{l-1}\bra{i-1}\bra{j-1}\bra{k}\bra{l-1},
\end{multline}
and a $N_xN_yN_\xi N_\eta$-dimensional vector
\begin{multline}\label{equ:inhomogeneoustermxi1}
    b_3=-\sum_{i=1}^{N_x}\sum_{j=1}^{N_y}\sum_{l=1}^{N_\eta}\frac{|d_{ij}^x|+d_{ij}^x}{2}f_{ij0l}(t)\ket{i-1}\ket{j-1}\ket{0}\ket{l-1}\\ 
    -\sum_{i=1}^{N_x}\sum_{j=1}^{N_y}\sum_{l=1}^{N_\eta}\frac{|d_{ij}^x|-d_{ij}^x}{2}f_{ij,N_\xi+1,l}(t)\ket{i-1}\ket{j-1}\ket{N_\xi-1}\ket{l-1}.
\end{multline}

Similarly, for the term $-\frac{V_{i,j+\frac{1}{2}}^{-}-V_{i,j-\frac{1}{2}}^{+}}{\Delta y\Delta\eta}\left(f_{ijk,l+\frac{1}{2}}-f_{ijk,l-\frac{1}{2}}\right)$ in \eqref{equ:2dimLiouvilleClassicalSemidis}, define 
\begin{equation*}
    d_{ij}^y=-\frac{V_{i,j+\frac{1}{2}}^{-}-V_{i,j-\frac{1}{2}}^{+}}{\Delta y\Delta\eta}.
\end{equation*}
Then
\begin{equation*}
    d_{ij}^y\left(f_{ijk,l+\frac{1}{2}}-f_{ijk,l-\frac{1}{2}}\right)=-\frac{|d_{ij}^y|+d_{ij}^y}{2}f_{ijk,l-1}(t)+|d_{ij}^y|f_{ijkl}(t)-\frac{|d_{ij}^y|-d_{ij}^y}{2}f_{ijk,l+1}(t).
\end{equation*}
Write the above equations in the form $A_4\boldsymbol{f}+b_4$ of a product of a matrix and a vector with an inhomogeneous term representing boundary condition, where the equations and variables are in order of $\boldsymbol{f}$, we can get a $N_xN_yN_\xi N_\eta\times N_xN_yN_\xi N_\eta$ matrix 
\begin{multline}\label{equ:mtx4}
    A_4=-\sum_{i=1}^{N_x}\sum_{j=1}^{N_y}\sum_{k=1}^{N_\xi}\sum_{l=2}^{N_\eta}\frac{|d_{ij}^y|+d_{ij}^y}{2}\ket{i-1}\ket{j-1}\ket{k-1}\ket{l-1}\bra{i-1}\bra{j-1}\bra{k-1}\bra{l-2}\\
    +\sum_{i=1}^{N_x}\sum_{j=1}^{N_y}\sum_{k=1}^{N_\xi}\sum_{l=1}^{N_\eta}|d_{ij}^y|\ket{i-1}\ket{j-1}\ket{k-1}\ket{l-1}\bra{i-1}\bra{j-1}\bra{k-1}\bra{l-1}\\ 
    -\sum_{i=1}^{N_x}\sum_{j=1}^{N_y}\sum_{k=1}^{N_\xi}\sum_{l=1}^{N_\eta-1}\frac{|d_{ij}^y|-d_{ij}^y}{2}\ket{i-1}\ket{j-1}\ket{k-1}\ket{l-1}\bra{i-1}\bra{j-1}\bra{k-1}\bra{l},
\end{multline}
and a $N_xN_yN_\xi N_\eta$-dimensional vector
\begin{multline}\label{equ:inhomogeneoustermxi2}
    b_4=-\sum_{i=1}^{N_x}\sum_{j=1}^{N_y}\sum_{k=1}^{N_\xi}\frac{|d_{ij}^y|+d_{ij}^y}{2}f_{ijk0}(t)\ket{i-1}\ket{j-1}\ket{k-1}\ket{0}\\ 
    -\sum_{i=1}^{N_x}\sum_{j=1}^{N_y}\sum_{k=1}^{N_\xi}\frac{|d_{ij}^y|-d_{ij}^y}{2}f_{ijk,N_\eta+1}(t)\ket{i-1}\ket{j-1}\ket{k-1}\ket{N_\eta-1}.
\end{multline}

Combining \eqref{equ:mtx1}, 
\eqref{equ:mtx2}, 
\eqref{equ:mtx3}, \eqref{equ:mtx4}, \eqref{equ:inhomogeneoustermx1}, \eqref{equ:inhomogeneoustermx2}, \eqref{equ:inhomogeneoustermxi1} and \eqref{equ:inhomogeneoustermxi2}, we can get the ODE $\boldsymbol{f}'(t)=A\boldsymbol{f}(t)+\boldsymbol{b}(t)$ where $A=-A_1-A_2-A_3-A_4$ and $\boldsymbol{b}(t)=-b_1-b_2-b_3-b_4$.

\subsection{Schr\"odingerization}\label{sec:interface1Schro2d}
We first write the above ODE into the same homogeneous form \eqref{equ:interface1homogeneous} where $\boldsymbol{0}$ is a $N_xN_yN_\xi N_\eta\times N_xN_yN_\xi N_\eta$ matrix all of $0$ and $\boldsymbol{1}$ is a $N_xN_yN_\xi N_\eta$-dimensional vector all of $1$. The subsequent steps are the same as \eqref{equ:procedure}.

The following theorem gives the complexity of Hamiltonian-preserving scheme combined with Schr\"odingerization in two dimensional case.

\begin{theorem}\label{thm:2dim_complexity1}
    In the 2-dimensional case, given the initial quantum state $\ket{\boldsymbol{u}(0)}$, assume that the $x$ grid number, $y$ grid number, $\xi$ grid number, $\eta$ grid number and extended $p$ grid number are $N_x=2^{n_x}$, $N_y=2^{n_y}$, $N_\xi=2^{n_\xi}$, $N_\eta=2^{n_\eta}$ and $N_p=2^{n_p}$ respectively, $R-L=\mathcal O(1)$, the inhomogeneous term $\boldsymbol{b}$ is independent of time and the errors of the Hamiltonian-preserving scheme, spectral method in $p$ and Hamiltonian simulation are all $\epsilon$. With the Schr\"odingerization method, the state $\ket{\boldsymbol{u}(t)}$ can be simulated to time $T$, and success probability is at least $1-2\epsilon$ with
    \begin{itemize}
        \item $\mathcal{O}((Q_1+Q_2)T\epsilon^{-2}+\frac{\log\epsilon^{-1}}{\log\log\epsilon^{-1}})$ queries and a factor $\mathcal{O}(5\log(\epsilon^{-1})+\log(\epsilon^{-1})\polylog(\log(\epsilon^{-1})))$ additional quantum gates, when the solution has discontinuities only at interfaces,
        \item $\mathcal{O}((Q_1+Q_2)T\epsilon^{-3}+\frac{\log\epsilon^{-1}}{\log\log\epsilon^{-1}})$ queries and a factor $\mathcal{O}(9\log(\epsilon^{-1})+\log(\epsilon^{-1})\polylog(\log(\epsilon^{-1})))$ additional quantum gates, when the solution has discontinuities not only at interfaces,
    \end{itemize}
    where \[
Q_1:=2\left\lceil\sqrt{\frac54+\frac{2}{\Delta\xi}\sqrt{\left(\frac{\Delta\xi}{2}\right)^2+2\max_{(x,y)\in\mathcal{I}_x}\left\{|V^-(x,y)-V^+(x,y)|\right\}}}-\frac12\right\rceil+2\ge s_c(A_1)\ge 2
\]
and \[
Q_2:=2\left\lceil\sqrt{\frac54+\frac{2}{\Delta\eta}\sqrt{\left(\frac{\Delta\eta}{2}\right)^2+2\max_{(x,y)\in\mathcal{I}_y}\left\{|V^-(x,y)-V^+(x,y)|\right\}}}-\frac12\right\rceil+2\ge s_c(A_2)\ge 2.
\]
\end{theorem}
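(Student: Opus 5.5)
The plan is to mirror the proof of Theorem \ref{thm:1dim_complexity1} and apply Lemma \ref{lem:complexity} to the Schr\"odingerized Hamiltonian $H=H_1\otimes D_\mu-H_2\otimes I$, where $H_1,H_2$ are built from the homogenized system \eqref{equ:interface1homogeneous} with $A=-A_1-A_2-A_3-A_4$. Matrix entries are stored to $m=\mathcal{O}(\log\epsilon^{-1})$ bits. The three ingredients needed are a bound on $s(H)$, a bound on $\|H\|_{\max}$, and the register size $n=n_x+n_y+n_\xi+n_\eta+n_p+\mathcal{O}(1)$.

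\textbf{Sparsity.} Since $D_\mu$ and $I$ are diagonal, $s(H)\le \max\{s(A+A^\dagger),s(A-A^\dagger)\}+1$, where the $+1$ accounts for the diagonal block $\mathrm{diag}(\boldsymbol b)/\delta$. For a single non-Hermitian block we have $s(A_m\pm A_m^\dagger)\le s_r(A_m)+s_c(A_m)$. The diagonal entries of $A_1,\dots,A_4$ all sit at the same position, so they can be counted once.
\begin{itemize}
\item For $A_1$ and $A_2$, we use $s_r\le 3$, together with $s_c(A_1)\le Q_1$ and $s_c(A_2)\le Q_2$ as stated before the theorem. These column bounds follow by the 1D argument (Appendix \ref{app:A_1}) applied slice by slice in the passive indices $(j,l)$, respectively $(i,k)$.
\item $A_3$ and $A_4$ are tridiagonal in $\xi$, respectively $\eta$, and hence symmetric in pattern. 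Each contributes at most two off-diagonal entries per row and column.
\end{itemize}
Combining these gives $s(H)\le Q_1+Q_2+\mathcal{O}(1)=\mathcal{O}(Q_1+Q_2)$, using $Q_1,Q_2\ge2$.

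\textbf{Max norm and grid sizes.} We have $\|H\|_{\max}\le \|H_1\|_{\max}\mu_{N_p/2}+\|H_2\|_{\max}=\mathcal{O}(\|A\|_{\max}/\Delta p)$. From \eqref{equ:mtx1}--\eqref{equ:mtx4}, $\|A\|_{\max}=\mathcal{O}(\max|\xi|/\Delta x+\max|\eta|/\Delta y+|d^x|+|d^y|)=\mathcal{O}(N_x+N_y+N_\xi+N_\eta)$, assuming bounded velocity domain and bounded (piecewise smooth) $V$. The coefficients $d^x,d^y$ involve only differences of $V$ within a cell, which are $\mathcal{O}(\Delta x)$, so $|d^x|=\mathcal{O}(1/\Delta\xi)$. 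The initial profile $e^{-|p|}$ is only Lipschitz, so as in 1D we take $\Delta p=\mathcal{O}(\epsilon)$ with $R-L=\mathcal{O}(1)$.
\begin{itemize}
\item When discontinuities lie only at interfaces, the scheme is first order (Remark \ref{rmk:convergencerate}). Then $N_x=N_y=N_\xi=N_\eta=\mathcal{O}(\epsilon^{-1})$, so $\|H\|_{\max}=\mathcal{O}(\epsilon^{-2})$. Lemma \ref{lem:complexity} gives $\mathcal{O}((Q_1+Q_2)T\epsilon^{-2}+\log\epsilon^{-1}/\log\log\epsilon^{-1})$ queries.
\item Otherwise the scheme is half order, so each $N=\mathcal{O}(\epsilon^{-2})$. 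This gives $\|H\|_{\max}=\mathcal{O}(\epsilon^{-3})$ and the $\epsilon^{-3}$ query bound.
\end{itemize}

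\textbf{Gate count.} The register size is $n=4\log\epsilon^{-1}+\log\epsilon^{-1}$ in the first case and $4\cdot2\log\epsilon^{-1}+\log\epsilon^{-1}$ in the second. These are the stated $5\log\epsilon^{-1}$ and $9\log\epsilon^{-1}$. We add $m\,\mathrm{polylog}(m)$, and the QFT/IQFT on $p$ costs $\mathcal{O}(n_p\log n_p)$ \cite{namApproximateQuantumFourier2020}, which is absorbed into the polylog term.

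\textbf{Main obstacle.} The delicate step is justifying $s_c(A_1)\le Q_1$ in 2D. The concern is whether columns of $A_1$ can accumulate hat-function contributions from several interfaces at once. We handle this by noting that a column index $(i-1,j,k',l)$ receives transmitted contributions only from rows $(i,j,\cdot,l)$ through interface $x_{i-\frac12}$, or from rows $(i-2,j,\cdot,l)$ through $x_{i-\frac32}$. The sign split between $B^u$ and $B^l$, as in the 1D analysis of Appendix \ref{app:B_i}, ensures that only one of these two is ever large. It therefore reduces exactly to the 1D bound with the maximum jump taken over $\mathcal{I}_x$.
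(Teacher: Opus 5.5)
Your proposal is correct and follows essentially the same route as the paper's proof. You apply Lemma \ref{lem:complexity} with $s(H)=\mathcal{O}(Q_1+Q_2)$, obtained from $s_r+s_c$ bounds on $A_1,A_2$ together with the tridiagonal $A_3,A_4$. You use $\|H\|_{\max}=\mathcal{O}((N_x+N_y+N_\xi+N_\eta)/\Delta p)$ with $\Delta p=\mathcal{O}(\epsilon)$, grid sizes $\mathcal{O}(\epsilon^{-1})$ or $\mathcal{O}(\epsilon^{-2})$, and register counts $5\log\epsilon^{-1}$ or $9\log\epsilon^{-1}$, exactly as the paper does.

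One correction to your ``main obstacle'' paragraph. The two transmitted contributions to a column $(i-1,j,k',l)$ do not add up, but the reason is not that only one of them is large; in a potential well both are large at the same time. The actual reason is that they fall in opposite halves of the $\xi$-grid, apart from the column adjacent to $\xi=0$. Rows with $\xi_k>0$ acting through $x_{i-\frac12}$ hit only columns with $\xi_{k'}>0$. Rows with $\xi_k<0$ acting through $x_{i-\frac32}$ hit only columns with $\xi_{k'}<0$. This is the argument of Appendix \ref{app:A_1}, not Appendix \ref{app:B_i}. It is what gives the exact inequality $Q_1\ge s_c(A_1)$ in the statement, although a cruder count would change only constants in the query bound.
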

\begin{proof}
    We use Lemma \ref{lem:complexity} to give the proof. Matrix elements are specified to $\log(\epsilon^{-1})$ bits of precision.
    
    From previous discussion, 
    \begin{equation*}
        \begin{aligned}
            s(H)=s(H_1\otimes D_\mu-H_2\otimes I)&\le \max\{s(A+A^\dagger),s(A-A^\dagger)\}+s(\text{diag}(\boldsymbol{b}(t))/\delta)\\
            &\le \max\{s(A_1+A_1^\dagger),s(A_1-A_1^\dagger)\}+\max\{s(A_2+A_2^\dagger),s(A_2-A_2^\dagger)\}\\
            &\quad +\max\{s(A_3+A_3^\dagger),s(A_3-A_3^\dagger)\}+\max\{s(A_4+A_4^\dagger),s(A_4-A_4^\dagger)\}\\ 
            &\quad -3+1\\
            &\le (s_r(A_1)+s_c(A_1)-1)+(s_r(A_2)+s_c(A_2)-1)+(3)\\
            &\le Q_1+Q_2+2+2+3=Q_1+Q_2+7.
        \end{aligned}
    \end{equation*}
    In the $p$-direction, the initial value is only continuous but not differentiable. According to spectral method, in order to reach precision $\epsilon$, the mesh size $\Delta p$ is $\mathcal{O}(\epsilon)$.
    The total number of points in $x$, $y$, $\xi$ and $\eta$-direction is related to where the discontinuities lies.
    \begin{itemize}
        \item when the solution has discontinuities only at interfaces, the Hamiltonian-preserving scheme can give the first order precision. Thus $N_x=N_y=N_\xi=N_\eta=\mathcal{O}(\epsilon^{-1})$ and $\|H\|_{\max}\le\|H_{1}\otimes D_{\mu}\|_{\max}+\|H_{2}\|_{\max}\le\mathcal{O}(\left\|H_{1}\right\|_{\max}\mu_{N_p/2})=\mathcal{O}((N_x+N_y+N_\xi+N_\eta)/\Delta p)=\mathcal{O}(\epsilon^{-2})$. The query complexity is 
        \begin{equation*}
            \mathcal{O}((Q_1+Q_2)T\epsilon^{-2}+\frac{\log\epsilon^{-1}}{\log\log\epsilon^{-1}}).
        \end{equation*} 
        Since the quantum Fourier transform in Schr\"odingerization can be implemented by using $\mathcal{O}(m\log m)$ gates \cite{namApproximateQuantumFourier2020}, the number of overall additional quantum gates is 
        \begin{equation*}
            \mathcal{O}(5\log(\epsilon^{-1})+\log(\epsilon^{-1})\polylog(\log(\epsilon^{-1}))+\log(\epsilon^{-1})\log(\log(\epsilon^{-1}))).
        \end{equation*}
        \item when the solution has discontinuities not only at interfaces, the Hamiltonian-preserving scheme has only  half-order precision. Thus $N_x=N_y=N_\xi=N_\eta=\mathcal{O}(\epsilon^{-2})$ and $\|H\|_{\max}\le\mathcal{O}(\epsilon^{-3})$. The query complexity is 
        \begin{equation*} 
            \mathcal{O}((Q_1+Q_2)T\epsilon^{-3}+\frac{\log\epsilon^{-1}}{\log\log\epsilon^{-1}}),
        \end{equation*} 
        and the number of overall additional quantum gates is 
        \begin{equation*}
            \mathcal{O}(9\log(\epsilon^{-1})+\log(\epsilon^{-1})\polylog(\log(\epsilon^{-1}))+\log(\epsilon^{-1})\log(\log(\epsilon^{-1}))).
        \end{equation*}
    \end{itemize}
\end{proof}

\begin{remark}
    In the classical implementation of the Hamiltonian-preserving schemes, one can use any time discretization for the time derivative. Here we use the forward Euler time discretization as an example to analyze the classical complexity which contains the number of products and quotients. After discretization, the ODE $\boldsymbol{f}'(t)=A\boldsymbol{f}(t)+\boldsymbol{b}(t)$ becomes $\boldsymbol{f}^{n+1}=(\Delta tA+I)\boldsymbol{f}^n+\Delta t\boldsymbol{b}$ for $n=0,\ldots,{N_t}-1$. To satisfy the hyperbolic CFL condition
    \begin{equation*}
         \Delta t\max_{i,j,k,l}\left[\frac{|\xi_{k}|}{\Delta x}+\frac{|\eta_{l}|}{\Delta y}+\frac{\left|V_{i+\frac{1}{2},j}^{-}-V_{i-\frac{1}{2},j}^{+}\right|}{\Delta x\Delta\xi}+\frac{\left|V_{i,j+\frac{1}{2}}^{-}-V_{i,j-\frac{1}{2}}^{+}\right|}{\Delta y\Delta\eta}\right]\leqslant1,
    \end{equation*}
    a time step $\Delta t=\mathcal{O}(\Delta x,\Delta y,\Delta \xi,\Delta \eta)$ is needed \cite{jinHamiltonianPreservingSchemesLiouville2005}, so we take $N_t=\mathcal{O}(TN_x)$. Then the classical complexity is 
    \begin{equation}
        \mathcal{O}(s_r(\Delta tA+I)N_xN_y N_\xi N_\eta N_t)=\begin{cases}
            \mathcal{O}(7T\epsilon^{-5}) & \text{when the solution has discontinuities only at interfaces},\\
            \mathcal{O}(7T\epsilon^{-10})& \text{when the solution has discontinuities not only at interfaces}.
        \end{cases}
    \end{equation}
    There is a {\it polynomial} advantage in $\epsilon$ for quantum algorithms.
\end{remark}

\section{Quantum simulation of Liouville equation with interface condition \eqref{equ:ddimLiouvilleClassicalinterfacecondition} in $n$ spatial dimensions}\label{sec:interfacend}

In this section, we extend the construction in Sections \ref{sec:interface1d} and \ref{sec:interface2d} to the $n$-dimensional Liouville equation
\begin{equation}
    f_t+\sum_{\alpha=1}^n v_\alpha \partial_{x_\alpha} f
    -\sum_{\alpha=1}^n \partial_{x_\alpha}V\,\partial_{v_\alpha}f=0,
    \qquad t>0,\quad \boldsymbol{x},\boldsymbol{v}\in \mathbb R^n .
    \label{equ:nD_Liouville}
\end{equation}
We assume that the discontinuities of $V$ are aligned with the Cartesian grids. More precisely, each
discontinuous interface is contained in a grid face normal to one of the coordinate directions. Across a
face normal to the $x_\alpha$-direction, the tangential velocity components remain unchanged, while the
normal velocity component is determined by the constant Hamiltonian condition
\begin{equation}
    \frac12 (v_\alpha^+)^2+V^+
    =
    \frac12 (v_\alpha^-)^2+V^-,
    \qquad
    v_\beta^+=v_\beta^- ,\quad \beta\neq \alpha .
    \label{equ:nD_HP_condition}
\end{equation}
Thus the interface condition for the density function is
\begin{equation}
    f(t,\boldsymbol{x}^+,v_\alpha^+,v_\perp)
    =
    f(t,\boldsymbol{x}^-,v_\alpha^-,v_\perp),
    \label{equ:nD_interface_condition}
\end{equation}
where $v_\perp=(v_1,\ldots,v_{\alpha-1},v_{\alpha+1},\ldots,v_n)$ denotes the tangential velocity
components.

This \(n\)-dimensional extension is another new component compared with the geometrical optics work \cite{jinQuantumSimulationLiouville2026}, which treated the one and two dimensional constructions. In the present setting, each coordinate direction may contain grid aligned potential interfaces, and the normal velocity update in each direction is governed by the corresponding energy-threshold rule. The analysis below therefore keeps track of the sparsity contributed by each spatial direction and combines them to obtain the \(n\)-dimensional sparse-access query complexity.

\subsection{The original algorithm}

For each $\alpha=1,\ldots,n$, we use a uniform mesh in the $x_\alpha$-direction and in the
$v_\alpha$-direction. The grid points are denoted by
\[
    x_{\alpha,i_\alpha+\frac12},\qquad i_\alpha=0,\ldots,N_{x_\alpha},
    \qquad
    v_{\alpha,j_\alpha+\frac12},\qquad j_\alpha=0,\ldots,N_{v_\alpha}.
\]
The cell centers are
\[
    x_{\alpha,i_\alpha}
    =
    \frac12\left(x_{\alpha,i_\alpha+\frac12}
    +x_{\alpha,i_\alpha-\frac12}\right),
    \qquad
    v_{\alpha,j_\alpha}
    =
    \frac12\left(v_{\alpha,j_\alpha+\frac12}
    +v_{\alpha,j_\alpha-\frac12}\right).
\]
The mesh sizes are denoted by
\[
    \Delta x_\alpha=x_{\alpha,i_\alpha+\frac12}-x_{\alpha,i_\alpha-\frac12},
    \qquad
    \Delta v_\alpha=v_{\alpha,j_\alpha+\frac12}-v_{\alpha,j_\alpha-\frac12}.
\]
Let
\[
    \boldsymbol i=(i_1,\ldots,i_n),\qquad
    \boldsymbol j=(j_1,\ldots,j_n),
\]
and let $\boldsymbol e_\alpha$ be the standard basis vector in the $\alpha$-th direction. We use the
notation
\[
    \mathcal I_x=\prod_{\alpha=1}^n \{1,\ldots,N_{x_\alpha}\},
    \qquad
    \mathcal I_v=\prod_{\alpha=1}^n \{1,\ldots,N_{v_\alpha}\}.
\]
The cell average is defined by
\begin{equation*}
    f_{\boldsymbol i,\boldsymbol j}
    =
    \frac{1}{
    \prod_{\alpha=1}^n \Delta x_\alpha \Delta v_\alpha}
    \int_{C_{\boldsymbol i}^x}
    \int_{C_{\boldsymbol j}^v}
    f(\boldsymbol{x},\boldsymbol{v},t)\,d\boldsymbol{v}\,d\boldsymbol{x} ,
    \label{equ:nD_cell_average}
\end{equation*}
where  
\[
    C_{\boldsymbol i}^x
    =
    \prod_{\alpha=1}^n
    \left[x_{\alpha,i_\alpha-\frac12},x_{\alpha,i_\alpha+\frac12}\right],
    \qquad
    C_{\boldsymbol j}^v
    =
    \prod_{\alpha=1}^n
    \left[v_{\alpha,j_\alpha-\frac12},v_{\alpha,j_\alpha+\frac12}\right].
\]

At a cell interface normal to the $x_\alpha$-direction, we denote the limiting values of $V$
on the two sides of the interface by
\[
    V^{-}_{\boldsymbol i+\frac12\boldsymbol e_\alpha},
    \qquad
    V^{+}_{\boldsymbol i+\frac12\boldsymbol e_\alpha}.
\]
If $V$ is continuous at this face, then
\[
    V^{-}_{\boldsymbol i+\frac12\boldsymbol e_\alpha}
    =
    V^{+}_{\boldsymbol i+\frac12\boldsymbol e_\alpha}.
\]
As in the one- and two-dimensional cases, $V$ is approximated by a piecewise multilinear function on
each cell.

The semi-discrete Hamiltonian-preserving scheme is
\begin{equation}
    \left(f_{\boldsymbol i,\boldsymbol j}\right)_t
    +
    \sum_{\alpha=1}^n
    \frac{v_{\alpha,j_\alpha}}{\Delta x_\alpha}
    \left(
        f^{-}_{\boldsymbol i+\frac12\boldsymbol e_\alpha,\boldsymbol j}
        -
        f^{+}_{\boldsymbol i-\frac12\boldsymbol e_\alpha,\boldsymbol j}
    \right)
    +
    \sum_{\alpha=1}^n
    d^\alpha_{\boldsymbol i}
    \left(
        f_{\boldsymbol i,\boldsymbol j+\frac12\boldsymbol e_\alpha}
        -
        f_{\boldsymbol i,\boldsymbol j-\frac12\boldsymbol e_\alpha}
    \right)
    =0,
\label{equ:nD_semidiscrete}
\end{equation}
where
\begin{equation*}
    d^\alpha_{\boldsymbol i}
    =
    -
    \frac{
    V^{-}_{\boldsymbol i+\frac12\boldsymbol e_\alpha}
    -
    V^{+}_{\boldsymbol i-\frac12\boldsymbol e_\alpha}
    }{
    \Delta x_\alpha \Delta v_\alpha
    } .
    \label{equ:nD_d_alpha}
\end{equation*}
The fluxes
\[
    f_{\boldsymbol i,\boldsymbol j+\frac12\boldsymbol e_\alpha},
    \qquad
    f_{\boldsymbol i,\boldsymbol j-\frac12\boldsymbol e_\alpha}
\]
are defined by the usual upwind discretization in the $v_\alpha$-direction. The split fluxes
\[
    f^{-}_{\boldsymbol i+\frac12\boldsymbol e_\alpha,\boldsymbol j},
    \qquad
    f^{+}_{\boldsymbol i-\frac12\boldsymbol e_\alpha,\boldsymbol j}
\]
are obtained by applying the one-dimensional transmission/reflection rule of the Algorithm \ref{alg:fluxClassical1} in the normal direction
$x_\alpha$, with all tangential indices fixed.

We now describe these split fluxes more explicitly. Define the same step functions \eqref{equ:stepfun}
and the hat function in the $v_\alpha$-direction
\begin{equation}
    h_\alpha(z)=\max\left(1-\frac{|z|}{\Delta v_\alpha},0\right).
    \label{equ:nD_hat_function}
\end{equation}
For a multi-index $\boldsymbol j$, let
\[
    \boldsymbol j^{(\alpha:k)}
    =
    (j_1,\ldots,j_{\alpha-1},k,j_{\alpha+1},\ldots,j_n),
\]
and, since the truncated interval in the $v_\alpha$-direction is symmetric about zero, define the
reflected velocity index
\[
    \bar j_\alpha=N_{v_\alpha}-j_\alpha+1,
    \qquad
    \bar{\boldsymbol j}^{\,\alpha}
    =
    \boldsymbol j^{(\alpha:\bar j_\alpha)} .
\]

At the face $\boldsymbol i+\frac12\boldsymbol e_\alpha$, set
\[
    \Delta V_{\boldsymbol i+\frac12\boldsymbol e_\alpha}
    =
    V^{+}_{\boldsymbol i+\frac12\boldsymbol e_\alpha}
    -
    V^{-}_{\boldsymbol i+\frac12\boldsymbol e_\alpha}.
\]
For $v_{\alpha,j_\alpha}\neq 0$, define
\begin{equation*}
    \Theta_{\boldsymbol i+\frac12\boldsymbol e_\alpha,j_\alpha}
    =
    \begin{cases}
    v_{\alpha,j_\alpha}^2
    +2\Delta V_{\boldsymbol i+\frac12\boldsymbol e_\alpha},
    & v_{\alpha,j_\alpha}>0,\\[2mm]
    v_{\alpha,j_\alpha}^2
    -2\Delta V_{\boldsymbol i+\frac12\boldsymbol e_\alpha},
    & v_{\alpha,j_\alpha}<0.
    \end{cases}
    \label{equ:nD_theta}
\end{equation*}
Then the transmission and reflection coefficients are
\begin{equation}
    a^{T}_{\boldsymbol i+\frac12\boldsymbol e_\alpha,j_\alpha}
    =
    g^T\left(
    \Theta_{\boldsymbol i+\frac12\boldsymbol e_\alpha,j_\alpha}
    \right),
    \qquad
    a^{R}_{\boldsymbol i+\frac12\boldsymbol e_\alpha,j_\alpha}
    =
    g^R\left(
    \Theta_{\boldsymbol i+\frac12\boldsymbol e_\alpha,j_\alpha}
    \right).
    \label{equ:nD_trans_refl_coeff}
\end{equation}
Thus
\[
    a^{T}_{\boldsymbol i+\frac12\boldsymbol e_\alpha,j_\alpha}
    +
    a^{R}_{\boldsymbol i+\frac12\boldsymbol e_\alpha,j_\alpha}
    =1.
\]
When $\Theta_{\boldsymbol i+\frac12\boldsymbol e_\alpha,j_\alpha}>0,$
we define the interpolation coefficient by
\begin{equation}
    \beta^\alpha_{\boldsymbol i+\frac12\boldsymbol e_\alpha,j_\alpha,k}
    =
    h_\alpha\left(
    \operatorname{sgn}(v_{\alpha,j_\alpha})
    \sqrt{
    \Theta_{\boldsymbol i+\frac12\boldsymbol e_\alpha,j_\alpha}
    }
    -v_{\alpha,k}
    \right),
    \qquad k=1,\ldots,N_{v_\alpha}.
    \label{equ:nD_beta}
\end{equation}
When $\Theta_{\boldsymbol i+\frac12\boldsymbol e_\alpha,j_\alpha}\leq 0,$
the transmission coefficient is zero, and the corresponding transmission term does not appear in the
numerical flux.

With these notations, for $v_{\alpha,j_\alpha}>0$,
\begin{equation*}
\begin{aligned}
&\frac{v_{\alpha,j_\alpha}}{\Delta x_\alpha}
    \left(
        f^{-}_{\boldsymbol i+\frac12\boldsymbol e_\alpha,\boldsymbol j}
        -
        f^{+}_{\boldsymbol i-\frac12\boldsymbol e_\alpha,\boldsymbol j}
    \right)
\\
&=
    \frac{v_{\alpha,j_\alpha}}{\Delta x_\alpha}
    \left[
        f_{\boldsymbol i,\boldsymbol j}
        -
        a^{T}_{\boldsymbol i-\frac12\boldsymbol e_\alpha,j_\alpha}
        \sum_{k=1}^{N_{v_\alpha}}
        \beta^\alpha_{\boldsymbol i-\frac12\boldsymbol e_\alpha,j_\alpha,k}
        f_{\boldsymbol i-\boldsymbol e_\alpha,\boldsymbol j^{(\alpha:k)}}
        -
        a^{R}_{\boldsymbol i-\frac12\boldsymbol e_\alpha,j_\alpha}
        f_{\boldsymbol i,\bar{\boldsymbol j}^{\,\alpha}}
    \right].
\end{aligned}
\label{equ:nD_x_flux_positive}
\end{equation*}
For $v_{\alpha,j_\alpha}<0$,
\begin{equation*}
\begin{aligned}
&\frac{v_{\alpha,j_\alpha}}{\Delta x_\alpha}
    \left(
        f^{-}_{\boldsymbol i+\frac12\boldsymbol e_\alpha,\boldsymbol j}
        -
        f^{+}_{\boldsymbol i-\frac12\boldsymbol e_\alpha,\boldsymbol j}
    \right)
\\
&=
    \frac{v_{\alpha,j_\alpha}}{\Delta x_\alpha}
    \left[
        a^{T}_{\boldsymbol i+\frac12\boldsymbol e_\alpha,j_\alpha}
        \sum_{k=1}^{N_{v_\alpha}}
        \beta^\alpha_{\boldsymbol i+\frac12\boldsymbol e_\alpha,j_\alpha,k}
        f_{\boldsymbol i+\boldsymbol e_\alpha,\boldsymbol j^{(\alpha:k)}}
        +
        a^{R}_{\boldsymbol i+\frac12\boldsymbol e_\alpha,j_\alpha}
        f_{\boldsymbol i,\bar{\boldsymbol j}^{\,\alpha}}
        -
        f_{\boldsymbol i,\boldsymbol j}
    \right].
\end{aligned}
\label{equ:nD_x_flux_negative}
\end{equation*}
If $V$ is continuous at the corresponding face, then the above formula reduces to the usual upwind
flux.

\subsection{The ODE form}

We now convert the semi-discrete scheme \eqref{equ:nD_semidiscrete} into the ODE form. Assume that
all grid numbers $N_{x_\alpha}$ and $N_{v_\alpha}$ are even and that the truncated intervals in all
$x_\alpha$- and $v_\alpha$-directions are symmetric about zero. We also assume that discontinuities do
not appear at the outer boundaries $x_{\alpha,\frac12}$ or $x_{\alpha,N_{x_\alpha}+\frac12},\alpha=1,\ldots,n.$

Define
\begin{equation*}
    \boldsymbol f(t)
    =
    \sum_{\boldsymbol i\in\mathcal I_x}
    \sum_{\boldsymbol j\in\mathcal I_v}
    f_{\boldsymbol i,\boldsymbol j}(t)
    \left(
    \bigotimes_{\alpha=1}^n |i_\alpha-1\rangle
    \right)
    \left(
    \bigotimes_{\alpha=1}^n |j_\alpha-1\rangle
    \right).
    \label{equ:nD_vector_f}
\end{equation*}
This definition specifies both the order of the equations and the order of the variables. For brevity,
we write
\[
    |\boldsymbol i,\boldsymbol j\rangle
    =
    \left(
    \bigotimes_{\alpha=1}^n |i_\alpha-1\rangle
    \right)
    \left(
    \bigotimes_{\alpha=1}^n |j_\alpha-1\rangle
    \right).
\]

For each $\alpha=1,\ldots,n$, let $A_\alpha^x$ be the matrix corresponding to the $x_\alpha$-transport
term, where the split numerical fluxes at cell interfaces are defined by the
Hamiltonian-preserving scheme. It is given by
\begin{equation}
\begin{aligned}
A_\alpha^x
=&
\sum_{\substack{\boldsymbol i\in\mathcal I_x,\ \boldsymbol j\in\mathcal I_v\\
v_{\alpha,j_\alpha}>0}}
\frac{v_{\alpha,j_\alpha}}{\Delta x_\alpha}
|\boldsymbol i,\boldsymbol j\rangle
\langle \boldsymbol i,\boldsymbol j|
\\
&-
\sum_{\substack{\boldsymbol i\in\mathcal I_x,\ i_\alpha\geq 2\\
\boldsymbol j\in\mathcal I_v,\ v_{\alpha,j_\alpha}>0}}
\sum_{k=1}^{N_{v_\alpha}}
\frac{v_{\alpha,j_\alpha}}{\Delta x_\alpha}
a^{T}_{\boldsymbol i-\frac12\boldsymbol e_\alpha,j_\alpha}
\beta^\alpha_{\boldsymbol i-\frac12\boldsymbol e_\alpha,j_\alpha,k}
|\boldsymbol i,\boldsymbol j\rangle
\langle \boldsymbol i-\boldsymbol e_\alpha,\boldsymbol j^{(\alpha:k)}|
\\
&-
\sum_{\substack{\boldsymbol i\in\mathcal I_x,\ i_\alpha\geq 2\\
\boldsymbol j\in\mathcal I_v,\ v_{\alpha,j_\alpha}>0}}
\frac{v_{\alpha,j_\alpha}}{\Delta x_\alpha}
a^{R}_{\boldsymbol i-\frac12\boldsymbol e_\alpha,j_\alpha}
|\boldsymbol i,\boldsymbol j\rangle
\langle \boldsymbol i,\bar{\boldsymbol j}^{\,\alpha}|
\\
&+
\sum_{\substack{\boldsymbol i\in\mathcal I_x,\ i_\alpha\leq N_{x_\alpha}-1\\
\boldsymbol j\in\mathcal I_v,\ v_{\alpha,j_\alpha}<0}}
\sum_{k=1}^{N_{v_\alpha}}
\frac{v_{\alpha,j_\alpha}}{\Delta x_\alpha}
a^{T}_{\boldsymbol i+\frac12\boldsymbol e_\alpha,j_\alpha}
\beta^\alpha_{\boldsymbol i+\frac12\boldsymbol e_\alpha,j_\alpha,k}
|\boldsymbol i,\boldsymbol j\rangle
\langle \boldsymbol i+\boldsymbol e_\alpha,\boldsymbol j^{(\alpha:k)}|
\\
&+
\sum_{\substack{\boldsymbol i\in\mathcal I_x,\ i_\alpha\leq N_{x_\alpha}-1\\
\boldsymbol j\in\mathcal I_v,\ v_{\alpha,j_\alpha}<0}}
\frac{v_{\alpha,j_\alpha}}{\Delta x_\alpha}
a^{R}_{\boldsymbol i+\frac12\boldsymbol e_\alpha,j_\alpha}
|\boldsymbol i,\boldsymbol j\rangle
\langle \boldsymbol i,\bar{\boldsymbol j}^{\,\alpha}|
\\
&-
\sum_{\substack{\boldsymbol i\in\mathcal I_x,\ \boldsymbol j\in\mathcal I_v\\
v_{\alpha,j_\alpha}<0}}
\frac{v_{\alpha,j_\alpha}}{\Delta x_\alpha}
|\boldsymbol i,\boldsymbol j\rangle
\langle \boldsymbol i,\boldsymbol j| .
\end{aligned}
\label{equ:nD_Ax_alpha}
\end{equation}
The corresponding boundary vector is
\begin{equation}
\begin{aligned}
b_\alpha^x(t)
=&
-
\sum_{\substack{\boldsymbol i\in\mathcal I_x,\ i_\alpha=1\\
\boldsymbol j\in\mathcal I_v,\ v_{\alpha,j_\alpha}>0}}
\frac{v_{\alpha,j_\alpha}}{\Delta x_\alpha}
f_{\boldsymbol i-\boldsymbol e_\alpha,\boldsymbol j}(t)
|\boldsymbol i,\boldsymbol j\rangle
\\
&+
\sum_{\substack{\boldsymbol i\in\mathcal I_x,\ i_\alpha=N_{x_\alpha}\\
\boldsymbol j\in\mathcal I_v,\ v_{\alpha,j_\alpha}<0}}
\frac{v_{\alpha,j_\alpha}}{\Delta x_\alpha}
f_{\boldsymbol i+\boldsymbol e_\alpha,\boldsymbol j}(t)
|\boldsymbol i,\boldsymbol j\rangle .
\end{aligned}
\label{equ:nD_bx_alpha}
\end{equation}

For the $v_\alpha$-direction term, using the upwind approximation gives
\begin{equation*}
\begin{aligned}
    d^\alpha_{\boldsymbol i}
    \left(
        f^{v_\alpha}_{\boldsymbol i,\boldsymbol j+\frac12\boldsymbol e_\alpha}
        -
        f^{v_\alpha}_{\boldsymbol i,\boldsymbol j-\frac12\boldsymbol e_\alpha}
    \right)
    =
    -
    \frac{|d^\alpha_{\boldsymbol i}|+d^\alpha_{\boldsymbol i}}{2}
    f_{\boldsymbol i,\boldsymbol j-\boldsymbol e_\alpha}
    +
    |d^\alpha_{\boldsymbol i}|
    f_{\boldsymbol i,\boldsymbol j}
    -
    \frac{|d^\alpha_{\boldsymbol i}|-d^\alpha_{\boldsymbol i}}{2}
    f_{\boldsymbol i,\boldsymbol j+\boldsymbol e_\alpha}.
\end{aligned}
\label{equ:nD_v_flux}
\end{equation*}
Thus the corresponding matrix $A_\alpha^v$ is
\begin{equation}
\begin{aligned}
A_\alpha^v
=&
-
\sum_{\substack{\boldsymbol i\in\mathcal I_x,\ \boldsymbol j\in\mathcal I_v\\
j_\alpha\geq 2}}
\frac{|d^\alpha_{\boldsymbol i}|+d^\alpha_{\boldsymbol i}}{2}
|\boldsymbol i,\boldsymbol j\rangle
\langle \boldsymbol i,\boldsymbol j-\boldsymbol e_\alpha|
\\
&+
\sum_{\boldsymbol i\in\mathcal I_x}
\sum_{\boldsymbol j\in\mathcal I_v}
|d^\alpha_{\boldsymbol i}|
|\boldsymbol i,\boldsymbol j\rangle
\langle \boldsymbol i,\boldsymbol j|
\\
&-
\sum_{\substack{\boldsymbol i\in\mathcal I_x,\ \boldsymbol j\in\mathcal I_v\\
j_\alpha\leq N_{v_\alpha}-1}}
\frac{|d^\alpha_{\boldsymbol i}|-d^\alpha_{\boldsymbol i}}{2}
|\boldsymbol i,\boldsymbol j\rangle
\langle \boldsymbol i,\boldsymbol j+\boldsymbol e_\alpha| .
\end{aligned}
\label{equ:nD_Av_alpha}
\end{equation}
The boundary vector in the $v_\alpha$-direction is
\begin{equation}
\begin{aligned}
b_\alpha^v(t)
=&
-
\sum_{\substack{\boldsymbol i\in\mathcal I_x,\ \boldsymbol j\in\mathcal I_v\\
j_\alpha=1}}
\frac{|d^\alpha_{\boldsymbol i}|+d^\alpha_{\boldsymbol i}}{2}
f_{\boldsymbol i,\boldsymbol j-\boldsymbol e_\alpha}(t)
|\boldsymbol i,\boldsymbol j\rangle
\\
&-
\sum_{\substack{\boldsymbol i\in\mathcal I_x,\ \boldsymbol j\in\mathcal I_v\\
j_\alpha=N_{v_\alpha}}}
\frac{|d^\alpha_{\boldsymbol i}|-d^\alpha_{\boldsymbol i}}{2}
f_{\boldsymbol i,\boldsymbol j+\boldsymbol e_\alpha}(t)
|\boldsymbol i,\boldsymbol j\rangle .
\end{aligned}
\label{equ:nD_bv_alpha}
\end{equation}

Combining the $x_\alpha$-transport terms and the $v_\alpha$-transport terms, we obtain
\begin{equation}
    \boldsymbol f'(t)=A\boldsymbol f(t)+b(t),
    \qquad
    A=-\sum_{\alpha=1}^n \left(A_\alpha^x+A_\alpha^v\right),
    \qquad
    b(t)=-\sum_{\alpha=1}^n \left(b_\alpha^x(t)+b_\alpha^v(t)\right).
    \label{equ:nD_ODE}
\end{equation}

We next estimate the sparsity of the matrices. Let
\[
    \mathcal I_\alpha
    =
    \left\{
    \boldsymbol x:
    V \text{ is discontinuous on a face normal to the } x_\alpha\text{-direction}
    \right\}.
\]
Define
\begin{equation}
    Q_\alpha
    :=
    2\left\lceil
    \sqrt{
    \frac54+
    \frac{2}{\Delta v_\alpha}
    \sqrt{
        \left(\frac{\Delta v_\alpha}{2}\right)^2
        +
        2\max_{\boldsymbol x\in\mathcal I_\alpha}
        |V^-(\boldsymbol x)-V^+(\boldsymbol x)|
    }}
    -\frac12
    \right\rceil
    +2 .
    \label{equ:nD_Q_alpha}
\end{equation}
By the same argument as in the one- and two-dimensional cases, the hat function gives at most two
nonzero interpolation entries in each row, and the column sparsity is controlled by the largest jump of the potential. Hence
\begin{equation*}
    s_r(A_\alpha^x)\leq 3,
    \qquad
    s_c(A_\alpha^x)\leq Q_\alpha .
    \label{equ:nD_sparsity_Ax}
\end{equation*}
Moreover, the sum of the upwind matrices in the velocity variables satisfies
\begin{equation*}
    s\left(\sum_{\alpha=1}^n A_\alpha^v\right)\leq 2n+1 .
    \label{equ:nD_sparsity_Av}
\end{equation*}
Consequently, the sparsity of the full semi-discrete matrix is bounded by
\begin{equation}
    s(A+A^\dagger)
    =
    \mathcal O\left(
    \sum_{\alpha=1}^n Q_\alpha+n
    \right).
    \label{equ:nD_sparsity_A}
\end{equation}

\subsection{Schr\"odingerization}

We first rewrite \eqref{equ:nD_ODE} into a homogeneous system \eqref{equ:interface1homogeneous}. Let
    $N_{\rm ph}=\prod_{\alpha=1}^n N_{x_\alpha}N_{v_\alpha}.$
Then $\boldsymbol{0}$ is the $N_{\rm ph}\times N_{\rm ph}$ zero matrix and $\boldsymbol 1$ is the
$N_{\rm ph}$-dimensional vector with all components equal to one.
The whole procedure is the same as \eqref{equ:procedure}.

The following theorem gives the query complexity of the Schr\"odingerization based Hamiltonian-preserving scheme in $n$ dimensions.

\begin{theorem}
In the $n$-dimensional case, given the initial quantum state $|\boldsymbol u(0)\rangle$, assume that
\[
    N_{x_\alpha}=2^{n_{x_\alpha}},
    \qquad
    N_{v_\alpha}=2^{n_{v_\alpha}},
    \qquad
    N_p=2^{n_p},
    \qquad \alpha=1,\ldots,n,
\]
$R-L=\mathcal O(1)$ and that the inhomogeneous term $\boldsymbol b$ is independent of time. Suppose that the errors of the
Hamiltonian-preserving scheme, the spectral method in $p$, and the Hamiltonian simulation are all
$\epsilon$. Then the state $|\boldsymbol u(t)\rangle$ can be simulated up to time $T$ with success probability at
least $1-2\epsilon$ with
\begin{itemize}
    \item \[
                \mathcal O\left(
                \left(\sum_{\alpha=1}^n Q_\alpha+n\right)nT\epsilon^{-2}
                +
                \frac{\log(\epsilon^{-1})}{\log\log(\epsilon^{-1})}
                \right)
            \]
            queries and a factor
            \[
                \mathcal O\left(
                (2n+1)\log(\epsilon^{-1})
                +
                \log(\epsilon^{-1})\operatorname{polylog}(\log(\epsilon^{-1}))
                \right)
            \]
            additional quantum gates, when the solution has discontinuities only at interfaces.
    \item
            \[
                \mathcal O\left(
                \left(\sum_{\alpha=1}^n Q_\alpha+n\right)nT\epsilon^{-3}
                +
                \frac{\log(\epsilon^{-1})}{\log\log(\epsilon^{-1})}
                \right)
            \]
            queries and a factor
            \[
                \mathcal O\left(
                (4n+1)\log(\epsilon^{-1})
                +
                \log(\epsilon^{-1})\operatorname{polylog}(\log(\epsilon^{-1}))
                \right)
            \]
            additional quantum gates, when the solution has discontinuities not only at interfaces.
\end{itemize}
\end{theorem}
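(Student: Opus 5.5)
The proof follows the template of Theorems \ref{thm:1dim_complexity1} and \ref{thm:2dim_complexity1}: apply Lemma \ref{lem:complexity} to the Hamiltonian $H=H_1\otimes D_\mu-H_2\otimes I$ of the Schr\"odingerized homogeneous system. This requires three quantities: the sparsity $s(H)$, the max-norm $\|H\|_{\max}$, and the register size. Matrix elements are stored to $m=\log(\epsilon^{-1})$ bits, exactly as before.

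\textbf{Sparsity.} Since $D_\mu$ is diagonal, $s(H)\le \max\{s(A+A^\dagger),s(A-A^\dagger)\}+s(\mathrm{diag}(\boldsymbol b)/\delta)$. Each $A^x_\alpha$ satisfies $s_r(A_\alpha^x)\le 3$ and $s_c(A_\alpha^x)\le Q_\alpha$. The argument for this is the one in Appendix \ref{app:sparsity}, applied in the normal velocity direction $v_\alpha$ with all tangential indices frozen; freezing them does not increase the count, because the interface rule acts only on the index $j_\alpha$. Hence $s(A^x_\alpha\pm (A^x_\alpha)^\dagger)\le Q_\alpha+2$. The velocity upwind parts satisfy $s(\sum_\alpha A^v_\alpha)\le 2n+1$, since they share a diagonal and each adds two off-diagonals. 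The diagonal entries of all $2n$ pieces coincide, so overlaps are subtracted as in the 2D count. Summing gives $s(H)\le \sum_\alpha (Q_\alpha+2)+2n+1+1=\mathcal O(\sum_\alpha Q_\alpha+n)$, consistent with \eqref{equ:nD_sparsity_A}.

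\textbf{Max-norm.} Each entry of $A$ is a sum of at most $\mathcal O(n)$ contributions. These are $v_{\alpha,j_\alpha}/\Delta x_\alpha=\mathcal O(N_{x_\alpha})$ and $|d^\alpha_{\boldsymbol i}|=\mathcal O(1/\Delta v_\alpha)$, where the latter uses bounded $V_{x_\alpha}$ away from the jumps under the piecewise multilinear approximation. The diagonal accumulates all $2n$ directional terms, so $\|A\|_{\max}=\mathcal O(n\max_\alpha(N_{x_\alpha}+N_{v_\alpha}))$. Multiplying by $\mu_{N_p/2}=\mathcal O(1/\Delta p)=\mathcal O(\epsilon^{-1})$, which holds since $R-L=\mathcal O(1)$ and the initial data $\e^{-|p|}$ is only Lipschitz, gives $\|H\|_{\max}=\mathcal O(n\max_\alpha N/\epsilon)$. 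This $n$ factor is the source of the extra $n$ in the query bound $(\sum Q_\alpha+n)\,n\,T$.

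\textbf{Grid sizes and gate count.} If the solution is discontinuous only at interfaces, the scheme is first order (Remark \ref{rmk:convergencerate}), so $N_{x_\alpha}=N_{v_\alpha}=\mathcal O(\epsilon^{-1})$ and $\|H\|_{\max}=\mathcal O(n\epsilon^{-2})$. Otherwise the scheme is half order, so $N=\mathcal O(\epsilon^{-2})$ and $\|H\|_{\max}=\mathcal O(n\epsilon^{-3})$. Inserting these into Lemma \ref{lem:complexity} gives the stated query complexities. For the gates, the number of qubits is $\sum_\alpha(n_{x_\alpha}+n_{v_\alpha})+n_p+1$, which equals $(2n+1)\log\epsilon^{-1}$ in the first case and $(4n+1)\log\epsilon^{-1}$ in the second, counting $n_p=\log\epsilon^{-1}$ as in the 1D and 2D theorems. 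Adding $m\,\polylog(m)$ and the QFT cost $\mathcal O(n_p\log n_p)$, which is absorbed into the polylog term, gives the stated gate counts.

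\textbf{Main obstacle.} The hard step is rigorously justifying $s_c(A^x_\alpha)\le Q_\alpha$ uniformly in $n$. Several interface faces can feed the same column: the left and right faces of a cell in direction $\alpha$, plus the reflection term. The count must show that these still total at most $Q_\alpha$, which requires the $+2$ in $Q_\alpha$ to absorb both the reflection entry and the opposite-sign block, exactly as in Appendix \ref{app:A_1}. I would reduce to that appendix by noting that $A^x_\alpha$ is, up to a permutation, a direct sum over tangential indices of copies of the 1D matrix $A_1$ (with $\Xi$ replaced by the $v_\alpha$-grid). Row and column sparsity are invariant under such a decomposition.
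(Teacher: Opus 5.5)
Your proposal is correct and follows essentially the same route as the paper: apply Lemma~\ref{lem:complexity} with $s(H)=\mathcal O(\sum_\alpha Q_\alpha+n)$, $\Delta p=\mathcal O(\epsilon)$, $\|H\|_{\max}=\mathcal O\bigl(\sum_\alpha(N_{x_\alpha}+N_{v_\alpha})/\Delta p\bigr)$, and a register size of $2n$ (resp.\ $4n$) times $\log(\epsilon^{-1})$ plus one more $\log(\epsilon^{-1})$ for the $p$-register. Your reduction of $s_c(A^x_\alpha)\le Q_\alpha$ to the one-dimensional appendix is a slightly more explicit version of what the paper asserts ``by the same argument as in the one- and two-dimensional cases.'' You use a block-diagonal decomposition over the tangential indices, where each block is a 1D interface matrix for the potential restricted to that line, so the blocks are not identical copies, and bound each block by the maximal jump over $\mathcal I_\alpha$.
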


\begin{proof}
The proof follows from Lemma \ref{lem:complexity} and from the sparsity estimate above. The matrix elements are
specified to $\log(\epsilon^{-1})$ bits of precision. Since the Hamiltonian after Schr\"odingerization is
    $H=H_1\otimes D_\mu-H_2\otimes I,$
we have
\[
    s( H)
    =
    \mathcal O\left(
    \sum_{\alpha=1}^n Q_\alpha+n
    \right).
\]
In the auxiliary $p$-direction, the initial data are only continuous but not differentiable. Hence, for
a first-order extension in the $p$ variable, we take
    $\Delta p=\mathcal O(\epsilon).$
Moreover,
\[
    \|H\|_{\max}
    \leq
    \mathcal O\left(
    \frac{\|H_1\|_{\max}}{\Delta p}
    +
    \|H_2\|_{\max}
    \right).
\]

\begin{itemize}
\item If the solution has discontinuities only at interfaces, then the Hamiltonian-preserving scheme is
first-order accurate. Thus
\[
    N_{x_\alpha}=N_{v_\alpha}=\mathcal O(\epsilon^{-1}),
    \qquad \alpha=1,\ldots,n.
\]
Therefore
\[
    \|H\|_{\max}
    =
    \mathcal O\left(
    \frac{\sum_{\alpha=1}^n (N_{x_\alpha}+N_{v_\alpha})}{\Delta p}
    \right)
    =
    \mathcal O(n\epsilon^{-2}).
\]
Lemma \ref{lem:complexity} gives the query complexity
\[
    \mathcal O\left(
    \left(\sum_{\alpha=1}^n Q_\alpha+n\right)nT\epsilon^{-2}
    +
    \frac{\log(\epsilon^{-1})}{\log\log(\epsilon^{-1})}
    \right).
\]
The number of qubits in the physical and velocity variables is
\[
    \sum_{\alpha=1}^n
    \left(n_{x_\alpha}+n_{v_\alpha}\right)
    =
    2n\log(\epsilon^{-1}),
\]
and the $p$-register contributes one more logarithmic factor. Hence the additional gate factor is
\[
    \mathcal O\left(
    (2n+1)\log(\epsilon^{-1})
    +
    \log(\epsilon^{-1})\operatorname{polylog}(\log(\epsilon^{-1}))
    \right),
\]
up to the cost of the sparse-access oracle.

\item If the solution has discontinuities not only at interfaces, then the Hamiltonian-preserving scheme is
only half-order accurate. Thus
\[
    N_{x_\alpha}=N_{v_\alpha}=\mathcal O(\epsilon^{-2}),
    \qquad \alpha=1,\ldots,n.
\]
Consequently,
\[
    \|H\|_{\max}=\mathcal O(n\epsilon^{-3}),
\]
and Lemma \ref{lem:complexity} gives
\[
    \mathcal O\left(
    \left(\sum_{\alpha=1}^n Q_\alpha+n\right)nT\epsilon^{-3}
    +
    \frac{\log(\epsilon^{-1})}{\log\log(\epsilon^{-1})}
    \right)
\]
queries. In this case,
\[
    \sum_{\alpha=1}^n
    \left(n_{x_\alpha}+n_{v_\alpha}\right)
    =
    4n\log(\epsilon^{-1}),
\]
and adding the $p$-register gives the stated additional gate factor.
\end{itemize}
\end{proof}

\begin{remark}
The theorem above keeps the same convention as Theorems \ref{thm:1dim_complexity1} and \ref{thm:2dim_complexity1}, where the length of the
truncated $p$-interval is not included in the leading logarithmic qubit count. If one also accounts for
the growth of the admissible $p$-interval, then, under the estimate
\[
    \lambda^+_{\max}(H_1)+\lambda^-_{\max}(H_1)
    =
    \mathcal O\left(
    \sum_{\alpha=1}^n (N_{x_\alpha}+N_{v_\alpha})
    \right),
\]
one may choose
\[
    N_p
    =
    \mathcal O\left(
    \frac{
    \left(\lambda^+_{\max}(H_1)+\lambda^-_{\max}(H_1)\right)T
    +
    \log(\epsilon^{-1})
    }{\epsilon}
    \right).
\]
Therefore the query complexities remain unchanged, while the logarithmic qubit count becomes
\[
    (2n+2)\log(\epsilon^{-1})+\mathcal O(\log(nT))
\]
in the first-order case, and
\[
    (4n+3)\log(\epsilon^{-1})+\mathcal O(\log(nT))
\]
in the half-order case. If the cost of the quantum Fourier transform in the $p$ variable is listed
separately, one should additionally include $\mathcal O(n_p\log n_p)$ gates.
\end{remark}

\begin{remark}
In a classical implementation, using the forward Euler method as an example, the fully discrete scheme
has the form
\[
    \boldsymbol f^{m+1}=(\Delta t A+I)\boldsymbol f^m+\Delta t\,\boldsymbol b,
    \qquad m=0,\ldots,N_t-1.
\]
The hyperbolic CFL condition requires
\[
    \Delta t
    \max_{\boldsymbol i,\boldsymbol j}
    \sum_{\alpha=1}^n
    \left[
    \frac{|v_{\alpha,j_\alpha}|}{\Delta x_\alpha}
    +
    \frac{
    \left|
    V^{-}_{\boldsymbol i+\frac12\boldsymbol e_\alpha}
    -
    V^{+}_{\boldsymbol i-\frac12\boldsymbol e_\alpha}
    \right|
    }{
    \Delta x_\alpha \Delta v_\alpha
    }
    \right]
    \leq 1 .
\]
Thus $\Delta t=\mathcal O\left(\frac{1}{n}
    \min_{\alpha}\{\Delta x_\alpha,\Delta v_\alpha\}\right)$ and $N_t=\mathcal O(T/\Delta t)$ in the first-order
case. Since the row sparsity of $\Delta t A+I$ is $\mathcal O(n)$, the classical complexity is
\[
    \mathcal O\left(
    n\prod_{\alpha=1}^n N_{x_\alpha}N_{v_\alpha}N_t
    \right).
\]
In the first-order case,
\[
    N_{x_\alpha}=N_{v_\alpha}=\mathcal O(\epsilon^{-1}),
    \qquad
    N_t=\mathcal O(nT\epsilon^{-1}),
\]
and therefore the classical complexity is
\[
    \mathcal O(n^2T\epsilon^{-(2n+1)}).
\]
In the half-order case,
\[
    N_{x_\alpha}=N_{v_\alpha}=\mathcal O(\epsilon^{-2}),
    \qquad
    N_t=\mathcal O(nT\epsilon^{-2}),
\]
so the classical complexity is
\[
    \mathcal O(n^2T\epsilon^{-(4n+2)}).
\]
Therefore, in the sparse-access oracle model and up to the cost of implementing the oracles, the
Schr\"odingerization based quantum algorithm gives a {\it polynomial}  advantage in the error $\epsilon$ and
{\it avoids the exponential dependence on the dimension $n$}  suffered by the corresponding classical
grid based Hamiltonian-preserving scheme.
\end{remark}

\section{Recovery of the physical solution state and quantum readout of macroscopic observables}\label{sec:recovery}

This section provides another component that is not simply part of the sparse Hamiltonian simulation step. Since the Schr\"odingerization procedure produces an enlarged quantum state rather
than the physical density vector directly, one must specify how the physical
information is extracted from this state. This can be done either by postselecting
and normalizing the physical state \(\ket{\boldsymbol f(t)}\), or by estimating suitable overlaps directly
on the enlarged state. We therefore analyze the postselected recovery, the associated
normalization and postselection overhead, and the overlap based readout of
macroscopic observables such as density, momentum, and averaged velocity.

\subsection{Postselected recovery of the physical solution state}
\label{subsec:postselected-recovery-state}

We now clarify how the amplitude encoded physical solution state is obtained from the output of the Schr\"odingerization method. This point is important because the one-point recovery formula is a statement about vectors, while a quantum computer produces normalized quantum states.

Let $N_{\rm ph}$ denote the number of phase-space grid points. In the one-dimensional case,
$N_{\rm ph}=N_xN_\xi$, while in the $n$-dimensional case, $N_{\rm ph}=\prod_{\alpha=1}^n N_{x_\alpha}N_{v_\alpha}.$
After the inhomogeneous system \eqref{equ:nD_ODE} is rewritten in homogeneous form \eqref{equ:interface1homogeneous}, we write
\[
    \boldsymbol u(t)=
    \begin{pmatrix}
        \boldsymbol f(t)\\
        \boldsymbol r(t)
    \end{pmatrix}
    =|0\rangle\otimes \boldsymbol f(t)+|1\rangle\otimes \boldsymbol r(t)
    \in \mathbb{C}^{2N_{\rm ph}} .
\]
Here $\boldsymbol f(t)$ is the physical solution vector and $\boldsymbol r(t)$ is the auxiliary component
introduced by the homogeneous embedding.

By the warped phase transformation, in the recovery region $p>0$, the transformed variable satisfies $\boldsymbol v(t,p)=e^{-p}\boldsymbol u(t)$
at the continuous level. In the fully discretized and simulated problem, this identity is affected by the
truncation of the auxiliary $p$-domain, the discretization in the $p$-variable, and the Hamiltonian
simulation error. Let $p^\diamond=p_{k^\diamond}>0$ be a chosen recovery grid point. We write
\[
        \boldsymbol v(t,p^\diamond)
        =e^{-p^\diamond}\boldsymbol u(t)+\boldsymbol e_{\rm rec}(t),
        \qquad
        \|\boldsymbol e_{\rm rec}(t)\|_2
        \le \varepsilon_{\rm rec}\,\|\boldsymbol u(t)\|_2 .
\]
Here $\varepsilon_{\rm rec}$ collects the truncation error in the auxiliary $p$-domain, the discretization
error in the $p$-variable, and the Hamiltonian simulation error, measured relative to the size of the
original solution $\boldsymbol u(t)$. Since the ideal recovery slice has norm
$e^{-p^\diamond}\|\boldsymbol u(t)\|_2$, the effective relative error on the selected slice is
$\eta_{\rm rec}:=e^{p^\diamond}\varepsilon_{\rm rec}.$
In the following estimates we assume $\eta_{\rm rec}<1$.

After the Hamiltonian simulation of the Schr\"odingerized system and the inverse quantum Fourier
transform in the $p$-register, the normalized quantum state corresponding to the discrete vector
$\boldsymbol w(t)$ is
\begin{equation}
     |\boldsymbol w(t)\rangle
 =
 \frac{1}{C_{\boldsymbol w}(t)}
 \sum_{i=0}^{2N_{\rm ph}-1}\sum_{k=0}^{N_p-1}
 v_i(t,p_k)|i\rangle |k\rangle ,
 \qquad
 C_{\boldsymbol w}(t)=\|\boldsymbol w(t)\|_2 .
 \label{equ:enlarged-state}
\end{equation}
For the exact Hamiltonian simulation, and with the unitary normalization of the quantum Fourier
transform, $C_{\boldsymbol w}(t)=C_{\boldsymbol w}(0)$. In an implementation this equality holds up to the
Hamiltonian simulation error already included in $\varepsilon_{\rm rec}$.

The one-point recovery is implemented by postselecting the $p$-register onto
\[
        \Pi_{p^\diamond}=I_{2N_{\rm ph}}\otimes |k^\diamond\rangle\langle k^\diamond|,
        \qquad p_{k^\diamond}=p^\diamond .
\]
The success probability of this step is
\[
        P_{p^\diamond}(t)
        =
        \|\Pi_{p^\diamond}|\boldsymbol w(t)\rangle\|_2^2
        =
        \frac{\|\boldsymbol v(t,p^\diamond)\|_2^2}{C_{\boldsymbol w}(t)^2}.
\]
Conditioned on this outcome and after discarding the $p$-register, the remaining state is
\[
        |U(t)\rangle
        =
        \frac{|\boldsymbol v(t,p^\diamond)\rangle}{\|\boldsymbol v(t,p^\diamond)\|_2}
        =
        \frac{\boldsymbol u(t)}{\|\boldsymbol u(t)\|_2}
        +\mathcal O(\eta_{\rm rec}).
\]
The error above is understood in the vector norm, up to an irrelevant global phase.

We then project onto the physical $\boldsymbol f$-block. Let
\[
        \Pi_{\boldsymbol f}=|0\rangle\langle 0|\otimes I_{N_{\rm ph}} .
\]
Conditioned on the successful recovery of $p=p^\diamond$, the success probability of this second
projection is
\[
        P_{\boldsymbol f}(t\mid p^\diamond)
        =
        \|\Pi_{\boldsymbol f} |U(t)\rangle\|_2^2
        =
        \frac{\|\boldsymbol f(t)\|_2^2}{\|\boldsymbol u(t)\|_2^2}
        +\mathcal O(\eta_{\rm rec}).
\]
Let $C_{\boldsymbol f}(t)=\|\boldsymbol f(t)\|_2.$
If $\frac{\|\boldsymbol u(t)\|_2}{C_{\boldsymbol f}(t)}$ is bounded and $\frac{\|\boldsymbol u(t)\|_2}{C_{\boldsymbol f}(t)}\eta_{\rm rec}<1$, then conditioned on this
postselection and after discarding the fixed block-label qubit, the prepared physical solution state is
\[
        |F(t)\rangle
        =
        \frac{\boldsymbol f(t)}{\|\boldsymbol f(t)\|_2}
        +\mathcal O\!\left(\frac{\|\boldsymbol u(t)\|_2}{C_{\boldsymbol f}(t)}\eta_{\rm rec}\right).
\]
Thus, in the one-dimensional case,
\[
    |F(t)\rangle
    =
    \frac{1}{C_{\boldsymbol f}(t)}
    \sum_{i=1}^{N_x}\sum_{j=1}^{N_\xi}
    f_{ij}(t)
    |i-1\rangle |j-1\rangle
    +
    \mathcal O\!\left(\frac{\|\boldsymbol u(t)\|_2}{C_{\boldsymbol f}(t)}\eta_{\rm rec}\right).
    \label{eq:amplitude-encoded-physical-f}
\]
The total success probability of preparing the physical solution state from the extended state is
\[
\begin{aligned}
        P_{p^\diamond,\boldsymbol f}(t)
        & =
        \|\left(\Pi_{\boldsymbol f}\otimes |k^\diamond\rangle\langle k^\diamond|\right)
        |\boldsymbol w(t)\rangle\|_2^2 \\
        & =
        \frac{\|\Pi_{\boldsymbol f}\boldsymbol v(t,p^\diamond)\|_2^2}{C_{\boldsymbol w}(t)^2} \\
        & =
        \frac{e^{-2p^\diamond}C_{\boldsymbol f}(t)^2}{C_{\boldsymbol w}(t)^2}
        \left(1+\mathcal O\!\left(\frac{\|\boldsymbol u(t)\|_2}{C_{\boldsymbol f}(t)}\eta_{\rm rec}\right)\right).
\end{aligned}
\]
Consequently, if $C_{\boldsymbol w}(0)$ is known from state preparation, then
\begin{equation}\label{equ:cf}
        C_{\boldsymbol f}(t)
        =
        e^{p^\diamond}C_{\boldsymbol w}(0)\sqrt{P_{p^\diamond,\boldsymbol f}(t)}
        \left(1+\mathcal O\!\left(\frac{\|\boldsymbol u(t)\|_2}{C_{\boldsymbol f}(t)}\eta_{\rm rec}\right)\right).    
\end{equation}
This normalization factor is needed for unnormalized moments, such as the density. It cancels in ratios
such as the averaged velocity.

\begin{remark}\label{rmk:postselected-AA}
The preparation of $|F(t)\rangle$ described above is a postselected state preparation. It is not obtained deterministically by the one-point recovery formula alone. The one-point recovery corresponds to projecting the $p$-register onto the recovery slice $p=p^\diamond$, and the physical solution is obtained by an additional projection onto the $\boldsymbol f$-block of the homogeneous embedding. A naive implementation based on repeated postselection incurs an overhead of order \(P_{p^\diamond,\boldsymbol f}(t)^{-1}\). In principle, if coherent access to the full state preparation circuit and its inverse is available, amplitude amplification can reduce this dependence to order \(P_{p^\diamond,\boldsymbol f}(t)^{-1/2}\). However, this requires repeated coherent applications of the Hamiltonian simulation circuit for the Schr\"odingerized system and its inverse, and may be costly in practice when the simulation circuit is long.
\end{remark}

\subsection{Readout of density, momentum, and averaged velocity}
\label{subsec:quantum-observable-readout}

After the Schr\"odingerization and the postselected recovery, suppose that the physical solution component has been prepared as the normalized amplitude encoded state
\begin{equation}
    |F(t)\rangle
    =
    \frac{1}{C_{\boldsymbol f}(t)}
    \sum_{i=1}^{N_x}\sum_{j=1}^{N_\xi}
    f_{ij}(t)\,
    |i-1\rangle |j-1\rangle,
    \qquad
    C_{\boldsymbol f}(t)
    =
    \left(\sum_{i=1}^{N_x}\sum_{j=1}^{N_\xi}|f_{ij}(t)|^2\right)^{1/2}.
    \label{equ:amplitude-encoded-f}
\end{equation}
If the state obtained from the recovery procedure has an error $\varepsilon_F=\mathcal O(\frac{\|\boldsymbol u(t)\|_2}{C_{\boldsymbol f}(t)}\eta_{\rm rec})$, then all overlap estimates below acquire an additional additive error of order $\varepsilon_F$. To simplify notation, the formulas below are written for the ideal state in \eqref{equ:amplitude-encoded-f}.

Here $f_{ij}(t)$ denotes the cell average of the numerical solution. We first describe the one-dimensional case. The density and momentum at the spatial cell $x_i$ are approximated by
\begin{equation*}
    \rho_i(t)
    =
    \Delta \xi \sum_{j=1}^{N_\xi} f_{ij}(t),
    \qquad
    m_i(t)
    =
    \Delta \xi \sum_{j=1}^{N_\xi} \xi_j f_{ij}(t),
    \label{equ:discrete-density-momentum}
\end{equation*}
and the averaged velocity is
\begin{equation*}
    u_i(t)=\frac{m_i(t)}{\rho_i(t)},
    \qquad \rho_i(t)\neq 0.
    \label{equ:discrete-averaged-velocity}
\end{equation*}

Define the two normalized moment states
\begin{equation}
    |\chi_i^{(0)}\rangle
    =
    |i-1\rangle
    \otimes
    \frac{1}{\sqrt{N_\xi}}
    \sum_{j=1}^{N_\xi}|j-1\rangle,
    \label{equ:zeroth-moment-state}
\end{equation}
and
\begin{equation}
    |\chi_i^{(1)}\rangle
    =
    |i-1\rangle
    \otimes
    \frac{1}{\sqrt{\Xi_2}}
    \sum_{j=1}^{N_\xi}\xi_j |j-1\rangle,
    \qquad
    \Xi_2=\sum_{j=1}^{N_\xi}\xi_j^2 .
    \label{equ:first-moment-state}
\end{equation}
When the recovered solution is real-valued, up to numerical errors, the density and momentum can be written as linear functionals of the solution vector, which can be estimated through overlaps with the amplitude encoded state:
\begin{equation}
    \rho_i(t)
    =
    C_{\boldsymbol f}(t)\,\Delta \xi\,\sqrt{N_\xi}\,
    \operatorname{Re}\langle \chi_i^{(0)}|F(t)\rangle,
    \label{equ:density-overlap}
\end{equation}
and
\begin{equation}
    m_i(t)
    =
    C_{\boldsymbol f}(t)\,\Delta \xi\,\sqrt{\Xi_2}\,
    \operatorname{Re}\langle \chi_i^{(1)}|F(t)\rangle .
    \label{equ:momentum-overlap}
\end{equation}
The real part is taken here because the physical solution $f_{ij}(t)$ should be real and Schr\"odingerization or numerical errors may introduce a very small imaginary part, so the $\operatorname{Re}$ is used to extract the physical solution.
Therefore the averaged velocity can be obtained from the ratio
\begin{equation}
    u_i(t)
    =
    \frac{\sqrt{\Xi_2}\,
    \operatorname{Re}\langle \chi_i^{(1)}|F(t)\rangle}
    {\sqrt{N_\xi}\,
    \operatorname{Re}\langle \chi_i^{(0)}|F(t)\rangle},
    \qquad \rho_i(t)\neq 0.
    \label{equ:velocity-overlap}
\end{equation}
The normalization factor $C_{\boldsymbol f}(t)$ is needed for the unnormalized value of the density, but it cancels in the averaged velocity. This factor may be tracked from the normalization constants and success probabilities in the state-preparation and recovery procedures as in \ref{equ:cf}. 

The overlaps in \eqref{equ:density-overlap}--\eqref{equ:momentum-overlap} can be estimated by standard Hadamard tests. More explicitly, for $\ell=0,1$, prepare
\begin{equation}
    |\Psi_i^{(\ell)}(t)\rangle
    =
    \frac{1}{\sqrt{2}}
    \left(
    |0\rangle|\chi_i^{(\ell)}\rangle
    +
    |1\rangle|F(t)\rangle
    \right).
    \label{equ:observable-test-state}
\end{equation}
Given $X\otimes I$ and $|\Psi_i^{(\ell)}(t)\rangle$, one can use (real) Hadamard test to estimate the real part of
\begin{equation}
    \bra{\Psi_i^{(\ell)}(t)}X\otimes I\ket{\Psi_i^{(\ell)}(t)}
    =
    \operatorname{Re}\langle \chi_i^{(\ell)}|F(t)\rangle.
    \label{equ:pauli-observable-overlap}
\end{equation}
Thus the macroscopic observables can be estimated without reconstructing the full phase-space vector $\boldsymbol f(t)$.
Moreover, amplitude estimation can also be applied to estimate the normalized overlaps. For the state \eqref{equ:observable-test-state} applying a Hadamard gate to the ancilla qubit and measuring the probability of the ancilla qubit to be in state $\ket{0}$ give $$p_0^{(\ell)}=\Pr(\text{ ancilla qubit }=0)=\frac{1+\mathrm{Re}\langle\chi_i^{(\ell)}|F(t)\rangle}2.$$
Hence $$\mathrm{Re}\langle\chi_i^{(\ell)}|F(t)\rangle=2p_0^{(\ell)}-1.$$
If amplitude estimation is used, each overlap can be estimated to additive precision $\varepsilon_{\rm obs}$ using $O\!\left(\varepsilon_{\rm obs}^{-1}\right)$ uses of the preparation unitaries for $|F(t)\rangle$ and $|\chi_i^{(\ell)}\rangle$, and their inverses.
Direct sampling of the Hadamard test instead gives the usual $O(\varepsilon_{\rm obs}^{-2})$ sampling cost.
\begin{remark}
    For the density, when the recovered distribution $f_{ij}(t)$ is known to be nonnegative, one may also estimate 
    \(|\langle \chi_i^{(0)}|F(t)\rangle|^2\) by applying the inverse preparation unitary $U_\chi$ of 
    \(|\chi_i^{(0)}\rangle\) on $|F(t)\rangle$ and measuring the all-zero outcome. Then the probability of getting $\ket{0\cdots 0}$ is $p_i=|\langle0|U_\chi^\dagger|F(t)\rangle|^2=|\langle\chi_i^{(0)}|F(t)\rangle|^2$. In this special case the overlap $\operatorname{Re}\langle \chi_i^{(0)}|F(t)\rangle$ is nonnegative and can be recovered by taking the square root. For momentum, however, the overlap may have either sign, so an interference based overlap estimation method, such as a Hadamard test, is needed to obtain its real part.
\end{remark}
\begin{remark}
    Equivalently, the macroscopic observables can be read directly from the enlarged state \eqref{equ:enlarged-state} before explicitly preparing the postselected physical state. Define
    \[
    |\zeta_i^{(\ell)}\rangle
    =
    |0\rangle\otimes |\chi_i^{(\ell)} \rangle\otimes|k^\diamond\rangle,
    \qquad \ell=0,1 .
    \]
    Then, in the recovery region,
    \[
    \rho_i(t)
    =
    e^{p^\diamond}C_{\boldsymbol w}(t)\Delta\xi\sqrt{N_\xi}
    \operatorname{Re}\langle \zeta_i^{(0)}|\boldsymbol w(t)\rangle,
    \]
    and
    \[
    m_i(t)
    =
    e^{p^\diamond}C_{\boldsymbol w}(t)\Delta\xi\sqrt{\Xi_2}
    \operatorname{Re}\langle \zeta_i^{(1)}|\boldsymbol w(t)\rangle .
    \]
    This formulation avoids explicitly preparing the normalized physical state
    \(|F(t)\rangle\). However, the relevant overlaps are smaller by the factor
    \(\sqrt{P_{p^\diamond,\boldsymbol f}(t)}\):
    $$\langle\zeta_i^{(\ell)}|\boldsymbol  w(t)\rangle
    =\frac{e^{-p^\diamond}C_{\boldsymbol f}(t)}{C_{\boldsymbol{w}}(t)}\langle\chi_i^{(\ell)}|F(t)\rangle
    \approx\sqrt{P_{p^\diamond,\boldsymbol f}(t)}\langle\chi_i^{(\ell)}|F(t)\rangle,$$
    so the postselection overhead reappears as a more
    stringent overlap precision requirement. For example, achieving the same physical accuracy requires the enlarged state overlap to be estimated with an accuracy smaller by a factor
    \(\sqrt{P_{p^\diamond,\boldsymbol f}(t)}\). Hence direct sampling incurs an overhead
    of order \(1/P_{p^\diamond,\boldsymbol f}(t)\), while amplitude estimation incurs an
    overhead of order \(1/\sqrt{P_{p^\diamond,\boldsymbol f}(t)}\).
    Consequently, in view of the amplitude amplification discussion in Remark \ref{rmk:postselected-AA}, the two formulations have comparable dependence on
    \(P_{p^\diamond,\boldsymbol f}(t)\) in their readout costs.
\end{remark}

It is important to distinguish the precision of the normalized overlaps from the precision of the physical observables. An additive overlap error $\varepsilon_{\rm obs}$ gives an additive error in the physical observable:
\[
    |\delta \rho_i|
    \lesssim
    C_{\boldsymbol f}(t)\Delta\xi\sqrt{N_\xi}\,
    \bigl(\varepsilon_{\rm obs}+\varepsilon_F\bigr),
    \qquad
    |\delta m_i|
    \lesssim
    C_{\boldsymbol f}(t)\Delta\xi\sqrt{\Xi_2}\,
    \bigl(\varepsilon_{\rm obs}+\varepsilon_F\bigr).
\]
Therefore, for a fixed prescribed absolute accuracy in $\rho_i$ or $m_i$, the required overlap precision must be chosen after accounting for the prefactors above. Similarly, the ratio in \eqref{equ:velocity-overlap} is stable only when the denominator, equivalently $\rho_i(t)$, is bounded away from zero. For a fixed number of spatial locations and for a prescribed overlap error, the sampling part of the readout does not scale with the phase-space grid size; however, a fixed error in the physical observable may introduce additional dependence through these normalization factors. If one wants to output all values $\{\rho_i(t),u_i(t)\}_{i=1}^{N_x}$ as classical data, then an $\Omega(N_x)$ cost is unavoidable because of the output size.

For the $n$-dimensional problem, let
\[
    N_v=\prod_{\alpha=1}^n N_{v_\alpha},
    \qquad
    \Delta v=\prod_{\alpha=1}^n \Delta v_\alpha,
\]
and write the recovered solution state as
\begin{equation}
    |F(t)\rangle
    =
    \frac{1}{C_{\boldsymbol f}(t)}
    \sum_{\boldsymbol i\in \mathcal I_x}\sum_{\boldsymbol j\in \mathcal I_v}
    f_{\boldsymbol i,\boldsymbol j}(t)|\boldsymbol i,\boldsymbol j\rangle,
    \qquad
    C_{\boldsymbol f}(t)=
    \left(\sum_{\boldsymbol i\in \mathcal I_x}\sum_{\boldsymbol j\in \mathcal I_v}|f_{\boldsymbol i,\boldsymbol j}(t)|^2\right)^{1/2} .
    \label{equ:nd-amplitude-encoded-f}
\end{equation}
For a fixed spatial multi-index $\boldsymbol i$, the density and the $\alpha$-th momentum component are approximated by
\[
    \rho_{\boldsymbol i}(t)= \Delta v\sum_{\boldsymbol j\in \mathcal I_v} f_{\boldsymbol i,\boldsymbol j}(t),
    \qquad
    m_{\alpha,\boldsymbol i}(t)= \Delta v\sum_{\boldsymbol j\in \mathcal I_v} v_{\alpha,j_\alpha}f_{\boldsymbol i,\boldsymbol j}(t).
\]
Define
\begin{equation}
    |\chi_{\boldsymbol i}^{(0)}\rangle
    =
    |\boldsymbol i\rangle
    \otimes
    \frac{1}{\sqrt{N_v}}
    \sum_{\boldsymbol j\in \mathcal I_v}|\boldsymbol j\rangle ,
    \label{equ:nd-density-state}
\end{equation}
and, for the $\alpha$-th component of the averaged velocity,
\begin{equation}
    |\chi_{\boldsymbol i,\alpha}^{(1)}\rangle
    =
    |\boldsymbol i\rangle
    \otimes
    \frac{1}{\sqrt{\Xi_{2,\alpha}}}
    \sum_{\boldsymbol j\in \mathcal I_v} v_{\alpha,j_\alpha}|\boldsymbol j\rangle,
    \qquad
    \Xi_{2,\alpha}
    =
    \sum_{\boldsymbol j\in \mathcal I_v} v_{\alpha,j_\alpha}^2
    =
    \left(\prod_{\beta\neq \alpha}N_{v_\beta}\right)
    \sum_{j_\alpha=1}^{N_{v_\alpha}}v_{\alpha,j_\alpha}^2 .
    \label{equ:nd-velocity-state}
\end{equation}
Then
\begin{equation}
    \rho_{\boldsymbol i}(t)
    =
    C_{\boldsymbol f}(t)\,\Delta v\,\sqrt{N_v}\,
    \operatorname{Re}\langle \chi_{\boldsymbol i}^{(0)}|F(t)\rangle,
    \label{equ:nd-density-overlap}
\end{equation}
and
\begin{equation}
    m_{\alpha,\boldsymbol i}(t)
    =
    C_{\boldsymbol f}(t)\,\Delta v\,\sqrt{\Xi_{2,\alpha}}\,
    \operatorname{Re}\langle \chi_{\boldsymbol i,\alpha}^{(1)}|F(t)\rangle.
    \label{equ:nd-momentum-overlap}
\end{equation}
The $\alpha$-th component of the averaged velocity is therefore
\begin{equation}
    u_{\alpha,\boldsymbol i}(t)
    =
    \frac{m_{\alpha,\boldsymbol i}(t)}{\rho_{\boldsymbol i}(t)}
    =
    \frac{\sqrt{\Xi_{2,\alpha}}\,
    \operatorname{Re}\langle \chi_{\boldsymbol i,\alpha}^{(1)}|F(t)\rangle}
    {\sqrt{N_v}\,
    \operatorname{Re}\langle \chi_{\boldsymbol i}^{(0)}|F(t)\rangle},
    \qquad \rho_{\boldsymbol i}(t)\neq 0.
    \label{equ:nd-averaged-velocity-overlap}
\end{equation}

The same distinction between overlap error and physical observable error applies in the $n$-dimensional case. If the overlaps in \eqref{equ:nd-density-overlap} and \eqref{equ:nd-momentum-overlap} are estimated with additive error $\varepsilon_{\rm obs}$, and the recovered state has additive error $\varepsilon_F$, then
\[
|\delta \rho_{\boldsymbol i}|
\lesssim
C_{\boldsymbol f}(t)\Delta v\sqrt{N_v}
(\varepsilon_{\rm obs}+\varepsilon_F),
\qquad
|\delta m_{\alpha,\boldsymbol i}|
\lesssim
C_{\boldsymbol f}(t)\Delta v\sqrt{\Xi_{2,\alpha}}
(\varepsilon_{\rm obs}+\varepsilon_F).
\]
Hence a prescribed absolute accuracy for the physical density or momentum component requires choosing the overlap precision after accounting for the dimension-dependent prefactors $\Delta v\sqrt{N_v}$ and $\Delta v\sqrt{\Xi_{2,\alpha}}$. The averaged velocity component \eqref{equ:nd-averaged-velocity-overlap} is stable only when $\rho_{\boldsymbol i}(t)$ is bounded away from zero. For a fixed number of spatial multi-indices and velocity components, the sampling cost of each overlap estimate does not scale with the total phase-space grid size, apart from state preparation, normalization, and postselection costs. If all values $\{\rho_{\boldsymbol i}(t),u_{\alpha,\boldsymbol i}(t)\}_{\boldsymbol i,\alpha}$ are required as classical output, then an output-size cost of at least $\Omega(n\prod_{\alpha=1}^n N_{x_\alpha})$ is unavoidable. Therefore the quantum advantage is most relevant when one needs the encoded solution state or a limited number of observables, rather than all grid values of the macroscopic observable fields as classical output.

\begin{remark}
A direct computational basis measurement of the \(x\)-register of \(|F(t)\rangle\) gives, in one spatial dimension,
\(
\sum_{j=1}^{N_\xi}\frac{|f_{ij}(t)|^2}{C_{\boldsymbol f}(t)^2},
\)
and analogously
\(
\sum_{\boldsymbol j\in \mathcal I_v}
\frac{|f_{\boldsymbol i,\boldsymbol j}(t)|^2}
{C_{\boldsymbol f}(t)^2}
\)
in multiple dimensions. This marginal probability is not the density
\(
\rho_i(t)=\Delta \xi\sum_j f_{ij}(t).
\)
Therefore, under the amplitude encoding used in this work, the density, momentum, and averaged velocity are naturally treated as linear overlap observables, unless a different encoding, such as a square root density encoding, is used. Such an encoding, however, is not produced by the present Schr\"odingerization procedure.

\end{remark}

\section{Numerical examples}\label{sec:Numerical}
In this section, we present numerical examples to validate the proposed Schr\"odingerization based Hamiltonian-preserving formulation. All numerical tests are carried out on a classical computer. Therefore, the Hamiltonian simulation $e^{-iHt}$ is emulated by standard time discretizations, such as the backward Euler method or the Crank-Nicolson method. These experiments are not intended to demonstrate the asymptotic quantum advantage, which would require fault-tolerant quantum hardware. Instead, they are designed as a classical validation of the proposed formulation.

A direct classical emulation requires storing the auxiliary $p$-dimension together with the full phase-space discretization. In one space dimension, the memory scales as $N_xN_\xi N_p$, while in two spatial dimensions it scales as $N_xN_yN_\xi N_\eta N_p$. Moreover, the admissible interval in the auxiliary $p$-dimension grows when the phase-space grids are refined, since the spectral scale of $H_1$ increases with the mesh resolution. Consequently, for a fixed accuracy in the $p$-discretization, the required number of auxiliary grid points $N_p$ also increases. As a result, full asymptotic refinement studies of the Schrodingerized system are beyond the capability of the classical hardware used here, whose memory is limited to 512 GB. The numerical experiments below therefore focus on structural validation: agreement with the classical Hamiltonian-preserving scheme, correct transmission and reflection behavior at interfaces, and the computation of physically relevant observables such as density and averaged velocity. 

\begin{example}(\cite{jinHamiltonianPreservingSchemesLiouville2005})\label{ex:ex1}
    A 1D problem with exact $L^\infty$-solution. Consider the 1D Liouville equation \eqref{equ:1dimLiouvilleClassical}
    with a discontinuous potential given by
    \begin{equation}
    V(x)=\begin{cases}
        0.2,&x<0,\\
        0,&x>0.
    \end{cases}
    \end{equation}
    The initial data are given by
    \begin{equation}
    f(x,\xi,0)=\begin{cases}
        1,&x\le0,\xi>0,\sqrt{x^2+\xi^2}<1,\\
        1,&x\ge0,\xi<0,\sqrt{x^2+\xi^2}<1,\\
        0,&\mathrm{otherwise}.
    \end{cases}
    \end{equation}
\end{example}
The exact solution for $f$ at $t=1$ is given by
\begin{equation}
    f(x,\xi,1)=\begin{cases}
        1,&x\geq0,\quad\xi<\sqrt{0.4},\quad\xi>x;\\
        1,&x\geq0,\quad\xi<0,\quad x<1,\quad\xi>\frac{x-\sqrt{2-x^2}}{2};\\
        1,&x\leq0,\quad\xi<x,\quad\xi>-\sqrt{0.6},\quad x<(1-\frac{\sqrt{0.6-\xi^{2}}}{\sqrt{\xi^{2}+0.4}})\xi;\\
        1,&x\leq0,\quad\xi>0,\quad x>-1,\quad\xi<\frac{x+\sqrt{2-x^{2}}}{2};\\
        1,&x\geq0,\quad\xi>\sqrt{0.4},\quad\xi>x,\quad\xi<\sqrt{1.4},\quad x>(1-\frac{\sqrt{1.4-\xi^{2}}}{\sqrt{\xi^{2}-0.4}})\xi;\\
        0,&\text{otherwise,}
    \end{cases}
\end{equation}
as shown in the first column of Figure \ref{fig:I2ex1solution}.

The computational domain is chosen as $[x, \xi] \in [-1.5, 1.5]\times[-1.5, 1.5]$. The cell number is set by $N_x=N_\xi=2^7$. In the extended space $p$, since $\lambda_{\mathrm{max}}^{+}(H_1)=1.5962$ and $\lambda_{\mathrm{max}}^{-}(H_1)=126.93$, we take the truncated interval in $p$ as $[-131.93,6.5962]$ with $N_p=2^{14}$ and $\Delta p=0.00845$. We choose the recovery point $p^\diamond=1.8615\ge  \lambda^+_{\max}(H_1)T=1.5962$ and time step as $\Delta t =0.02$. 

Figure \ref{fig:I2ex1solution} shows the exact, classical and quantum numerical solution $f$. 

\begin{figure}[!htb]
    \centering
    \includegraphics[width=1.\linewidth]{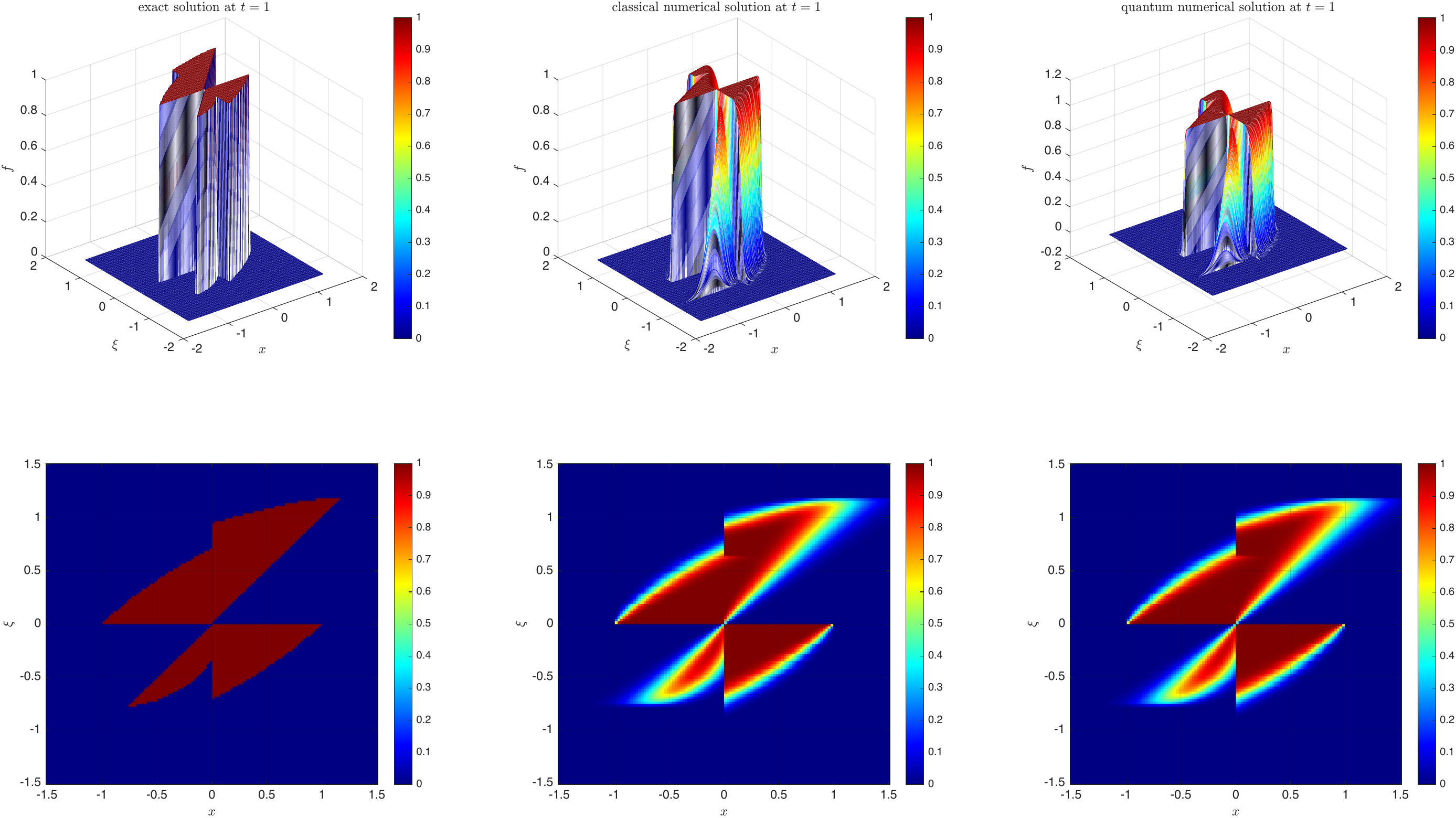}
    \caption{Example \ref{ex:ex1}, the density distribution function $f(x,\xi,t)$ at $t=1$. First row: 3D plot; second row: contour plot. First column: the exact solution; second column: the classical numerical solution; third column: the Schr\"odingerization solution.}
    \label{fig:I2ex1solution}
\end{figure}


\begin{example}(\cite{jinHamiltonianPreservingSchemesLiouville2005})\label{ex:ex2}
    Computing the physical observables of a 1D problem with measure-valued solution.  As mentioned in the Introduction, such problems arise in the computation of the semiclassical limit of the Schr\"odinger equation. Consider the same problem as in example \ref{ex:ex1}, with the initial data
    \begin{equation}
    f(x,\xi,0)=\delta(\xi-w(x)),
    \end{equation}
    where
    \begin{equation}
    w(x)=\begin{cases}0.9,&\quad x\leq-2;\\0.9-\frac{0.9}{4}(x+2)^2,&\quad-2<x\leq0;\\-0.9+\frac{0.9}{4}(x-2)^2;&\quad0<x<2.\\-0.9,&\quad x\geq2,
    \end{cases}
    \end{equation}
    which is plotted as blue dashed lines in the first column of Figure \ref{fig:ex2solution}. Such problems arise in the semiclassical limit of the linear Schr\"odinger equation.
\end{example}
In this example we are interested in the approximation of the moments, such as the density
\[\rho(x,t)=\int f(x,\xi,t)d\xi,\]
and the averaged velocity
\[u(x,t)=\frac{\int f(x,\xi,t)\xi d\xi}{\int f(x,\xi,t)d\xi}.\]

These quantities are computed by decomposition techniques described in \cite{jinHamiltonianPreservingSchemesLiouville2005}. We first solve the level set function $\psi$ and modified density function $\phi$ which satisfy the Liouville equation \eqref{equ:1dimLiouvilleClassical} with initial data $\xi-w(x)$ and $1$, respectively. Then the desired physical observables $\rho$ and $u$ are computed from the numerical singular integrals 
\begin{gather}
    \rho(x,t)=\int f(x,\xi,t)\mathrm{d}\xi=\int\phi(x,\xi,t)\delta(\psi)\mathrm{d}\xi,\\
    u(x,t)=\frac{1}{\rho(x,t)}\int f(x,\xi,t)\xi\mathrm{d}\xi=\int\phi(x,\xi,t)\xi\delta(\psi)\mathrm{d}\xi/\rho(x,t),
\end{gather}
which are computed by a discrete delta function 
\begin{equation}\label{equ:discretedeltacos}
    \delta_{\omega}(x)=\begin{cases}
        \frac{1}{2\omega}(1+\cos({\frac{|\pi x|}{\omega}})),&|{\frac{x}{\omega}}|\leq1;\\
        0,&|{\frac{x}{\omega}}|>1.
    \end{cases}
\end{equation}
Thus one only involves approximating the delta-function at the output time, which gives much higher numerical resolution than evolving the delta-function throughout time evolution.  The $\omega$ in \eqref{equ:discretedeltacos} is half of the support size of the discrete delta function. In our computation we choose \[\omega(\boldsymbol x,\boldsymbol v,t)=\max(|\psi_{\boldsymbol v}|,1)h,\] where $|\psi_{\boldsymbol v}|$ denotes the Jacobian determinant of $\Psi=(\psi_j)$ with respect to $\boldsymbol{v}$: \[|\partial\Psi/\partial(v_{1},\cdots,v_{d})|,\] and is approximated by the central differencing. 

The exact velocity profile and the corresponding density at $t = 1.8$ are given in the Appendix of \cite{jinHamiltonianPreservingSchemesLiouville2005}. Figure \ref{fig:ex2solution} shows the exact multivalued velocity.

The computational domain is chosen as $[x, \xi] \in [-2, 2]\times[-1.6, 1.6]$. The cell number is set by $N_x=N_\xi=2^7$. In the extended space $p$, since $\lambda_{\mathrm{max}}^{+}(H_1)=20.588$ and $\lambda_{\mathrm{max}}^{-}(H_1)=120.32$, we take the truncated interval in $p$ as $[-221.58,42.059]$ with $N_p=2^{14}$ and $\Delta p=0.01609$. We choose the recovery point $p^\diamond=6.7554\le \lambda^+_{\max}(H_1)T=37.059$ and time step as $\Delta t =0.03$. 
Although this value is below the conservative sufficient bound $\lambda^+_{\max}(H_1)T$, it lies in a numerically stable recovery region, as discussed in Remark \ref{rem:recovery-point}.

Figure \ref{fig:ex2solution} shows the exact, classical and quantum numerical solution $f$. Figure \ref{fig:ex2rho} shows the classical and quantum numerical density $\rho$ and averaged velocity $u$ along with the exact solutions in the physical space.

\begin{figure}[!htb]
    \centering
    \includegraphics[width=1.\linewidth]{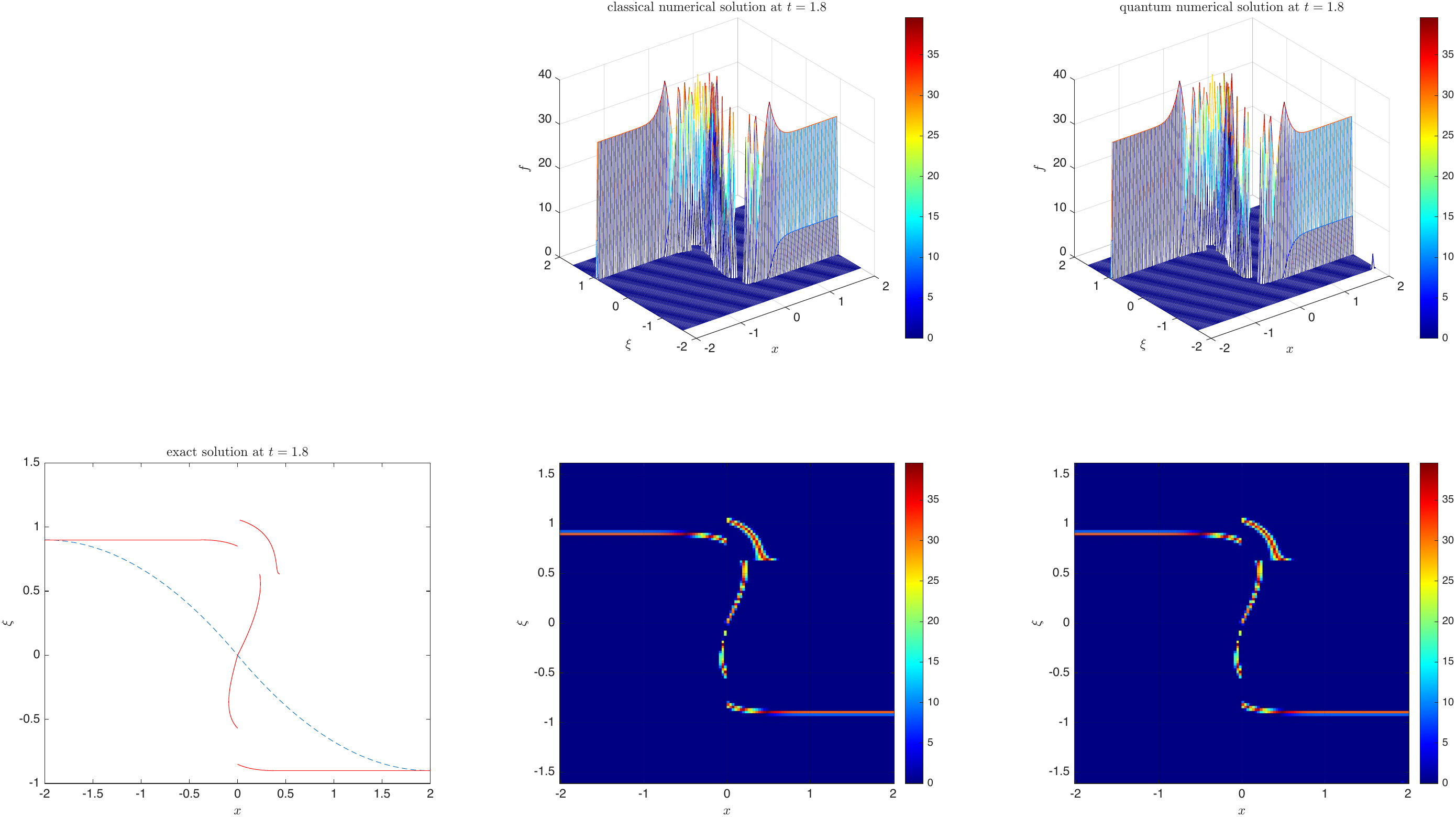}
    \caption{Example \ref{ex:ex2}, the density distribution function $f(x,\xi,t)$ at $t=1.8$. First row: 3D plot; second row: contour plot. First column: the exact solution; second column: the classical numerical solution; third column: the Schr\"odingerization solution.}
    \label{fig:ex2solution}
\end{figure}

\begin{figure}[!htb]
    \centering
    \includegraphics[width=1.\linewidth]{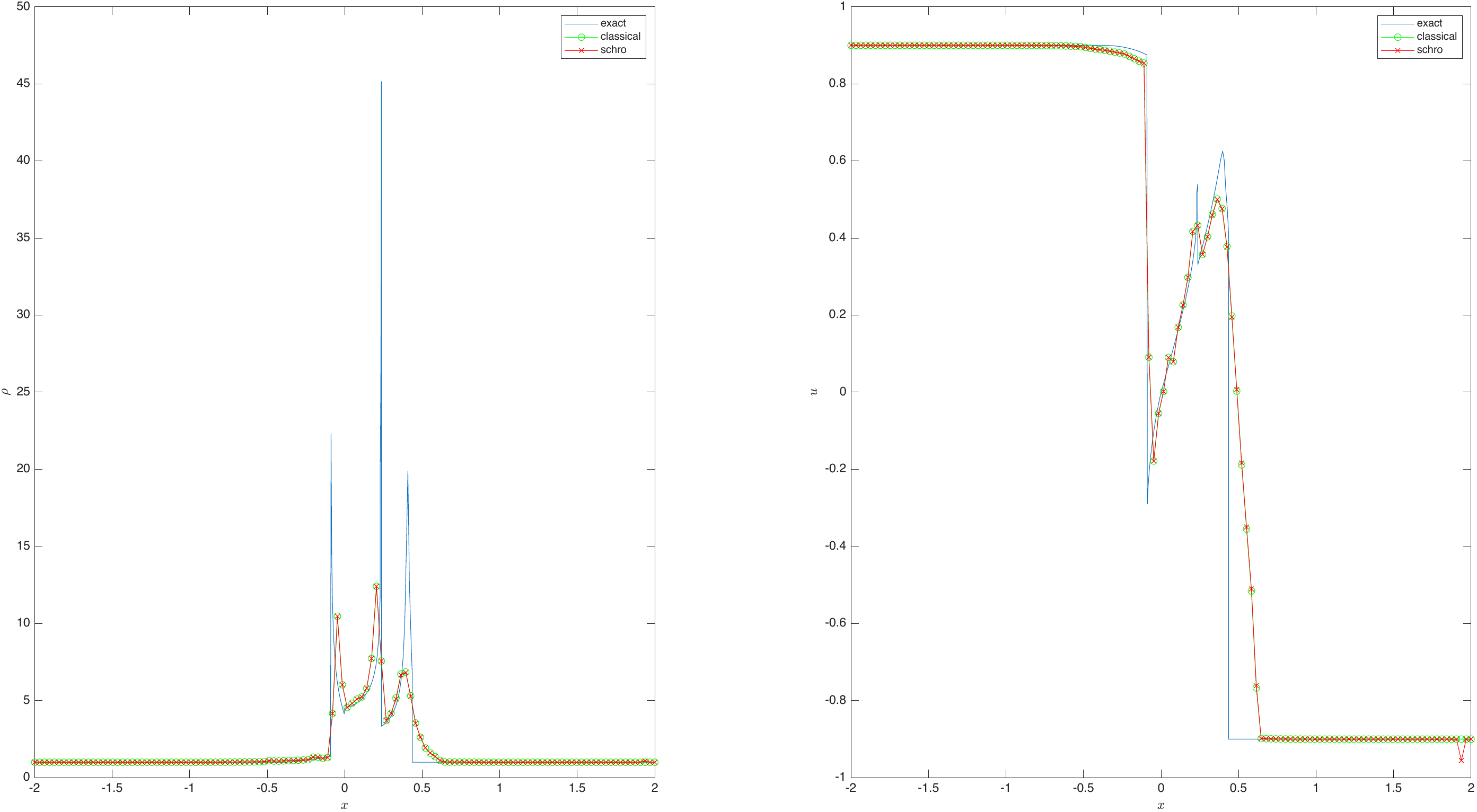}
    \caption{Example \ref{ex:ex2}, the density $\rho$ and averaged velocity $u$ at $t = 1.8$. Solid blue line: the exact solution; green ``$o$": the classical numerical solution; red solid line with ``$\times$": the Schr\"odingerization solution. Left: the density $\rho$; Right: the averaged velocity $u$.}
    \label{fig:ex2rho}
\end{figure}

\begin{example}(\cite{jinHamiltonianPreservingSchemesLiouville2005})\label{ex:ex3}
     Computing the physical observables of a 2D problem with measure-valued solution.  Consider the 2D Liouville equation \eqref{equ:2dimLiouvilleClassical}, with a discontinuous potential given by 
     \begin{equation*}
        V(x,y)=\begin{cases}0.1,&\quad x>0,y>0,\\0,&\quad\mathrm{else},\end{cases}
    \end{equation*}
    and the delta-function initial data
    \begin{equation*}
    f(x,y,\xi,\eta,0)=\rho(x,y,0)\delta(\xi-p(x,y))\delta(\eta-q(x,y)),
    \end{equation*}
    with
    \begin{equation*}
        \begin{aligned}
            \rho(x,y,0)&=\begin{cases}0,&\quad x>-0.1,y>-0.1;\\1,&\quad\mathrm{else},\end{cases}\\
            p(x,y)=q(x,y)&=\begin{cases}0.4,&\quad x>0,y>0,\\0.6,&\quad\mathrm{else}.\end{cases}
        \end{aligned}
    \end{equation*} 
\end{example}

In this example we are interested in the computation of numerical density which is the zeroth moment of this delta-type solution
\begin{equation*}
    \rho(x,y,t)=\iint f(x,y,\xi,\eta,t)d\xi d\eta.
\end{equation*}

The exact density at $t = 0.4$ is
\begin{equation*}
    \rho(x,y,0.4)=\begin{cases}1,&\quad x<0\text{ or }y<0;\\1.5,&\quad0\leq x\leq14/150,y\geq\frac{3x}{2};\\1.5,&\quad0\leq y\leq14/150,y\leq\frac{2x}{3};\\0,&\quad\mathrm{otherwise},\end{cases}
\end{equation*}
as shown in the upper left part in Figure \ref{fig:ex3rho}.

The computational domain is chosen to be $[x,y,\xi,\eta] \in [-0.2,0.2] \times [-0.2,0.2] \times [0.3,0.9] \times [0.3,0.9]$. The cell number is set by $N_x=N_y=N_\xi=N_\eta=2^3$. In the extended space $p$, since $\lambda_{\mathrm{max}}^{+}(H_1)=0.4763$ and $\lambda_{\mathrm{max}}^{-}(H_1)=61.882$, we take the truncated interval in $p$ as $[-29.753,5.1905]$ with $N_p=2^{12}$ and $\Delta p=0.008531$. However, we choose the recovery point $p^\diamond=0.1913>  \lambda^+_{\max}(H_1)T=0.1905$ and time step as $\Delta t =0.008$. 

Figure \ref{fig:ex3rho} shows the exact, classical and quantum numerical solution density $\rho$.

\begin{figure}[!htb]
    \centering
    \includegraphics[width=1.\linewidth]{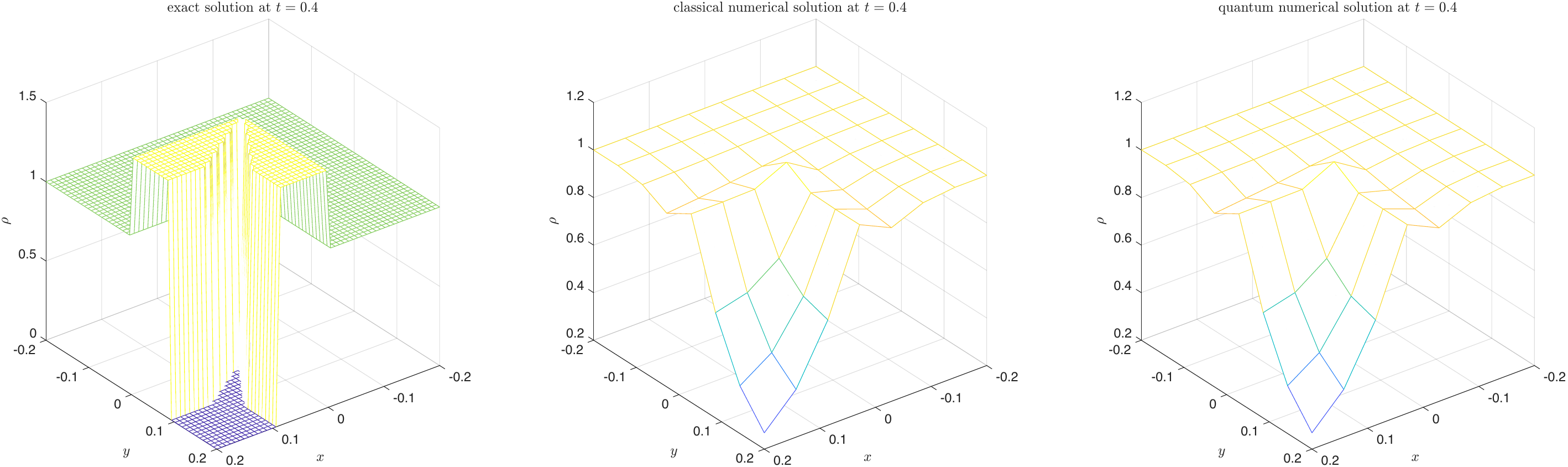}
    \caption{Example \ref{ex:ex3}, the density $\rho$ at $t = 0.4$. First column: the exact solution; second column: the classical numerical solution; third column: the Schr\"odingerization solution.}
    \label{fig:ex3rho}
\end{figure}

In all of the above examples, we showed that the solutions approximated by the Schr\"odingerization method match quite well with those by classical method, demonstrating the consistency of the Schr\"odingerization based formulation. 

This two-dimensional example illustrates the main limitation of classical emulation. With $N_x=N_y=N_\xi=N_\eta=2^3$ and $N_p=2^{12}$, the computation already approaches the available memory limit. If the grid size is doubled in each phase-space dimension, the auxiliary $p$-grid must also be refined, which increases the memory requirement by approximately a factor of $64$. Therefore, the purpose of this test is not to provide a high resolution classical simulation, but to verify that the Schr\"odingerization based formulation reproduces the qualitative behavior of the classical Hamiltonian-preserving scheme and the expected density profile.

\section{Conclusion} \label{sec:conclusion}
In this paper, we developed a Schr\"odingerization based quantum simulation framework for Liouville equations in classical mechanics with discontinuous potential. The central difficulty is the interface dynamics at potential barriers: the normal velocity must satisfy the classical Hamiltonian conservation law, leading to an energy-threshold transmission/reflection rule. By combining Hamiltonian-preserving schemes with Schr\"odingerization, we converted the semi-discrete Liouville equations with interface conditions into sparse Hamiltonian systems.

For one and two dimensional problems, we constructed explicit matrix forms for the Hamiltonian-preserving numerical fluxes and showed how the transmission and reflection mechanisms can be encoded through step functions and compactly supported hat functions. We then extended the construction to \(n\) spatial dimensions under the assumption that the discontinuous interfaces are aligned with the Cartesian grids. In the sparse-access oracle model, the resulting Hamiltonians have sparsity controlled by the interface sparsity parameters \(Q_\alpha\) and the dimension \(n\). The query complexity estimates show {\it polynomial} dependence on the target accuracy and {\it avoid the exponential dependence on the phase-space dimension}  that appears in the corresponding classical grid based Hamiltonian-preserving schemes, up to the cost of implementing the sparse-access oracles.  

We also clarified how the physical solution is recovered from the extended quantum state. We further described how macroscopic observables, including density, momentum, and averaged velocity, can be obtained from overlap estimation without full reconstruction of the phase-space solution. The numerical experiments, carried out through classical simulation of the Schr\"odingerized dynamics, confirm that the proposed formulation reproduces the transmission/reflection behavior of the Hamiltonian-preserving scheme and yields consistent physical observables.

Several issues remain for future work. The present multidimensional construction assumes that the discontinuous interfaces are aligned with the computational grids. Extending the method to general interface geometries and curved barriers is an important next step. 
It would also be useful to analyze the full end-to-end error, including postselection, normalization, and observable readout, to develop lower cost readout strategies for macroscopic observables, and to study the implementation cost of the sparse-access oracles on fault-tolerant quantum architectures. Together with the geometrical optics case studied in \cite{jinQuantumSimulationLiouville2026}, the present work suggests that Schr\"odingerization combined with Hamiltonian-preserving interface discretizations provides a general framework for quantum simulation of kinetic equations with interface conditions.  
This perspective also motivates extensions to related kinetic models with interface conditions, such as radiative transfer equations.

\section*{CRediT authorship contribution statement}
{\bf Shi Jin}: Writing – review \& editing, Writing - original draft, Supervision, Resources, Methodology, Funding acquisition, Conceptualization; 
{\bf Shuyi Zhang}: Writing – review \& editing, Writing - original draft, Visualization, Validation, Software, Methodology, Investigation, Formal analysis, Conceptualization.

\section*{Data availability}
Data will be made available on request.

\section*{Declaration of competing interest}
The authors declare that they have no known competing financial interests or personal relationships that could have appeared to influence the work reported in this paper.

\section*{Acknowledgements}
The research results of this article are sponsored by the Kunshan Municipal Government research funding.  Shuyi Zhang thanks Yue Yu at Xiangtan University for helpful and stimulating discussions.


\appendix
\section{Estimation of sparsity in column}\label{app:sparsity}

\subsection{Estimation of $s_c(B_i^l)$ and $s_c(B_i^u)$}\label{app:B_i}

It suffices to prove the case when $V^-_{i+\frac{1}{2}}>V^+_{i+\frac{1}{2}}$, the case when $V^-_{i+\frac{1}{2}}<V^+_{i+\frac{1}{2}}$ follows similarly. 

When $0<\xi_j\le\sqrt{2\left(V_{i+\frac{1}{2}}^--V_{i+\frac{1}{2}}^+\right)}$, particles with velocity $\xi_j$ can only arise from reflected particles with velocity $-\xi_j$. Thus $\beta_{i+\frac12,j,k}=0$ for all $j,k$ satisfying $0<\xi_j\le\sqrt{2\left(V_{i+\frac{1}{2}}^--V_{i+\frac{1}{2}}^+\right)}$. When $\xi_j>\sqrt{2\left(V_{i+\frac{1}{2}}^--V_{i+\frac{1}{2}}^+\right)}$, particles with velocity $\xi_j$ can only originate from transmitted particles with velocity $\xi_j^-:=\sqrt{\xi_j^2+2\left(V_{i+\frac12}^+-V_{i+\frac12}^-\right)}$ whose density distribution is interpolated by the values at two adjacent grid points. If $\xi_j^-$ lies in $[\xi_{k},\xi_{k+1})$, then $\xi_{j+1}^-$ must lie in $[\xi_{k+l},\xi_{k+1+l})$ for $l\ge1$ because 
\[
    \xi_{j+1}^--\xi_{j}^-=
    \frac{\xi_{j+1}+\xi_j}{\sqrt{\xi_{j+1}^2+2\left(V_{i+\frac12}^+-V_{i+\frac12}^-\right)}+\sqrt{\xi_j^2+2\left(V_{i+\frac12}^+-V_{i+\frac12}^-\right)}}\Delta\xi>\Delta\xi.
\]
In other words, at most one $\xi_j^-$ lies in every interval $[\xi_{k},\xi_{k+1})$. 
This proves $s_c(B_i^l)\le2$.

Then we estimate $s_c(B_i^u)$. Define $f(\xi)=-\sqrt{\xi^2+2\left(V_{i+\frac12}^--V_{i+\frac12}^+\right)}$ for $\xi<0$, then $f'(\xi)=-\frac{\xi}{\sqrt{\xi^2+2\left(V_{i+\frac12}^--V_{i+\frac12}^+\right)}}\in(0,1)$ which is monotonically decreasing in $(-\infty,0)$. Thus we can infer that the points $\xi_j^+=-\sqrt{\xi_j^2+2\left(V_{i+\frac12}^--V_{i+\frac12}^+\right)}$ become denser as $\xi_j$ approaches $0$. Suppose there is a $\xi_{j'}^+$ such that $\xi_{\frac{N_\xi}{2}}^+-\xi_{j'}^+<\Delta \xi$, then
\begin{align*}
    \sqrt{(\frac{N_\xi+1}{2}-j')^2\Delta\xi^2+2\left(V_{i+\frac12}^--V_{i+\frac12}^+\right)}&<\Delta\xi+\sqrt{(\frac{\Delta\xi}{2})^2+2\left(V_{i+\frac12}^--V_{i+\frac12}^+\right)}\\
    (\frac{N_\xi+1}{2}-j')^2&<\frac{5}{4}+\frac{2}{\Delta\xi}\sqrt{(\frac{\Delta\xi}{2})^2+2\left(V_{i+\frac12}^--V_{i+\frac12}^+\right)}\\
    \frac{N_\xi}{2}-j'&<\sqrt{\frac{5}{4}+\frac{2}{\Delta\xi}\sqrt{(\frac{\Delta\xi}{2})^2+2\left(V_{i+\frac12}^--V_{i+\frac12}^+\right)}}-\frac{1}{2}.
\end{align*}
In other words, at most $\frac{N_\xi}{2}-j'+1$ $\xi_j^+$'s lies in some interval $[\xi_{k},\xi_{k+1})$, where $j=j^{\prime},\ldots,\frac{N_\xi}{2}$. 
This proves $s_c(B_i^u)\le2\left\lceil\sqrt{\frac54+\frac{2}{\Delta\xi}\sqrt{\left(\frac{\Delta\xi}{2}\right)^2+2\left(V_{i+\frac{1}{2}}^--V_{i+\frac{1}{2}}^+\right)}}-\frac12\right\rceil$.

\subsection{Estimation of $s_c(A_1)$}\label{app:A_1}

\begin{multline*}
    s_c(A_1)\le \max_{x\in \mathcal{I}}\{s_c(B_i^u+B_{i+1}^l)+s_c(C_i)\}=\max_{x\in \mathcal{I}}\{\max\{s_c(B_i^u),s_c(B_{i+1}^l)\}+s_c(C_i)\}\le\\
    2\left\lceil\sqrt{\frac54+\frac{2}{\Delta\xi}\sqrt{\left(\frac{\Delta\xi}{2}\right)^2+2\max_{x\in\mathcal{I}}\left\{|V^-(x)-V^+(x)|\right\}}}-\frac12\right\rceil+2,
\end{multline*}
where the equality holds because \(B_i^u\) and \(B_{i+1}^l\) act on different halves of the velocity grid, so their nonzero column supports do not overlap.

\bibliographystyle{plain}
\bibliography{ref}

\end{document}